\documentclass[letterpaper]{article} 
\usepackage[preprint]{aaai2027}  
\usepackage[hyphens]{url}  
\usepackage{graphicx} 
\usepackage{natbib}  
\usepackage{caption} 
\usepackage{algorithm}
\usepackage{algorithmic}

\usepackage{newfloat}
\usepackage{listings}
\DeclareCaptionStyle{ruled}{labelfont=normalfont,labelsep=colon,strut=off} 
\floatstyle{ruled}
\newfloat{listing}{tb}{lst}{}
\floatname{listing}{Listing}

\usepackage{booktabs}

\usepackage{amssymb}
\usepackage{amsmath}
\usepackage{subcaption}
\usepackage{tabularx}

\makeatletter
\renewcommand{\subsubsection}{%
  \@startsection{subsubsection}{3}{\z@}%
  {-6pt plus -2pt minus -1pt}%
  {-1em}%
  {\normalsize\bf}%
}

\usepackage{soul}

\newcommand{\stkout}[1]{%
  \ifmmode
    \text{\st{\ensuremath{#1}}}%
  \else
    \st{#1}%
  \fi
}

\usepackage[colorinlistoftodos,textsize=scriptsize]{todonotes} 

\usepackage{amssymb}
\usepackage{amsmath}
\usepackage{amsthm}
\usepackage{booktabs}

\newcommand{\newstarredtheorem}[3][]{%
  \if\relax\detokenize{#1}\relax
    \newtheorem{#2}{#3}%
  \else
    \newtheorem{#2}[#1]{#3}%
  \fi
  \newtheorem*{#2star}{#3}%
  \expandafter\let\csname #2*\expandafter\endcsname
    \csname #2star\endcsname
  \expandafter\let\csname end#2*\expandafter\endcsname
    \csname end#2star\endcsname
}

\newstarredtheorem{theorem}{Theorem}
\newstarredtheorem[theorem]{definition}{Definition}
\newstarredtheorem[theorem]{obs}{Observation}
\newstarredtheorem[theorem]{lemma}{Lemma}
\newstarredtheorem[theorem]{proposition}{Proposition}
\newstarredtheorem[theorem]{corollary}{Corollary}
\newstarredtheorem[theorem]{observation}{Observation}

\newtheorem*{thm*}{Theorem}
\newtheorem*{prop*}{Proposition}

\title{The Limits of Manipulation within Personalized Incentive Policies}
\author{
    Andrea Araldo\textsuperscript{\rm 1},
    Alessa Mayer\textsuperscript{\rm 2},
    Ravi Seshadri\textsuperscript{\rm 3},
    Amir Brudner\textsuperscript{\rm 4,\rm 5}
}
\affiliations{
    \textsuperscript{\rm 1}SAMOVAR, Télécom SudParis, Institut Polytechnique de Paris, 91120 Palaiseau, France\\
    \textsuperscript{\rm 2}Institut Polytechnique de Paris\\
    \textsuperscript{\rm 3}Technical University of Denmark\\
    \textsuperscript{\rm 4}Massachusetts Institute of Technology\\
    \textsuperscript{\rm 5}The Hebrew University Business School
}

\begin{document}

\maketitle

\begin{abstract}
Demand management policies aim to shift individual behavior toward sustainable choices (e.g., goods, transport modes). Personalized incentive policies tailor subsidies to agents’ needs and constraints, but personalization also exposes the regulator to agents' strategic manipulative behavior. The extent to which agents can profitably manipulate such policies remains poorly understood, which limits their applicability.

This paper establishes analytically the conditions under which manipulation benefits the agent. We formalize a Stackelberg game in which an agent may simulate a different profile to extract higher incentives from a regulator. Previous work typically assumes that the regulator would \textit{instantaneously} classify agents into the simulated user profile. In contrast, we explicitly model the regulator the process of learning the agent profile, based on observed past behavior. This makes manipulation harder, since it forces the agent to consistently behave according to a simulated profile, thus renouncing to follow their true preferences. Manipulative agents thus incur a \emph{Cost of Manipulation}, which has been typically been neglected. 
By accounting for it, we get a counterintuitive finding: although classification uncertainty reduce personalization accuracy, it may benefit the regulator by deterring manipulation. We establish an upper bound on the agent’s gain from manipulation, which becomes stricter as profiles become harder to distinguish for the regulator. Simulations on the London Passenger Mode Choice dataset illustrate this deterrence effect. The framework can therefore be used to stress-test the robustness of personalized incentive policies before deployment.
\end{abstract}


\section{Introduction}
Behavioral change remains a major challenge in achieving sustainability goals, especially as technological solutions alone are insufficient. Demand Management Policies (DMP) favor changes in the agents’ behavior toward sustainable choices \cite{crow2021we}. Potential greenhouse gas reduction achievable via DMP is estimated to exceed 40\% by 2050, compared to baselines \cite{lee2023ipcc,creutzig2022demand}. We will consider DMP in which a regulator, e.g., a government or any type of authority, applies a signal. 

Negative signals (e.g., prices, taxes, bans) generally lack public acceptability (to the point of violent protests \cite{dabla2023public}) and are often inadvertently inequitable \cite{Blasco}. For instance, increasing taxes on fuel particularly impacts low-income rural households, who are dependent on their cars to perform their everyday activities, and who spend, for transport, already up to 30\% of their income \cite{lebras2019voiture}. On the other hand, positive signals (e.g., incentives, subsidies) are expensive for the regulator and may not be sufficiently strong to induce relevant behavioral shifts \cite{gneezy2000pay}.


Personalized DMP (PDMP) can help overcome the above issues \cite{javaudin2023personalised,ascarza2022eliminating,XIE2024realtime} by adapting signals to the condition and the preferences of each agent. Personalized signals can target agents for whom a behavioral shift is less costly, thus mitigating inequitable effects. Thus, PDMP have the potential to be more effective and acceptable than non-personalized DMP.

However, PDMP are vulnerable to manipulative agents, who may make dishonest choices in order to conceal their true preferences and manipulate the regulator’s signals in their favor. For instance, it has been shown that some individuals maliciously reduced their labor supply to get a subsidy dedicated to the under-employed \cite{firpo2014evidence}. 

In this work, we focus on this critical vulnerability by modeling the interaction between a regulator and a potentially manipulative agent in a Stackelberg game. The regulator (the leader) estimates agent (follower) profiles based on observed choices, while the agent may adopt a simulated (i.e., fake) utility function to secure higher incentives. We investigate how the regulator's continuous estimation of agent utility impacts manipulation.
Extensive work in artificial intelligence on strategic behavior~(\S{}\ref{sec:related-work}) models agents in ways that are not behaviorally-grounded, and thus not compliant with the way human agents make choices. They are thus not applicable to demand-side mitigation.
Our contribution is the first behaviorally-grounded model of repeated interaction between a regulator and an agent within a demand-side mitigation scenario, consisting of a personalized incentive policy. We model agents' choices, based on Random Utility Theory~(RUT)~\cite{Strzalecki2025,mcfadden1981econometric}. In contrast with previous work, in which the agent simply reports their profile, we explicitly model the regulator's~\emph{preference elicitation}~\cite{chapman2025preference}, with which the regulator learns the agent's profile. 

Previous results showed that the agent could greatly manipulate the regulator. Our analysis shows instead that such findings are over-pessimistic. Indeed, we theoretically show that realistic preference elicitation deters manipulative behavior and, consequently, personalized incentive policies can effectively reduce social cost. Results are validated in Monte Carlo simulations.
%
%
%
%
The proofs that are not in the main paper are in the Supplementary Material.

\section{Related Work}
\label{sec:related-work}

Strategic interaction and manipulation are studied in artificial intelligence, game and learning theory. In Adversarial Interaction Attacks \cite{nguyen2019simultaneous}, a defender allocates resources to defend targets and an attacker chooses the target to attack. The defender seeks to learn the attacker's profile to better allocate defense resources. The attacker can manipulate the defender, by implementing certain sequences of attacks that steer the defender's learning in a way that increases the utility of the attacks. Our case, instead, cannot be simultaneous: first, the regulator must set the incentives for the different available alternatives; then, the agent chooses an alternative. Our setting is thus better described by a Stackelberg game, where the leader (the regulator) takes their action (setting incentives) and \emph{then} the follower (the agent) takes their action (choosing an alternative).

In recent work, when the leader has limited information about the follower, the latter can
manipulate the outcome of the interaction. 
\cite{gan2019manipulating,gan2019imitative,nguyen2019imitative,birmpas2020optimally,dawkins2021limits} show that the follower can implement an optimal manipulative strategy, which gives the follower higher payoff: the leader is thus greatly vulnerable to manipulation, which is simply performed by the follower via strategic reporting. Such previous findings would tend to discourage the application of personalized policies. We however observe that their setting is unrealistic and too pessimistic for the leader. Indeed, manipulation is ``too easy'': ``To manipulate the [leader], the [follower] reports a fake [profile]'', and the leader just considers the reported profile when deciding their action. The previous quote is from \cite{gan2019manipulating}, but this assumption is common to all the aforementioned work. Similarly, framing the setting within Bayesian Persuasion \cite{Kamenica2011,jain:hal-04912547}, an agent could manipulate the regulator by choosing the information to report. 

We consider instead a more realistic setting, in which the regulator (leader) \emph{learns} the agent (follower) profile, by interacting with them and observing their previous choices, performing \emph{preference elicitation} \cite{chapman2025preference}. In the following, we will thus refer to \emph{instantaneous} and \emph{realistic preference elicitation}, to contrast, respectively, the setting of the aforementioned work (where the leader immediately considers the profile reported by the agent when setting incentives) and the more realistic setting of our work (where the leader considers instead the profile they learn by observing the agent's behavior).
Related are screening contracts, with which a regulator could design a menu of options, specifically to distinguish among agent profiles (Einav et al. 2022). Our setting differs because the options (alternatives) are exogenous and agent choices are stochastic and thus do not perfectly reveal their profile.

In \cite{sessa2020learningtoplay} the agent can \emph{adversarially} choose one profile at every timeslot. However, their setting is too pessimistic (and thus inappropriate) for studying personalized policies, for different reasons. First, we assume that agents behave according to Random Utility Theory (RUT) \cite{Strzalecki2025,mcfadden1981econometric}, which is more realistic in the economic settings we are tackling. This assumption constrains the agent's alternatives to be probabilistically chosen, at every timeslot, according to a single profile. The manipulation of the agent thus consists in committing to a ``fake'' profile, instead of adversarially selecting a new profile at every timeslot as in \cite{sessa2020learningtoplay}. Moreover, we assume agents aim to maximize their utility, and not necessarily act adversarially (e.g., aim to maximize the loss of the regulator). Second, \cite{sessa2020learningtoplay} assume the leader adopts probabilistic policies, while in our case incentive policies are constrained to be deterministic. Indeed, it would not be acceptable that a regulator (e.g., a government) gives randomized incentives to individuals. 

\citet{ionescu2023strategic} study strategic behavior in matching algorithms, while we consider demand-side management policies. Moreover, their manipulation is implemented by ``returning agents'', which take the role of our regulator. We instead study user-side manipulation.
A different, but related, problem is the personalized reserved price setting in a repeated auction, considered by \cite{pmlr-v108-nedelec20a}. In their work, however, the set of leaders (corresponding to the buyers) are those who can potentially manipulate the follower (the seller). 
As a consequence, their findings do not hold in our case, in which agents (followers) can manipulate the regulator (leader).

In Performative Prediction (PP) \cite{perdomo2020performative}, predictions produced by a supervised learning model alter the data-generating process itself, thereby changing the distribution on which future predictions are evaluated. However, in PP the action of the leader is to change classification parameters, whereas the action of our leader (the regulator) is to propose incentives. Moreover, different from PP, the goal of our leader is not to predict, but to propose incentives. In PP, the leader aims to reduce classification loss, whereas our leader aims to reduce generalized social cost. In PP, the agent influences the classifier parameters, which is not our case. Also, PP is not applicable to demand-side policies: in \cite[\S{}5.1]{perdomo2020performative}, the regulator is allowed to change the classifier over time. However, for transparency constraints, a regulator, such as a public authority, must fix and announce the rules in advance; they would otherwise be unacceptable and perceived as black-box or arbitrary by citizens. This is why we fix the agent's profile classification is performed via the Maximum Likelihood Estimator, according to real preference elicitation cases \cite{chapman2025preference}.

Our setting has some similarity with the Vickrey–Clarke–Groves (VCG) mechanism. The latter computes prices so as to maximize social welfare, while ensuring truthful behavior of the agents. One could be tempted to use a VCG-like mechanism in our setting. This is however not possible, because in VCG, the social welfare is the sum of the agent's utilities, while in our case the social cost that the regulator aims to minimize is not such a sum, but is rather related to sustainability goals (e.g., emissions).

Despite the rich literature in the analysis of manipulative behavior, the settings considered are either too simplistic - the agent can manipulate by simply misreporting - or too pessimistic - the manipulative behavior is not appropriately constrained to be economically and behaviorally grounded. This paper aims to fill this gap.


\vspace{-0.2cm}
\section{Model}
\label{sec:model}
\vspace{-0.2cm}
Notation is entirely defined in the paper. For ease of reference, it is summarized in Table~\ref{tab:notation} of the the Suppl. Material. We formulate a repeated Stackelberg game (as \cite{birmpas2020optimally}) with a leader (regulator) and a follower (agent). At each iteration, the agent chooses an alternative~$i\in\mathcal I$, which generates externality~$c_i\ge 0$. Let alternatives $\mathcal I=\{1,\dots,I\}$ be sorted in the increasing order of externalities, i.e., from the most to the less sustainable. To encourage the choice of the most sustainable alternative, at the beginning of each iteration the regulator proposes incentive~$y\ge 0$.

To give our model some context, we briefly mention some real-world applications. For instance, consumers (agents) can choose to buy a short-lasting, unethically produced clothing item, which would correspond to alternative~$2$, with~$c_2$ representing the environmental and social externality caused by producing this good. On the other hand, alternative~$1$ would be to buy this clothing item second-hand, which has low externality~$c_1$. Incentive~$y$ would be a subsidy to encourage buying second-hand. Another application consists in the choice of a traveler (agent) of public transport (alternative~$1$) or taking the car (alternative~$2$) for a trip. Here, externality~$c_2$ would represent the car-related externality, e.g. pollution, induced congestion. The government could provide an incentive~$y$ to encourage taking public transport, which produces lesser externality~$c_1$. A last example is the decision of a firm (agent) to prefer electricity generated from renewable energies (alternative~$1$) or traditional, fossil‑fuel‑based electricity (alternative~$2$), which imposes a higher externality~$c_2$ in the form of CO$_2$ emissions. The regulator can set a subsidy~$y$ to encourage the firm to use renewable sources.

\subsection{Random Utility Modeling}
\vspace{-0.1cm}
We model agent behavior according to random utility theory~(RUT) \cite{Strzalecki2025,mcfadden1981econometric}. The agent has a \emph{utility function}~$v(i,y)$, associating a certain \emph{systematic utility} (or, briefly, utility \cite[\S{}3.2]{Cascetta2009}) to alternative~$i$ when the regulator proposes incentive~$y$.
When the agent chooses alternative~$i\in\mathcal{I}$, they get random utility: systematic utility~$v(i,y)$ plus a random variable~$\epsilon_i$, called \emph{random term}. Following RUT, we assume~$\epsilon_i$ are identically and independently drawn from a Gumbel distribution with $0$-mean and scale parameter~$\theta\ge 0$.
The random term accounts for 
some inherent randomness in agent's behavior \cite[p100]{Cascetta2009}, which makes the agent's choice not easily predictable for the regulator. This randomness may be due to the inability of the agent to estimate the exact alternatives' utility or to an additional utility disturbance, which is completely observable only after the choice is made. 
The agent adopts in general a mixed \emph{strategy}, i.e., they make choices according to a probability distribution over the available alternatives. We focus here on a class of mixed strategies, well-known in RUT, namely the \emph{logit model} \cite[\S{}3]{mckelvey1995quantal}: if the agent has a systematic utility function~$v(\cdot,\cdot)$ and the regulator proposes incentive~$y$, they will choose alternative~$i$ with probability:
\begin{equation}
\label{eq:logit}
\mathbb P_i^v(y)
= \frac{
    e^{v(i, y)/\theta} 
   }{
   \sum_{i'\in\mathcal I} e^{v(i', y)/\theta}
   }.
\end{equation}
The probabilistic nature of the choice accounts for the bounded rationality of the agent \cite{wheeler2018bounded}, who does not always make the theoretically optimal decisions, due to limited information. In our case, the agent does not know in advance the actual outcome~$v(i,y)+\epsilon_i$ of their choice, which is revealed only \emph{after} the choice is made. This is because the actual realization of~$\epsilon_i$ is not known before alternative~$i$ is chosen.
Note that, the larger the scale parameter, the greater the randomness in the agent's choice.

The behavior of the agent is determined by their \emph{profile}, capturing inherent agent's preferences or needs impacting their choices. A profile is mapped to a certain systematic utility function. The set of profiles is~$\mathcal K=\{0,1,..,K-1\}$ and their respective utility functions are $u_k(\cdot,\cdot)$.

We allow any finite number $K$ of agent profiles. According to standard RUT (\cite[\S{}3.2.5a]{Cascetta2009}), each alternative~$i$ is described with a (possibly) agent-dependent attribute vector $\mathbf x_{i,k}$. Considering the case in which alternatives represent transport modes, the components of~$\mathbf x_{i,k}$ may represent monetary cost, walking travel time, in-vehicle time, comfort, etc. If alternatives represent different products, $\mathbf x_{i,k}$ may instead represent characteristics such as the price, the reputation of the brand, the weight, etc. Some components are agent-dependent: e.g., the in-vehicle time to reach a certain destination depends on the starting location, e.g., the residence, which varies from one agent to another. This is why~$\mathbf x_{i,k}$ is indexed by~$k$. Each profile is associated with a coefficient vector~$\boldsymbol\alpha_k$, which captures how individuals with profile~$k$ value the various attributes. For instance, older individuals may assign greater importance to walking time compared to younger individuals. The systematic utility is thus $\boldsymbol\alpha_k \cdot \mathbf x_{i,k}$ in the absence of any incentive policy. 

\begin{lemma}
\label{lem:u-profiles}
Assuming utility is quasi-linear with respect to income\footnote{
This is a standard assumption in econometrics (\cite{javaudin2023personalised}), which means that both $\boldsymbol\alpha_k \cdot \mathbf x_{i,k}$ and the incentive~$y$ are expressed in the same unit as income and that $y$ has an additive effect on utility.
} utility function $u_k(\cdot,\cdot)$ can be written as
\begin{align}
\nonumber
&u_k(i, y) =
    \begin{cases}
        a_k+y  & i=1 \\
        b_{k} & i\neq 1
    \end{cases}
;
\begin{array}{l}
 a_k 
 := \boldsymbol\alpha_k \cdot \mathbf x_{1,k};
 \\
 b_k  
 := \theta \ln 
 \sum\limits_{i>0} e^\frac{\boldsymbol\alpha_k\cdot \mathbf x_{i,k}}{\theta}
\end{array}
\\
\label{eq:pk}
&\text{Also, }
p_k(y):=\mathbb P_1^{u_k}(y) 
=
\sigma\left(\frac{a_k- b_k +y}{\theta}\right)
\\
\nonumber
&\text{where }
\sigma(\cdot) \text{ is the sigmoid}
\end{align}
\end{lemma}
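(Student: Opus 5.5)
Since the incentive $y$ applies only to the sustainable alternative $1$ and quasi-linearity makes it enter additively, the plan is to reduce the multinomial logit over $\mathcal I$ to a binary choice between alternative $1$ and an aggregated ``other'' alternative. The aggregate's systematic utility is the logsum (inclusive value) $b_k$. Before anything else, I note that the definition of $b_k$ writes $\sum_{i>0}$ but must mean $\sum_{i\neq 1}$, i.e., over $i\in\mathcal I\setminus\{1\}$, for the formulas to agree with (\ref{eq:logit}). I will read it that way.

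\textbf{Step 1 (utility specification).} Under quasi-linearity with respect to income, the incentive is expressed in monetary units and adds to the systematic utility of the alternative it targets. Hence the systematic utility of alternative $1$ is $\boldsymbol\alpha_k\cdot\mathbf x_{1,k}+y=a_k+y$, while alternatives $i\neq 1$ keep $\boldsymbol\alpha_k\cdot\mathbf x_{i,k}$.

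\textbf{Step 2 (choice probability of alternative 1).} Substituting into (\ref{eq:logit}), I obtain
$\mathbb P_1^{u_k}(y)=e^{(a_k+y)/\theta}\big/\big(e^{(a_k+y)/\theta}+\sum_{i\neq 1}e^{\boldsymbol\alpha_k\cdot\mathbf x_{i,k}/\theta}\big)$.
By definition of $b_k$, the sum equals $e^{b_k/\theta}$. Dividing numerator and denominator by $e^{(a_k+y)/\theta}$ gives $1/(1+e^{-(a_k-b_k+y)/\theta})=\sigma((a_k-b_k+y)/\theta)$, which is (\ref{eq:pk}).

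\textbf{Step 3 (justifying the single value $b_k$).} It remains to explain why collapsing all $i\neq 1$ into one value $b_k$ is legitimate; this is the only conceptual step. The regulator only observes whether alternative $1$ is chosen and pays only in that case, so the game depends on the agent solely through $p_k(y)$. The collapsed two-alternative utility ($a_k+y$ versus $b_k$) yields the same $\mathbb P_1$ as the full model, which is what ``can be written as'' means here. Moreover, by the max-stability of the Gumbel distribution, $\max_{i\neq1}(\boldsymbol\alpha_k\cdot\mathbf x_{i,k}+\epsilon_i)$ is Gumbel with location shifted to $b_k$ and the same scale $\theta$. So the aggregated alternative is itself a RUT alternative with systematic utility $b_k$ and a Gumbel random term, up to a constant location shift common to all alternatives that does not affect logit probabilities. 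I expect no real obstacle beyond making this interpretation explicit.
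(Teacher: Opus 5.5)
Your proof is correct and follows the paper's route. Quasi-linearity gives $a_k+y$ for alternative~$1$, and the identity $e^{b_k/\theta}=\sum_{i>1}e^{\boldsymbol\alpha_k\cdot\mathbf x_{i,k}/\theta}$ turns the logit probability into the sigmoid; your reading of $\sum_{i>0}$ as $\sum_{i>1}$ matches the appendix version of the statement. The one nuance is how $b_k$ is justified: the paper identifies it as the expected maximum utility over the alternatives $i>1$ (the logsum formula, Strzalecki's Prop.~3.6), whereas your Gumbel max-stability argument gives the stronger distributional fact that the aggregated alternative is itself a logit alternative. In fact no location shift is needed, since $\max_{i>1}(\boldsymbol\alpha_k\cdot\mathbf x_{i,k}+\epsilon_i)-b_k$ has exactly the law of $\epsilon_1$.
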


\noindent This equation summarizes the fact that, depending on their profiles, some agents might have higher disutility than other when choosing alternative $1$, and we compensate them for this loss, adapting incentive~$y$ to their case. This personalized compensation introduces vertical equity \cite[\S{}3.2]{DanielHo2022}, since it distributes incentives to those who need it the most to change their behavior. Quantity~$a_k-b_k$ interestingly captures how much profile~$k$ is \emph{inherently responsible}, i.e., how much they prefer the more sustainable alternative~$i=1$ over the others, independent of the proposed incentive. We call~$a_k-b_k$ the \emph{responsibility intensity} of that profile. With no lack of generality, we order all profiles accordingly:
\begin{flalign}
\label{eq:ordering}
a_{k-1}-b_{k-1} \ge a_k-b_k, \text{ for } k=1,..,K-1
\end{flalign}

\begin{lemma}
\label{lem:expectations}
Let us consider an agent following a certain profile $k\in\mathcal K$. Then, the expectation of the \emph{externalities} caused by them and of the \emph{incentive collected} are, respectively:
\begin{align}
    \textit{ext}^k(y)
    &=
    p_k(y) \cdot c_1 +
    \left( 1- p_k(y) \right) \cdot c_{\neg 1};
    \nonumber
    \\
    \textit{inc}^k(y)
    &= p_k(y) \cdot y
    \nonumber
    \\
    \text{with } c_{\neg 1} 
    &:= \sum_{i>1} \frac{e^\frac{u_0(i,y)}{\theta} }{ \sum_{i'>1} e^\frac{u_0(i',y)}{\theta} } c_i 
    > c_1
\end{align}
\end{lemma}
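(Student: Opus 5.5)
The plan is to compute both expectations directly from the definition of expectation over the agent's random choice. We use the choice probabilities of the logit model \eqref{eq:logit} applied to $u_k$, together with the closed form of $u_k$ given in Lemma~\ref{lem:u-profiles}.

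\emph{Incentive.} The incentive $y$ is paid only when alternative~$1$ is chosen, so the collected incentive is the random variable $y\cdot\mathbf 1\{\text{choice}=1\}$. Its expectation is $y\,\mathbb P_1^{u_k}(y)=p_k(y)\,y$ by \eqref{eq:pk}; no further work is needed.

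\emph{Externality.} The expected externality is $\sum_{i\in\mathcal I}\mathbb P_i^{u_k}(y)\,c_i$. We split off $i=1$, which contributes $p_k(y)c_1$, and write the rest as $(1-p_k(y))\sum_{i>1}\frac{\mathbb P_i^{u_k}(y)}{1-p_k(y)}c_i$. By \eqref{eq:logit} the conditional weights are $\frac{e^{u(i,y)/\theta}}{\sum_{i'>1}e^{u(i',y)/\theta}}$, where $u$ is the utility restricted to the non-sustainable alternatives. The incentive enters only alternative~$1$, so these weights do not depend on $y$ through the incentive, and the sum is a convex combination of $\{c_i\}_{i>1}$.

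\emph{Main obstacle.} The delicate point is the reduced form of Lemma~\ref{lem:u-profiles}, which collapses all $i\neq1$ into a single utility $b_k$. The conditional weights must therefore be computed from the underlying attribute utilities $\boldsymbol\alpha_k\cdot\mathbf x_{i,k}$, giving the nested-logit decomposition. This is consistent with $b_k$ being the corresponding log-sum, which guarantees $\sum_{i>1}\mathbb P_i=1-p_k$.

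The statement writes the weights with $u_0$, i.e.\ as if they did not depend on $k$. I would interpret this as the paper's convention that the within-group distribution over the non-sustainable alternatives is taken as profile-independent; equivalently, $c_{\neg1}$ is defined through a reference profile. I would state this explicitly as an assumption rather than derive it, since for general $\mathbf x_{i,k}$ the weights do depend on $k$. Under that convention the stated formula follows.

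\emph{Inequality $c_{\neg1}>c_1$.} Alternatives are sorted so that $c_1\le c_i$, and for the inequality to be strict we need $c_1<c_i$ for $i>1$; I take the ordering to be strict, or at least assume $c_1$ is the unique minimum. Then $c_{\neg1}$, a convex combination of the $c_i$ with $i>1$ and strictly positive logit weights, is strictly greater than $c_1$.
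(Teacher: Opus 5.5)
Your proposal is correct and follows the same route as the paper's proof. You split off alternative $1$ via the law of total expectation, write the remainder as $(1-p_k(y))$ times a convex combination of $\{c_i\}_{i>1}$ whose logit weights are unaffected by the incentive, and obtain the incentive term in the same way. Note that the paper's appendix resolves the $u_0$ issue you flagged by using weights built from $u_k$ and a profile-dependent $c_{\neg1,k}$ rather than by assumption, and it gives no argument for $c_{\neg1}>c_1$, which your convex-combination argument (with strict $c_1<c_i$ for $i>1$) supplies.
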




The generalized social cost when the agent chooses~$i$ consists of social cost~$c_i$ and the cost of incentive~$y_i$ provided by the regulator, the latter term weighted by a parameter~$\beta>0$, which expresses the relative importance the regulator gives to incentive-related expenditures versus externalities and also normalizes the units of measure. The expected social cost is
\begin{align}
\label{eq:objective_function}
\tilde f^k(y) = 
\textit{ext}^k(y)
+ \beta \cdot \textit{inc}^k
\end{align}

Note that computing the expected value above requires knowing the utility~$\hat v(\cdot,\cdot)$ that the agent is adopting. This is information is not available to the regulator, which can only thus only estimate this expectation, based on estimate~$\hat v(\cdot,\cdot)$, inferred via preference elicitation.




\subsection{Interaction between Regulator and Agent}
\label{sec:interaction}
\vspace{-0.1cm}
The interaction between the regulator and the agent is a sequential game. In each timeslot~$t$:
\begin{enumerate}
    \item \label{it:inference}
    The regulator infers profile~$\hat k^t=\psi(\mathbf h^{t-1})$ of the agent, via a deterministic inference rule~$\psi{\cdot}$ applied on previous observations~$\mathbf h^{t-1}= y^1, i^1, \dots, y^{t-1}, i^{t-1}$, where~$y^s,i^s$ are the incentives proposed by the regulator and the alternative selected by the agent at iteration~$s=1,\dots,t-1$. In the numerical results, inference rule~$\psi(\cdot)$ is the maximum likelihood estimator, details in~\eqref{eq:argmax-logL} of the supplemental material.

    \item \label{it:incentive-calculation}
    The regulator takes its \emph{action}. An action is the incentive~$y^t$ that the regulator proposes to the agent at timeslot~$t$ for choosing alternative~$1$. The regulator proposes incentive~$y^t=\phi(\hat k^t)$, which is associated with inferred profile~$\hat k^t$ via deterministic function~$\phi(\cdot)$. We assume that this function returns the incentive so as to minimize the estimated expectation of the generalized social cost~\eqref{eq:objective_function}. 

    \item \label{it:choice}
    The agent chooses alternative~$i^t\in\mathcal I$, according to ~$\mathbb P^v_i(y^t)$ formulated in ~\eqref{eq:logit}.

    \item  \label{it:transfer}
    The regulator transfers incentive~$y^t$ to the agent, if~$i^t=1$, or no incentive if~$i^t>1$.

    \item \label{it:utility}
    The agent gets utility~$v(i^t,y^t)+e_{i^t}$, where~$e_{i^t}$ is a realization of random noise~$\epsilon_{i^t}$.
\end{enumerate}

The regulator may interact with multiple agents. We focus on one regulator-agent interaction, assuming cross-agent effects do not materially alter individual choices. This standard assumption~\cite{javaudin2023personalised} is appropriate because realistic incentive policies, even when producing non-negligible effects, rarely induce shifts massive enough to substantially change the surrounding choice environment, e.g., affecting congestion.
Since in this case aggregate social cost is additive: the system then decomposes into separate regulator-agent interactions, and our analysis applies to each agent.
Regulator budget constraints could be added in step~\ref{it:incentive-calculation}, which we do not do as it would not add conceptual added value. Moreover, the regulator could solve complex optimization problems when calculating~$y^t$, e.g., to maximize some long-term cumulative objective. However, such complex optimizations would not be implementable by real decision makers: for political acceptance and transparency to citizens, policies must be based on simple, understandable rules, to prevent citizens from wrongly perceiving that incentives are calculated by an arbitrary black box. Our personalized incentives are instead easy to understand by the agents, since they are based on the simple concept that an agent with profile~$k$ deserves a certain incentive~$y_k^\star$, whose value can be transparently announced by the regulator a priori. Therefore,
$
y^t = y_{{\hat k}^t}^\star
$.
Prop.~\ref{eq:y_star} will detail the way in which such a value is calculated.


\section{Analysis}
\subsection{Optimal Incentives}

There are unique incentives that minimize the estimated expectation of the generalized social cost~$\tilde f^{\hat v^t}(y)$:

\begin{proposition} 
\label{prop:y_star}
    There is a unique optimal incentive for inferred profile $u_k(\cdot,\cdot)$:
    \begin{equation}
    \label{eq:y_star}
        y^\star_{u_k}=\max\left(\frac{c_{\neg 1}-c_1}{\beta}-\theta \left(1+W(e^{\frac{a_k-b_k}{\theta}+\frac{c_{\neg 1}-c_1}{\theta \beta}-1}) \right), 0\right) \nonumber
    \end{equation}
    where W(.) is the Lambert's W-function. \emph{Moreover,}
\end{proposition}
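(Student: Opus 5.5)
Write $p(y)=\sigma(z(y))$ with $z(y)=(a_k-b_k+y)/\theta$, as in Lemma~\ref{lem:u-profiles}, and set $\Delta:=c_{\neg 1}-c_1>0$. By Lemma~\ref{lem:expectations}, the estimated objective for inferred profile $k$ is
\begin{equation*}
\tilde f^k(y)=c_{\neg 1}-\Delta\, p(y)+\beta\, y\, p(y),\qquad y\ge 0 .
\end{equation*}
Using $\sigma'=\sigma(1-\sigma)$, so that $p'(y)=p(1-p)/\theta$, the derivative is
\begin{equation*}
\tilde f^{k\,\prime}(y)=p(y)\Big[\beta-\tfrac{1-p(y)}{\theta}(\Delta-\beta y)\Big].
\end{equation*}
Since $p>0$, the sign of the derivative is the sign of $g(y):=\beta\theta-(1-p(y))(\Delta-\beta y)$. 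Because $1-p(y)=1/(1+e^{z(y)})$, the condition $g(y)=0$ is equivalent to
\begin{equation*}
\beta\theta\,(1+e^{z})=\Delta-\beta y .
\end{equation*}

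\textbf{Solving the stationarity condition.} Introduce $w:=\frac{\Delta}{\beta\theta}-\frac{y}{\theta}-1$. Then $\frac{y}{\theta}=\frac{\Delta}{\beta\theta}-1-w$, and the equation becomes
\begin{equation*}
w=e^{z}=\exp\!\Big(\tfrac{a_k-b_k}{\theta}+\tfrac{\Delta}{\beta\theta}-1-w\Big),
\end{equation*}
that is,
\begin{equation*}
w\,e^{w}=\exp\!\Big(\tfrac{a_k-b_k}{\theta}+\tfrac{\Delta}{\theta\beta}-1\Big).
\end{equation*}
The right-hand side is positive. The principal branch of the Lambert function is a bijection from $(0,\infty)$ onto $(0,\infty)$, so there is exactly one solution, $w=W(\cdot)$. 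Back-substituting gives the unique stationary point
\begin{equation*}
\tilde y=\frac{\Delta}{\beta}-\theta\big(1+W(\cdot)\big),
\end{equation*}
where $\tilde y$ ranges over all of $\mathbb R$.

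\textbf{Uniqueness of the minimizer (main obstacle).} The objective $\tilde f^k$ need not be convex, so I would not argue from convexity. Instead I would show that $g$ crosses zero exactly once, from below. The cleanest route is the reformulation already obtained: $g(y)$ has the same sign as
\begin{equation*}
h(y):=\beta\theta\,(1+e^{z(y)})+\beta y-\Delta ,
\end{equation*}
since $g=(1-p)\,h$ and $1-p>0$. The function $h$ is strictly increasing in $y$, because both $e^{z(y)}$ and $\beta y$ are increasing. It tends to $-\infty$ as $y\to-\infty$ and to $+\infty$ as $y\to+\infty$. Hence $\tilde f^k$ is strictly decreasing on $(-\infty,\tilde y)$ and strictly increasing on $(\tilde y,\infty)$.

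\textbf{Restricting to $y\ge 0$.} If $\tilde y\ge 0$, then $\tilde y$ is the unique minimizer on $[0,\infty)$. If $\tilde y<0$, then $\tilde f^k$ is strictly increasing on $[0,\infty)$, so $y=0$ is the unique minimizer. These two cases combine into
\begin{equation*}
y^\star_{u_k}=\max(\tilde y,0),
\end{equation*}
which is the formula in the statement.
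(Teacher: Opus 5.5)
Your proof is correct and follows essentially the same route as the paper. Your $h$ is exactly $\beta g^k$ from the paper's auxiliary lemma, the derivative is a positive factor times this strictly increasing function, and your substitution $w=\frac{\Delta}{\beta\theta}-\frac{y}{\theta}-1$ is the paper's $u$, giving the same Lambert-$W$ closed form and the same case split at $y=0$. Your version is slightly tighter in two places: you get existence of the root from the limits of $h$ at $\pm\infty$, and you note that the Lambert expression is the stationary point on all of $\mathbb R$, which justifies the $\max(\cdot,0)$ formula also when the root is negative (the paper derives the closed form only under condition~\eqref{eq:cond-y-pos}).
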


\begin{corollary} \label{cor:incentive-comparison}
    Let $y^*_k$ be the optimal incentives as defined in Proposition \ref{prop:y_star}. Then $y^*_k$ is non-increasing with responsibility intensity $a_k-b_k$.
\end{corollary}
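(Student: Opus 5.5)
The plan is to read the monotonicity directly off the closed form of Proposition~\ref{prop:y_star}, viewing $y^\star_k$ as a function of the single scalar $r := a_k-b_k$, with every other quantity held fixed. By Lemma~\ref{lem:expectations}, $c_{\neg 1}$ is defined through $u_0$ only, so it does not depend on $k$. Likewise $c_1$, $\beta$ and $\theta$ are profile-independent. Write
\[
g(r) := \frac{c_{\neg 1}-c_1}{\beta}-\theta\left(1+W\!\left(e^{\frac{r}{\theta}+\frac{c_{\neg 1}-c_1}{\theta\beta}-1}\right)\right),
\]
so that $y^\star_k=\max(g(a_k-b_k),0)$. It then suffices to show that $g$ is non-increasing in $r$, because the map $x\mapsto\max(x,0)$ is non-decreasing and composing preserves monotonicity.

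For the monotonicity of $g$, the key facts are about the principal branch of the Lambert function, which is the relevant one since its argument is a positive exponential. On $(0,\infty)$, $W$ is strictly increasing: from $W(z)e^{W(z)}=z$ one gets $W'(z)=\frac{W(z)}{z(1+W(z))}>0$ for $z>0$. The inner map $r\mapsto e^{r/\theta+\text{const}}$ is strictly increasing when $\theta>0$. Hence $W(\cdot)$ composed with it is increasing in $r$. Multiplying by $-\theta<0$ and adding a constant shows that $g$ is strictly decreasing in $r$.

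Applying this to the ordering~\eqref{eq:ordering}, $a_{k-1}-b_{k-1}\ge a_k-b_k$ gives $y^\star_{k-1}\le y^\star_k$. This is exactly the claim that incentives are non-increasing in responsibility intensity. The result is only ``non-increasing'' rather than strictly decreasing because of the clipping at $0$.

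The only delicate points are bookkeeping rather than analysis. The first is confirming that $c_{\neg 1}$ is genuinely profile-independent. It is defined via $u_0$, and since $u_0(i,y)=b_0$ carries no $y$-dependence for $i\neq1$, the weights are constants; I would state this explicitly. The second is the degenerate case $\theta=0$, which the model formally allows. I would either exclude it, since the logit and the formula require $\theta>0$, or treat it as a limit. As a cross-check that avoids Lambert $W$ entirely, one could use the first-order condition $\tilde f'(y)=0$, with the monotone-comparative-statics (Topkis) argument that the derivative of $\tilde f$ in $y$ increases with $r$. The closed-form route is simpler, however, so I would use it.
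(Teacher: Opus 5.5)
Your proposal is correct and follows the same route as the paper's proof. The exponential argument increases with $a_k-b_k$, $W$ is increasing on positive arguments, multiplying by $-\theta$ reverses the direction, and the clipping $\max(\cdot,0)$ turns strict decrease into non-increase. Your explicit check that $c_{\neg 1}$ does not depend on $k$ is a useful addition. The appendix writes $c_{\neg 1,k}$ and silently holds it fixed, and without that the cross-profile comparison $y^\star_{k-1}\le y^\star_k$ would not follow from the ordering~\eqref{eq:ordering} alone.
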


This proposition tells us which profiles expect to have higher incentives. Thanks to ordering~\eqref{eq:ordering} and Cor.~\ref{cor:incentive-comparison} we can thus determine the order of proposed incentives.

The following proposition prevents optimal incentives from being infinitely large, which is crucial when evaluating the economic feasibility of the regulator's policy.

\begin{proposition} \label{prop:bound-y-star}
    The optimal incentive is upper-bounded, i.e., $$0 \leq y^\star_{\hat v^t} < \frac{c_{\neg 1}-c_1}{\beta}.$$
\end{proposition}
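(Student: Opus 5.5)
The plan is to read the bound directly off the closed form in Proposition~\ref{prop:y_star}. Since $\hat k^t$ is some profile $k\in\mathcal K$ and $y^t=y^\star_{\hat k^t}$, it suffices to show $0\le y^\star_{u_k}<\frac{c_{\neg 1}-c_1}{\beta}$ for every $k$.

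The lower bound is immediate: $y^\star_{u_k}$ is defined as $\max(\cdot,0)$, so it is nonnegative.

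For the upper bound, write
\[
A:=\frac{c_{\neg 1}-c_1}{\beta},\qquad z_k:=e^{\frac{a_k-b_k}{\theta}+\frac{c_{\neg 1}-c_1}{\theta\beta}-1}.
\]
By Lemma~\ref{lem:expectations} we have $c_{\neg 1}>c_1$, and $\beta>0$, so $A>0$. Next, $z_k>0$, and the principal branch of the Lambert $W$ function is strictly positive on $(0,\infty)$. Hence $1+W(z_k)>1$, and assuming $\theta>0$ (as needed for the logit model~\eqref{eq:logit} to be well defined),
\[
A-\theta\bigl(1+W(z_k)\bigr)<A-\theta<A .
\]
The bound now follows from the two cases of the max. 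If the first argument is the larger one, then $y^\star_{u_k}<A$ by the display above. If instead $y^\star_{u_k}=0$, then $y^\star_{u_k}=0<A$ because $A>0$. In both cases $y^\star_{u_k}<A$.

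The only potentially delicate step is justifying $W(z_k)>0$, i.e., that the principal real branch is used. This holds because the argument $z_k$ is positive, where $W$ is single-valued, increasing, and satisfies $W(0)=0$.

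If one preferred not to rely on the closed form, the same conclusion follows from first-order conditions. Using~\eqref{eq:pk}, the derivative of $\tilde f^k$ is
\[
p_k'(y)\,\bigl(-(c_{\neg1}-c_1)+\beta y\bigr)+\beta\,p_k(y).
\]
For $y\ge A$ the first term is $\ge 0$ (since $p_k'>0$) and the second term is $>0$. So the cost is strictly increasing on $[A,\infty)$, and the minimizer must lie below $A$. I would mention this as a sanity check.
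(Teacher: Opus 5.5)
Your proof is correct and follows essentially the same route as the paper: it reads the bound off the closed form of Proposition~\ref{prop:y_star} and uses that $W$ is positive on positive arguments, so $1+W(\cdot)>0$. Your explicit treatment of the branch $y^\star_{u_k}=0$ via $\frac{c_{\neg 1}-c_1}{\beta}>0$, together with the assumption $\theta>0$, fills in a step the paper leaves implicit.
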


\noindent As expected, the previous bound grows with the externality~$c_{\neg 1}$ generated when not choosing alternative~$1$ and decreases with the social cost~$c_1$ of alternative~$1$ since the regulator can prevent more social cost by encouraging alternative~$1$ if there is larger difference between the costs caused by alternatives~$1$ and the others. 

\subsection{Agent's Strategic Behavior}
\label{sec:agent-s-strategic-behavior}
\label{sec:manipulation}

We focus here on an agent that is inherently responsible, i.e., their  true profile is the \emph{responsible profile}~$v(\cdot,\cdot)=u_0(\cdot,\cdot)$ (see ordering~\eqref{eq:ordering}) and who has a natural  preference toward sustainable alternatives, i.e., $a_0>b_0$. We study in which case such an agent may be better off manipulating, by adopting (i.e., simulating) another profile. 

We assume that the agent has \emph{bounded rationality}, i.e., they seek to maximize their utility subject to cognitive and computational limitations, which prevents them from adopting arbitrarily complex optimization. This is compliant with human behavior \cite{wheeler2018bounded} and rules out sophisticated Markovian or RL planning strategies that require accounting for long-term rewards. We thus assume that, if the agent decides to be manipulative, they will imitate another profile~$r\in\mathcal K$, via adopting utility~$u_r(\cdot,\cdot)$, i.e., consistently choosing alternative $i^t=1$ or $i^t>1$ according to probability~$p^r(\cdot)$ (see Lemma~\ref{lem:u-profiles}) in all the timeslots.
In the following analysis, among profiles~$\mathcal K$, three deserve particular attention: 
\begin{itemize}
\item $k=0$ is the most responsible profile. We focus on a user whose true profile is~$0$ and study under which conditions they are better off manipulating.
\item $r$ is the profile
adopted by the agent, which determines probability $p^r(\cdot)$ to choose the sustainable alternative. If the agent behaves honestly, $r=0$, otherwise, $r>0$.
\item $\hat k^t$ is the regulator's estimate at time~$t$ of the agent's profile.
\end{itemize}

The true utility collected by the agent when the regulator proposes incentive~$y$ is
\begin{align}
\label{eq:u_0_y}
u_0(y)=
\begin{cases}
a_0+y & \text{if choosing }i=1
\\
b_0 & \text{if choosing }i>1,
\end{cases}
\end{align}
no matter whether the agent is honest ($r=0$) or not ($r>0$).

When a user is manipulative, it incurs a loss of true utility, due to the fact that they do not choose alternatives in accordance with their true profile. On the other end, they have a gain, obtained via shifting the probability distribution of the proposed incentive~$Y_t$ toward a more favorable one. We capture this loss and gain in the definitions of Gain from Manipulation and Cost of Manipulation, which will be given in Lemma~\ref{lem:decomposition}. We need first to introduce some notation.

\begin{definition}
\label{def:notation-of-the-random-variables}
[Notation of the random variables]
Let~$\hat K^t=\psi(\mathbf H^{t-1})$ be the random variable (r.v.) denoting the profile inferred by the regulator at time~$t$. It is the result of applying a deterministic function~$\psi(\cdot)$ (see~\eqref{eq:argmax-logL}) to~$\mathbf H^{t-1}$, which is the r.v. representing the history.  $\mathbf H^{t-1}$ is stochastic since the agent's choices are stochastic (Lemma~\ref{lem:u-profiles}). Let $Y^t=\phi(\hat K^t)$ denote the incentive proposed at time~$t$. It is a r.v., whose possible values are $y_k^\star, k\in\mathcal K$.

Let~$U_0(y)$ denote the random variable representing such a utility. It takes values $a_0+y$ and $b_0$ with probability $p_k(y)$ and $1-p_k(y)$, respectively (Lemma~\ref{lem:u-profiles}). 
The true utility collected by the agent at time~$t$ is thus random variable $U_0(Y^t)$. Its stochasticity comes from the stochasticity of~$Y^t$, as well as the stochasticity of~$U_0(y)$ for any value of~$y$.

For notational compactness, we introduce the following conditional expectation and probability operators
\begin{align}
\label{eq:conditional-expect}
\mathbb E_k[\cdot] = 
\mathbb E[ \cdot | \text{agent adopts profile }k],
\\
\label{eq:conditional-prob}
\mathbb P_k[\cdot] = 
\mathbb P[ \cdot | \text{agent adopts profile }k]
.
\end{align}

Let r.v.~$Y_k^t$ and~$\hat K_k^t$ represent the incentive proposed at time~$t$ and the profile inferred by the regulator, when the agent adopts profile~$k$. Observe that all r.v. $\hat K_k^t,\forall k\in\mathcal K, \forall t$ have the same domain~$\mathcal K$. They only differ in how the probability mass, i.e.,~$P_k[A],\forall \mathcal A\subseteq \mathcal K$~\eqref{eq:conditional-prob},  is distributed in the domain. Similarly, all r.v. $Y_k^t,\forall k\in\mathcal K, \forall t$ have the same domain~$\{y_k^\star\}_{k\in\mathcal K}$.  This notation makes it clear that, by adopting a certain profile~$k\in\mathcal K$, the agent induces a certain probability distribution over the inferred profile, and thus over the incentive proposed.

\end{definition}

\begin{lemma}
\label{lem:decomposition}
Let~$\textit{CM}_r(y)$ be the \emph{Cost of Manipulation} incurred by an agent simulating profile~$r$ when the proposed incentive is $y\in\{y_k^\star\}_{k\in\mathcal K}$:
\begin{align}
\label{eq:CMry}
\textit{CM}_r(y)
:=
\mathbb E_0[U_0(y)]-\mathbb E_r[U_0(y)].
\end{align}

Let~$\textit{CM}_r^t$ be its mean:
\begin{align}
\label{eq:CM}
\textit{CM}_r^t
:=
\mathbb E_0 [\textit{CM}_r (Y^t)]
=
\sum_{k\in\mathcal K}
\begin{array}{r}
\mathbb P_0(Y^t=y_k^\star) 
\\ \cdot
\textit{CM}_r(y_k^\star)
\end{array}
\end{align}

Let~$\textit{GM}_r^t$ be the \emph{Gain from Manipulation} when simulating profile~$r$, defined as

\begin{align}
\label{eq:GM}
\textit{GM}_r^t
:=
\sum_{k\in\mathcal K}
\left(
\begin{array}{r}
\mathbb P_r[Y^t=y_k^\star] 
\\
-
\mathbb P_0[Y^t=y_k^\star]
\end{array}
\right)
\mathbb E_r[U_0(y_k^\star)].
\end{align}

Let~$\Delta U_r^t$ be the \emph{manipulation advantage}, defined as the additional expected true utility collected by the agent at time~$t$ when simulating profile~$r$ over acting honestly. 

Such an advantage can be decomposed in
\begin{align}
\label{eq:decomposition}
\Delta U_r^t:=
\mathbb E_r[U_0(Y^t)]-    
\mathbb E_0[U_0(Y^t)]
=
\textit{GM}_r^t - \textit{CM}_r^t.
\end{align}
\end{lemma}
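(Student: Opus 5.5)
The decomposition follows from the law of total expectation plus an add-and-subtract step; no estimate is needed. The key structural fact is that at time $t$ the proposed incentive $Y^t=\phi(\hat K^t)$ is a deterministic function of the history $\mathbf H^{t-1}$. Given that incentive, the current choice $i^t$ is drawn from the logit distribution of the adopted profile (step~\ref{it:choice} of \S\ref{sec:interaction}) and does not otherwise depend on the past. Hence, conditional on $Y^t=y_k^\star$, the true utility $U_0(Y^t)$ has the law of $U_0(y_k^\star)$ under the adopted profile. Under $\mathbb P_r$ it takes value $a_0+y_k^\star$ with probability $p_r(y_k^\star)$ and $b_0$ otherwise, so its conditional mean is $\mathbb E_r[U_0(y_k^\star)]=p_r(y_k^\star)(a_0+y_k^\star)+(1-p_r(y_k^\star))b_0$, which follows from Lemma~\ref{lem:u-profiles} and~\eqref{eq:u_0_y}.

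First, I condition on the finitely many values of $Y^t$ under each adopted profile. This gives
\begin{align*}
\mathbb E_r[U_0(Y^t)]&=\sum_{k\in\mathcal K}\mathbb P_r[Y^t=y_k^\star]\,\mathbb E_r[U_0(y_k^\star)],\\
\mathbb E_0[U_0(Y^t)]&=\sum_{k\in\mathcal K}\mathbb P_0[Y^t=y_k^\star]\,\mathbb E_0[U_0(y_k^\star)].
\end{align*}
If several profiles share the same incentive value, I read the sums as over distinct values, or equivalently over the events $\{\hat K^t=k\}$; this changes nothing below.

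Second, I add and subtract $\sum_k \mathbb P_0[Y^t=y_k^\star]\,\mathbb E_r[U_0(y_k^\star)]$ in $\Delta U_r^t$. The terms group as
\[
\sum_k\big(\mathbb P_r[Y^t=y_k^\star]-\mathbb P_0[Y^t=y_k^\star]\big)\mathbb E_r[U_0(y_k^\star)]
\;-\;\sum_k \mathbb P_0[Y^t=y_k^\star]\big(\mathbb E_0[U_0(y_k^\star)]-\mathbb E_r[U_0(y_k^\star)]\big).
\]
The first sum is exactly $\textit{GM}_r^t$ by~\eqref{eq:GM}. By~\eqref{eq:CMry} the second sum is $\sum_k\mathbb P_0(Y^t=y_k^\star)\,\textit{CM}_r(y_k^\star)$, which equals $\textit{CM}_r^t$ by~\eqref{eq:CM}. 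This establishes~\eqref{eq:decomposition}.

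The only delicate point is justifying the conditional-expectation step. It requires that, given $Y^t=y$, the distribution of the current choice depends only on $y$ and the adopted profile, and not on the past choices that determined $Y^t$. I would argue this from the timing in \S\ref{sec:interaction}: the choice at $t$ uses fresh Gumbel noise, independent of $\mathbf H^{t-1}$. In addition, the agent's strategy is fixed to the logit of profile $r$ in every slot, as assumed in \S\ref{sec:manipulation}. Once this is stated, the rest is algebra.
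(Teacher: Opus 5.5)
Your proof is correct and follows essentially the same route as the paper: apply the law of total expectation over the values of $Y^t$ for each adopted profile $j\in\{0,r\}$, then add and subtract $\sum_k \mathbb P_0[Y^t=y_k^\star]\,\mathbb E_r[U_0(y_k^\star)]$. Your justification of the conditioning step (fresh noise at time $t$, so the choice depends only on $Y^t$ and the adopted profile) and your caveat about coinciding incentive values go beyond the paper. The paper leaves both implicit, and the second is actually needed for the sums over $k\in\mathcal K$ to avoid double-counting.
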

\begin{proof}
By the law of total expectation, for each $j\in\{0,r\}$,
$
\mathbb E_j[U_0(Y^t)]
=
\sum_{k\in\mathcal K}
\mathbb P_j(Y^t=y_k^\star)\,
\mathbb E_j[U_0(y_k^\star)]
$. Replacing these expressions in~\eqref{eq:decomposition} and adding and subtracting~$\sum_{k\in\mathcal K}\mathbb P_0(Y^t=y_k^\star) \mathbb E_r[U_0(y_k^\star)]$ gives the Lemma.
\end{proof}

We now study how the agent can manipulate the probability distribution of the incentives. Let~$\textit{IE}_r^t$ denote the \emph{intensity of elicitation} of profile~$r$ at time~$t$, i.e., how much the agent can change the probability distribution of the profile inferred by the regulator, by simulating profile~$r>0$:
\begin{align}
\label{eq:IE_r}
\textit{IE}_r^t := D_\text{KL}(\hat K_0^t \| \hat K_r^t),
\end{align}
which is the Kullback-Leibler~(KL) divergence between the profile inferred by the regulator when the agent adopts true profile~$0$ or simulates profile~$r$, respectively. 

Similarly, let~$\textit{IP}_r^t$ denote the \emph{intensity of personalization} of profile~$r$ at time~$t$, i.e., how much the agent can change the probability distribution of the incentives proposed by the regulator, by simulating profile~$r>0$:
\begin{align}
\label{eq:IP_r}
\textit{IP}_r^t := D_\text{KL}(Y_0^t \| Y_r^t)
\le \textit{IE}_r^t,
\end{align}
where the inequality is obtained thanks to
the contraction of divergence induced by function~$\hat Y_k^t=\phi(\hat K_k^t)$ \cite[Theor.~7.4]{PolyanskiyWu2025}.

\begin{theorem}
\label{theor:manipulation-gain-bound}
The gain from manipulation is bounded by
\begin{align}
0
&\le 
\textit{GM}_r^t 
\le
R_r
\sqrt{ \textit{IP}_r^t/2}
\le
R_r
\sqrt{ \textit{IE}_r^t/2}
, 
\\
\label{eq:R} \text{where }
R_r
&:=
\mathbb E_r[U_0(y_{K-1}^\star)] - 
\mathbb E_r[U_0(y_0^\star)]
\end{align}
is the range of expected true utility.
\end{theorem}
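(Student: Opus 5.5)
The plan is to treat $\textit{GM}_r^t$ as the difference of the expectations of one fixed function under the two laws of $Y^t$, and then apply a total-variation/Pinsker argument. Define $g_r(y):=\mathbb E_r[U_0(y)]$. By Lemma~\ref{lem:u-profiles} and \eqref{eq:u_0_y},
\begin{align*}
g_r(y)=p_r(y)\,(a_0+y)+\bigl(1-p_r(y)\bigr)\,b_0 .
\end{align*}
Then \eqref{eq:GM} reads $\textit{GM}_r^t=\mathbb E[g_r(Y_r^t)]-\mathbb E[g_r(Y_0^t)]$. This is a difference of expectations of the same function under two distributions on the common domain $\{y_k^\star\}_{k\in\mathcal K}$.

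\emph{Monotonicity of $g_r$.} Differentiating gives
\begin{align*}
g_r'(y)=p_r(y)+p_r'(y)\,(a_0-b_0+y).
\end{align*}
This is strictly positive, because $p_r'\ge 0$ (sigmoid), $a_0>b_0$ and $y\ge 0$. Hence $g_r$ is increasing in $y$. By ordering~\eqref{eq:ordering} and Cor.~\ref{cor:incentive-comparison}, $y_0^\star\le y_1^\star\le\dots\le y_{K-1}^\star$. So every value $g_r(y_k^\star)$ lies in $[g_r(y_0^\star),g_r(y_{K-1}^\star)]$, an interval of length exactly $R_r$ in \eqref{eq:R}.

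\emph{Upper bound.} Since $\sum_k(\mathbb P_r[Y^t=y_k^\star]-\mathbb P_0[Y^t=y_k^\star])=0$, we may subtract the midpoint constant $c=\tfrac12\bigl(g_r(y_0^\star)+g_r(y_{K-1}^\star)\bigr)$ from $g_r$ without changing $\textit{GM}_r^t$. The shifted function satisfies $|g_r-c|\le R_r/2$, so
\begin{align*}
\textit{GM}_r^t\le \tfrac{R_r}{2}\sum_k\bigl|\mathbb P_r[Y^t=y_k^\star]-\mathbb P_0[Y^t=y_k^\star]\bigr| = R_r\,\mathrm{TV}(Y_0^t,Y_r^t).
\end{align*}
Pinsker's inequality gives $\mathrm{TV}\le\sqrt{D_{\mathrm{KL}}(Y_0^t\|Y_r^t)/2}=\sqrt{\textit{IP}_r^t/2}$. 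The last inequality of the theorem then follows from \eqref{eq:IP_r} and monotonicity of the square root. (If several profiles share the same incentive value, the sum over $k$ should be read over distinct values; this changes nothing.)

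\emph{Lower bound $\textit{GM}_r^t\ge 0$.} Summation by parts gives
\begin{align*}
\textit{GM}_r^t=\sum_{k=1}^{K-1}\bigl(g_r(y_k^\star)-g_r(y_{k-1}^\star)\bigr)\bigl(\mathbb P_r[Y^t\ge y_k^\star]-\mathbb P_0[Y^t\ge y_k^\star]\bigr).
\end{align*}
The first factors are nonnegative by the monotonicity above. It therefore suffices that $Y_r^t$ first-order stochastically dominates $Y_0^t$. Since $\phi$ is monotone, it is enough that $\hat K_r^t$ stochastically dominates $\hat K_0^t$.

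This dominance is the main obstacle. My first attempt would be a coupling argument. Realize both agents' choices with common uniforms $U^s$: choose $i^s=1$ iff $U^s\le p_j(y^s)$. Since $p_r\le p_0$ pointwise by \eqref{eq:ordering} (and $r\ge 0$), I would then show by induction on $s$ that the manipulating trajectory yields an inferred profile that is never smaller. This needs two properties of the MLE rule $\psi$.
\begin{enumerate}
\item[(i)] Replacing a choice $i^s=1$ by $i^s>1$ in the history cannot decrease $\hat k$. This follows from the monotone likelihood ratio of the logit in $a_k-b_k$: the log-likelihood has increasing differences in (number of non-sustainable choices, $k$).
\item[(ii)] A larger inferred profile, and hence a larger past incentive $y^s$, does not reverse this.
\end{enumerate}
Property (ii) is delicate, because a larger $y^s$ makes choosing alternative $1$ more likely and thus pushes the MLE down. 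If (ii) cannot be closed in general, I would state dominance of $\hat K_r^t$ over $\hat K_0^t$ as the needed property of $\psi$. It is then satisfied, e.g., at $t=1$ trivially and asymptotically by MLE consistency.
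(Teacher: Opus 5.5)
Your upper bound follows the paper step for step: the midpoint constant $A$, the sup-representation of total variation, Pinsker, and data processing for $\textit{IP}_r^t\le\textit{IE}_r^t$. Your check $g_r'>0$ and your summation by parts make rigorous two steps the paper only asserts. The paper's wording, that high values get positive weights ``due to FOSD'', is not literally what FOSD gives, and your Abel summation is the correct form of its ``simple calculation''.

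\textbf{The gap.} It is the lower bound $\textit{GM}_r^t\ge 0$. You correctly reduce it to first-order stochastic dominance of $\hat K_r^t$ over $\hat K_0^t$, but you leave that unproven. ``Trivially at $t=1$ and asymptotically by MLE consistency'' does not give it for every finite $t$, which is what the statement asserts. The paper does not close this either: it claims FOSD ``thanks to Prop.~\ref{prop:y_star}''. That proposition, with Cor.~\ref{cor:incentive-comparison}, only yields that $k\mapsto y_k^\star$ is non-decreasing, i.e.\ that $\phi$ is monotone, which you already used. It says nothing about how the adaptive history drives $\psi$.

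\textbf{How to close it (at least partly).} Your worry (ii) about the inference rule is unfounded. For $k<k'$, the one-step log-likelihood ratio $\lambda_{kk'}(y,1)=\ln\frac{p_{k'}(y)}{p_k(y)}$ satisfies $\partial_y\lambda_{kk'}(y,1)=\frac{p_k(y)-p_{k'}(y)}{\theta}\ge 0$. Moreover $\lambda_{kk'}(y,0)=\lambda_{kk'}(y,1)+\frac{(a_k-b_k)-(a_{k'}-b_{k'})}{\theta}$. So an observation taken under a larger incentive, or a $0$ instead of a $1$, raises every such ratio. The log-likelihood therefore has increasing differences. The tie-break in \eqref{eq:argmax-logL} reduces to $\psi=\min\arg\max$, which is monotone in this evidence order.

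The real obstruction is behavioral. With common uniforms you need $z^{r,s}\le z^{0,s}$ whenever $\hat k_r^s\ge\hat k_0^s$. It suffices that $p_r(y_{k'}^\star)\le p_0(y_k^\star)$ for all $k\le k'$, i.e.\ $a_r-b_r+y_{K-1}^\star\le a_0-b_0+y_0^\star$. Given this, induction yields $\hat k_r^s\ge\hat k_0^s$ pathwise, hence FOSD.

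With common $c_{\neg1}$, write $w=a_k-b_k+y$. The first-order condition becomes $w+\theta e^{w/\theta}=\frac{c_{\neg1}-c_1}{\beta}-\theta+(a_k-b_k)$. This shows $k\mapsto a_k-b_k+y_k^\star$ is non-increasing, including at the corner $y_k^\star=0$. That gives the condition for $r=K-1$, hence always when $K=2$, and it holds numerically in the case study. For intermediate $r$ the model does not imply it, so this route needs it as an extra hypothesis. Neither your proposal nor the paper covers that case.
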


\begin{proof}
Thanks to Prop.~\ref{prop:y_star}, $Y_r^t$ First-Order-Stochastically Dominates (FOSD)~$Y_0^t$, which intuitively means that by simulating~$r>0$ the agent shifts the probability mass toward higher values of incentives.
Observe that~$y\mapsto\mathbb E[U_0(y)]$ is increasing. Looking at the sum defining~$\textit{GM}_r^t$~\eqref{eq:GM}, the terms with high values of~$\mathbb E[U_0(y)]$ are multiplied by positive numbers ($\mathbb P_r[\cdot]> \mathbb P_0[\cdot]$, due to FOSD) and only low values of~$\mathbb E[U_0(y)]$ are multiplied by negative numbers. Simple calculation thus yields a positive sum and thus~$\textit{GM}_r^t \ge 0$.

As for the upper bound, we will first bound~$\textit{GM}_r^t$ in terms of total variation~$\delta(Y_0^t,Y_r^t)$ between the probability distributions of~$Y_0^t$ and $Y_r^t$ (def. in~\cite[(7.3)]{PolyanskiyWu2025}). 

Since~$\sum_{k\in\mathcal K}
\left(
\mathbb P_r[Y^t=y_k^\star] - 
\mathbb P_0[Y^t=y_k^\star]
\right)=0$,
we can introduce any constant~$A$ in~ the expression of~$\textit{GM}_r^t$ (Lemma~\ref{lem:decomposition}) without changing its value, as follows:
\begin{align}
\nonumber
\textit{GM}_r^t
=
\sum_{k\in\mathcal K}
\left(
\begin{array}{ll}
\mathbb P_r[Y^t=y_k^\star] 
\\- \mathbb P_0[Y^t=y_k^\star]
\end{array}
\right)
\cdot (\mathbb E_r[U_0(y_k^\star)]-A)
\\
\nonumber
=\frac{R_r}{2}
\sum_{k\in\mathcal K}
\left(
\begin{array}{ll}
\mathbb P_r[Y^t=y_k^\star] 
\\
- \mathbb P_0[Y^t=y_k^\star]
\end{array}
\right)
\cdot \frac{\mathbb E_r[U_0(y_k^\star)]-A}{R_r/2}
\\
\nonumber
=\frac{R_r}{2}
\sum_{k\in\mathcal K}
\left(
\begin{array}{ll}
\mathbb P_r[Y^t=y_k^\star] 
\\
\nonumber
- \mathbb P_0[Y^t=y_k^\star]
\end{array}
\right)
\cdot f(y_k^\star),
\end{align}
where we define function~$f(y_k^\star)=\tfrac{\mathbb E_r[U_0(y_k^\star)]-A}{R_r/2}$. The previous expression is equivalent to
\begin{align}
\textit{GM}_r^t
= \frac{R_r}{2} \left(
\mathbb E_r[f(Y^t)] - \mathbb E_0[f(Y^t)]
\right)
\end{align}

Let us choose in particular
\begin{align}
\nonumber
A&=\left(
\mathbb E_r[U_0(y_{K-1}^\star)]+
\mathbb E_r[U_0(y_0^\star)]
\right)/2
\end{align}
Since $y\mapsto \mathbb E_r[U_0(y)]$ is an increasing function. Also, thanks to Prop.~\ref{prop:y_star}, $y_k^\star$ is non decreasing with~$k$. Since~$A$ is the midpoint between the maximum and minimum expected value,
\begin{align}
|A-\mathbb E_r[U_0(y_k^\star)]|\le R_r/2,\forall k\in\mathcal K.
\end{align}
and thus~$|f(y_k^\star)|\le 1, \forall k\in\mathcal K$. 
We can thus apply the sup-representation of total variation from \citet[(7.18)]{PolyanskiyWu2025} and obtain
\begin{align}
\label{eq:sup-representation}
\textit{GM}_r^t \le R_r\cdot \delta(Y_0^t, Y_r^t).
\end{align}

Thanks to the Pinsker inequality first \cite[(7.27)]{PolyanskiyWu2025}
\begin{align}
\label{eq:pinsker}
\delta&(Y_0^t, Y_r^t) 
\le 
\sqrt{ \tfrac{1}{2} D_\text{KL}(Y_0^t\|Y_r^t)},
\end{align}

Combining~\eqref{eq:sup-representation}, \eqref{eq:pinsker} and~\eqref{eq:IP_r} concludes the proof.
\end{proof}

Theor.~\ref{theor:max-shift} formalizes that the more the intensity of personalization, the more the agent can, at each timeslot, gain from manipulation. Theor.~\ref{theor:max-shift} result adds that, complying with intuition, the more the interactions between the regulator and the agent, the more the latter can manipulate the former. However, Theor.~\ref{theor:max-shift} limits the rate of growth of the potential for manipulation: such a growth can be at most linear.

\begin{theorem}
\label{theor:max-shift}
Manipulation potential~$\textit{IE}_r^t$ is bounded by
\begin{align}
\label{eq:bound-on-Gamma_r}
\textit{IE}_r^t
&\le (t-1) \textit{IE}_r
\\
\label{eq:gammar}
\text{where }
\textit{IE}_r:= 
&\max_{k\in\mathcal K} 
D_\text{KL}\left(
\text{Ber}(p_0(y_k^\star))
\|
\text{Ber}(p_r(y_k^\star))
\right),
\end{align}
and $\text{Ber}(p)$ is a Bernoulli distribution with probability~$p$. 
Moreover, the gain from manipulation is bounded by
\begin{align}
\label{eq:bound-on-GM}
0
&\le 
\textit{GM}_r^t 
\le
R_r
\sqrt{\tfrac{1}{2}(t-1) \textit{IE}_r}.
\end{align}
\end{theorem}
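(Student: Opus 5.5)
The plan is to bound $\textit{IE}_r^t$ by the KL divergence between the laws of the whole history, and then to split that divergence step by step using the chain rule. The inferred profile $\hat K_j^t=\psi(\mathbf H_j^{t-1})$ is a deterministic function of the history $\mathbf H_j^{t-1}$, for $j\in\{0,r\}$. By data processing (contraction of divergence, \cite[Theor.~7.4]{PolyanskiyWu2025}), as already used for~\eqref{eq:IP_r}, we get
$$\textit{IE}_r^t=D_\text{KL}(\hat K_0^t\|\hat K_r^t)\le D_\text{KL}(P_{\mathbf H_0^{t-1}}\|P_{\mathbf H_r^{t-1}}).$$
So it suffices to show that the history divergence is at most $(t-1)\textit{IE}_r$.

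Next I apply the chain rule for KL divergence to the history $y^1,i^1,\dots,y^{t-1},i^{t-1}$, writing it as a sum of conditional divergences, one for each incentive and one for each choice. The incentive $y^s=\phi(\psi(\mathbf h^{s-1}))$ is a deterministic function of the past $\mathbf h^{s-1}$, and this function is the same whichever profile the agent adopts. Its conditional law is therefore identical under profile $0$ and profile $r$, and its conditional divergence is zero.

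Given the past and given $y^s=y_k^\star$, the choice $i^s$ matters only through whether $i^s=1$. I will handle this in one of two ways:
\begin{itemize}
\item treat the observation as the binary indicator $\mathbf 1\{i^s=1\}$, which is justified if $\psi$ depends only on it; or
\item note that under both profiles the conditional distribution of $i^s$ among the alternatives $i>1$ can be taken the same, via Lemma~\ref{lem:u-profiles} and the definition of $c_{\neg1}$ with $u_0$ weights. The divergence then reduces to the Bernoulli part.
\end{itemize}
Either way the choice-step term equals $D_\text{KL}(\text{Ber}(p_0(y_k^\star))\|\text{Ber}(p_r(y_k^\star)))$, which is at most $\textit{IE}_r$ by~\eqref{eq:gammar}. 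Averaging over the past keeps this bound, and summing over $s=1,\dots,t-1$ gives~\eqref{eq:bound-on-Gamma_r}.

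The second claim~\eqref{eq:bound-on-GM} follows by substituting~\eqref{eq:bound-on-Gamma_r} into Theorem~\ref{theor:manipulation-gain-bound}, using that $\sqrt{\cdot}$ is monotone; the lower bound $0$ is inherited directly.

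I expect the main obstacle to be the choice step. The logit probabilities of the individual alternatives $i>1$ depend on each profile's attributes $\boldsymbol\alpha_k\cdot\mathbf x_{i,k}$, so the full multinomial divergence need not reduce to the Bernoulli one. I would first try to argue that the regulator's inference $\psi$ (the MLE in~\eqref{eq:argmax-logL}) uses only the binary outcome "alternative $1$ or not", consistent with the two-branch form of $u_k$ in Lemma~\ref{lem:u-profiles}. Then $\mathbf H$ can be redefined with binary choices, and the data-processing step goes through unchanged.
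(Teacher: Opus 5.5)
Your proposal is correct and follows the paper's proof essentially step by step. Both arguments pass by data processing from $\hat K^t$ to the history, apply the chain rule (your zero-divergence incentive terms are implicit in the paper), bound each choice term by $D_\text{KL}(\text{Ber}(p_0(y_k^\star))\|\text{Ber}(p_r(y_k^\star)))\le \textit{IE}_r$, sum over $t-1$ steps, and substitute into Theorem~\ref{theor:manipulation-gain-bound}. Your first option is exactly the paper's justification of the Bernoulli step: its appendix states that $\psi$ depends on the history only through $z^s=\mathbf 1\{i^s=1\}$. You should drop the second option, because the conditional law over $i>1$ generally differs across profiles; see $c_{\neg1,k}$ with $u_k$ weights in the appendix, which takes different values across profiles in the case study.
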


\begin{proof}
The history of proposed incentives and chosen alternatives is~$\mathbf h^{t-1}=y^1,i^1,..,y^{t-1},i^{t-1}$. Therefore, the realization of~$\hat K^t$ is obtained as a deterministic function~$\psi(\mathbf h^{t-1})$. Let~$\mathbf H_k^{t-1}$ be the random variables representing the possible histories when the agent adopts profile~$k$. Then

\begin{align}
\label{eq:polyanskiy}
&\textit{IE}_r^t = D_\text{KL}(\hat K_0^t\| \hat K_r^t)
\le D_\text{KL}(\mathbf H_0^{t-1}\|\mathbf H_r^{t-1})
\\
\label{eq:chain}
&\le 
\left[
\sum_{s=1}^{t-1} 
\sum_{\mathbf h\in\mathcal H^{s-1}}
\begin{array}{l}
\mathbb P_0[\mathbf H^{s-1}=\mathbf h]\cdot 
\\
D_\text{KL}
\left( 
\begin{array}{l}
\mathbb P_0 
(
I^s | \mathbf H^{s-1}=\mathbf h
)
\\
\|
\mathbb P_r 
(
I^s | \mathbf H^{s-1}=\mathbf h
)
\end{array}
\right)
\end{array}
\right]
\\
\label{eq:bernoulli}
&=
\left[ 
\sum_{s=1}^{t-1} 
\sum_{\mathbf h\in\mathcal H^{s-1}}
\begin{array}{l}
\mathbb P_0[\mathbf H^{s-1}=\mathbf h]\cdot 
\\
D_\text{KL}
\left( 
\begin{array}{l}
\text{Ber}(p_0(y_{\psi(\mathbf h^{s-1})}^\star))
\\
\|
\text{Ber}(p_r(y_{\psi(\mathbf h^{s-1})}^\star))
\end{array}
\right)
\end{array}
\right]
\\
\label{eq:last-ineq}
&\le^\eqref{eq:gammar}
(t-1) \textit{IE}_r
,
\end{align}
Inequality~\eqref{eq:polyanskiy} simply exploits contraction of divergence induced by applying a function \cite[Theor.~7.4]{PolyanskiyWu2025}. Eq.~\eqref{eq:chain} is the chain rule obtained as in~\cite[(9)]{Kleinberg2023} and $\mathcal H^{s-1}$ is the set of all possible histories up to time~$s-1$.
Finally, Eq~\eqref{eq:bound-on-GM} is the combination of Theor.~\ref{theor:manipulation-gain-bound} and~\eqref{eq:bound-on-Gamma_r}.
\end{proof}

Is it thus possible to understand up to which timeslot manipulation is deterred, as formalized in the following corollary.

\begin{corollary}
\label{lem:timeslot}
The manipulation of the regulator via imitating profile~$r$ is deterred in any timeslot~$t$ s.t.
\begin{align}
\label{eq:t_deterrent}
t&<1+\frac{2}{\textit{IE}_r}
\left(
\frac{\textit{CM}_r^\text{min}}{R_r}
\right)^2
\\
\text{where }
& 
\label{CMrmin}
\textit{CM}_r^\text{min} = 
\min_{k\in\mathcal K}
\textit{CM}_r(y_k^\star).
\end{align}

\end{corollary}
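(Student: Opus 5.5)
The plan is to show that in any timeslot satisfying~\eqref{eq:t_deterrent} the manipulation advantage is non-positive, $\Delta U_r^t\le 0$. By the decomposition of Lemma~\ref{lem:decomposition}, $\Delta U_r^t=\textit{GM}_r^t-\textit{CM}_r^t$. So it suffices to upper bound $\textit{GM}_r^t$ and lower bound $\textit{CM}_r^t$ by quantities that can be compared directly.

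\textbf{Lower bound on the cost.} Since $\textit{CM}_r^t=\sum_{k\in\mathcal K}\mathbb P_0(Y^t=y_k^\star)\,\textit{CM}_r(y_k^\star)$ by~\eqref{eq:CM}, it is a convex combination of the values $\textit{CM}_r(y_k^\star)$. Hence $\textit{CM}_r^t\ge \textit{CM}_r^\text{min}$ as defined in~\eqref{CMrmin}.

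It is also worth noting that each $\textit{CM}_r(y)\ge 0$. Indeed, $\mathbb E_j[U_0(y)]=p_j(y)(a_0+y)+(1-p_j(y))b_0$. The honest logit probability $p_0(y)$ maximizes the entropy-regularized true utility, so the plain expected true utility need not be maximized by it. Nonnegativity of $\textit{CM}_r^\text{min}$ is not needed for the corollary, however. If it is negative, the right-hand side of~\eqref{eq:t_deterrent} still makes sense formally, and the squaring step below requires $\textit{CM}_r^\text{min}\ge 0$. I would therefore state that assumption explicitly, or note that the condition is vacuous otherwise. This sign issue is the only delicate point.

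\textbf{Upper bound on the gain.} Theorem~\ref{theor:max-shift}, specifically~\eqref{eq:bound-on-GM}, gives $\textit{GM}_r^t\le R_r\sqrt{\tfrac12 (t-1)\textit{IE}_r}$.

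\textbf{Combining.} We get $\Delta U_r^t\le R_r\sqrt{\tfrac12(t-1)\textit{IE}_r}-\textit{CM}_r^\text{min}$. This is negative whenever $\sqrt{\tfrac12(t-1)\textit{IE}_r}<\textit{CM}_r^\text{min}/R_r$, assuming $R_r>0$ and $\textit{CM}_r^\text{min}\ge 0$. Squaring both nonnegative sides and rearranging gives exactly~\eqref{eq:t_deterrent}. In that case $\Delta U_r^t<0$, so simulating $r$ yields strictly less expected true utility than honesty at timeslot $t$, which is what deterrence means.

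\textbf{Degenerate cases.} If $\textit{IE}_r=0$, the bound gives $\textit{GM}_r^t=0$, and deterrence holds for all $t$, consistent with reading the right-hand side of~\eqref{eq:t_deterrent} as $+\infty$. If $R_r=0$, the same conclusion follows.
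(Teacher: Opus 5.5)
\textbf{The main chain is the paper's, but one hypothesis is left open.} Your argument follows the paper's own chain. You decompose $\Delta U_r^t=\textit{GM}_r^t-\textit{CM}_r^t$ by Lemma~\ref{lem:decomposition}, use $\textit{CM}_r^t\ge\textit{CM}_r^{\min}$ (a convex combination), bound $\textit{GM}_r^t$ by \eqref{eq:bound-on-GM}, and solve for $t$. You are right that solving for $t$ by squaring requires $\textit{CM}_r^{\min}\ge 0$; the paper's proof passes over this silently.

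However, you leave this as an added hypothesis, so as written you prove a weaker statement. Your comments around it are also wrong. Nonnegativity \emph{is} needed, and without it the condition is not vacuous. If $\textit{CM}_r^{\min}<0$ (with $\textit{IE}_r,R_r>0$), the right-hand side of \eqref{eq:t_deterrent} exceeds $1$, so $t=1$ satisfies the condition. At $t=1$ your bound $R_r\sqrt{\tfrac12(t-1)\textit{IE}_r}-\textit{CM}_r^{\min}$ equals $-\textit{CM}_r^{\min}>0$, so it proves nothing.

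\textbf{How to close it.} The missing step is one line, and you had already written the formula it needs. Since $\mathbb E_j[U_0(y)]=b_0+p_j(y)(a_0-b_0+y)$, you get $\textit{CM}_r(y)=(p_0(y)-p_r(y))(a_0-b_0+y)$.
\begin{itemize}
\item The second factor is positive because $a_0>b_0$ (the standing assumption on the responsible agent) and $y\ge 0$.
\item The first factor is nonnegative because $p_k(y)=\sigma((a_k-b_k+y)/\theta)$ with $\sigma$ increasing, and $a_0-b_0\ge a_r-b_r$ by \eqref{eq:ordering}.
\end{itemize}
Hence $\textit{CM}_r(y)\ge 0$ for every $y$, and $\textit{CM}_r^{\min}\ge 0$ holds unconditionally in the model. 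Your entropy-regularization remark is beside the point. Expected true utility increases with the probability of choosing alternative~$1$, and imitating a less responsible profile can only lower that probability.

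\textbf{The degenerate case $\textit{IE}_r=0$.} The same computation corrects your treatment of it. $\textit{IE}_r=0$ forces $p_0(y_k^\star)=p_r(y_k^\star)$ for all $k$, hence $\textit{CM}_r^{\min}=0$ as well. The right-hand side of \eqref{eq:t_deterrent} is then $0/0$ rather than $+\infty$, and one gets $\Delta U_r^t=0$. So manipulation is merely not beneficial. It is not strictly worse than honesty, which is what your combining step takes deterrence to mean.
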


\begin{proof}
Lemma~\ref{lem:decomposition} implies that manipulation is beneficial only if~$\textit{GM}_r^t-\textit{CM}_r^t>0$:
\begin{align}
0<\textit{GM}_r^t-\textit{CM}_r^t 
\le
R_r
\sqrt{\tfrac{1}{2}(t-1) \textit{IE}_r}, 
-
\textit{CM}_r^\text{min}
\end{align}
The last inequality is obtained by applying~\eqref{eq:bound-on-GM} and~\eqref{CMrmin}. Therefore, \eqref{eq:t_deterrent} ensures manipulation is deterred.
\end{proof}

\section{Impact of Preference Elicitation} 
\label{section:impact-PE}
\label{sec:impact-of-preference-elicitation}

We now examine how preference elicitation limits the agent’s ability to manipulate the regulator. It not only makes it harder for the agent to convince the regulator that their adopted profile is~$r>0$, but also substantially reduces the expected utility gained from manipulation.

We compare two preference-elicitation models. The first, \emph{Instantaneous Preference Elicitation} (\emph{InstPE}), is implicitly assumed in previous artificial intelligence works~\cite{dawkins2021limits,birmpas2020optimally,gan2019manipulating,gan2019imitative,nguyen2019imitative}. Under \emph{InstPE}, the regulator immediately accepts the agent’s simulated utility profile, so that $\hat k^t=r$ for all~$t$. This corresponds to strategic misreporting and allows an agent behaving according to profile~$r$ to immediately maximize their expected utility. 

We instead introduce \emph{Realistic Preference Elicitation} (\emph{RealPE}). Here, the regulator observes the agent’s choices over time and updates its belief~$\psi$ about the agent’s utility function at each timeslot~$t$. The agent must therefore establish the simulated profile through repeated interaction and may meanwhile be classified under another profile. This repeated simulation is costly because it may require choosing alternatives that the agent does not prefer.
To distinguish the two settings, we denote expectations by $\mathbb E^\textit{InstPE}$ and $\mathbb E^\textit{RealPE}$. Under \emph{InstPE}, the agent can directly compare the expected utilities of honest and manipulative behavior.

\begin{observation}
\label{obs:instPE}
In \textit{InstPE} (i)~the regulator proposes incentive~$y_k^\star$ if the agent declares profile~$k\in\mathcal K$ and (ii)~the agent always behaves according to their true profile~$0$, no matter the declared profile~$k$. Therefore, the operators~\eqref{eq:conditional-prob}-\eqref{eq:conditional-expect} are such that~$\mathbb P_k[\cdot] = \mathbb P_0[\cdot]$ and $\mathbb E_k[\cdot] = \mathbb E_0[\cdot]$.
\end{observation}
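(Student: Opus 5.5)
The observation follows by unpacking the definition of \emph{InstPE}. I will derive each part from the interaction protocol of \S\ref{sec:interaction}, specialized to instantaneous elicitation, and then read off the consequence for the operators \eqref{eq:conditional-expect}--\eqref{eq:conditional-prob}.

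For part (i), under \emph{InstPE} the inference step~\ref{it:inference} is replaced by $\hat k^t=k$ for the declared profile $k$, for every $t$. There is no dependence on the history $\mathbf h^{t-1}$. Step~\ref{it:incentive-calculation} then gives $y^t=\phi(\hat k^t)=y^\star_{k}$, where $y^\star_k$ is the value from Prop.~\ref{prop:y_star}. In particular $Y^t$ is degenerate at $y_k^\star$: the incentive is not random.

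For part (ii), manipulation in \emph{InstPE} is pure misreporting, which is the setting of \cite{gan2019manipulating} and related work. The declaration alone fixes the incentive, so the agent gains nothing by distorting their choices. Step~\ref{it:choice} is then performed with the agent's true utility $u_0$, so the choice probability is $p_0(y)$ from Lemma~\ref{lem:u-profiles} whatever $k$ was declared. The argument I would give is a one-line optimality check. Given any fixed $y$, the choice rule that maximizes the agent's expected true utility $\mathbb E[U_0(y)]$ within the logit family is the one driven by $u_0$. Any deviation only adds a nonnegative $\textit{CM}_r(y)$ and produces no change in $Y^t$.

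With (i) and (ii) established, the only thing that differs between "adopting profile $k$" and "adopting profile $0$" is the label sent to the regulator. The law of every choice $I^t$ given $y$ is $\text{Ber}(p_0(y))$ in both cases. Hence the conditional operators coincide on every event and random variable concerning behavior given the proposed incentive, which gives $\mathbb P_k[\cdot]=\mathbb P_0[\cdot]$ and $\mathbb E_k[\cdot]=\mathbb E_0[\cdot]$. The main subtlety is that this identity must be read as applying to quantities such as $U_0(y)$ at fixed $y$, not to $Y^t$ itself, whose law shifts to $\delta_{y_k^\star}$. I would state this scope explicitly. Its immediate consequence is $\textit{CM}_r(y)=0$ in \emph{InstPE}, so that $\Delta U_r^t=\textit{GM}_r^t$ by Lemma~\ref{lem:decomposition}.
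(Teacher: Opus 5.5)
Your argument is correct, and on its one delicate point it is more careful than the paper. The paper gives no argument beyond the statement itself. It treats (i) and (ii) as the defining features of \emph{InstPE} (pure misreporting, as in the prior work it cites) and reads off $\mathbb P_k[\cdot]=\mathbb P_0[\cdot]$, $\mathbb E_k[\cdot]=\mathbb E_0[\cdot]$, stated for all events.

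Taken literally, that global identity clashes with (i): $\mathbb P_r[Y^t=y_r^\star]=1$ while $\mathbb P_0[Y^t=y_r^\star]=0$ whenever $y_r^\star\neq y_0^\star$. Indeed, the proof of Lemma~\ref{lem:InstPE} lets the law of $Y^t$ differ under $r$ and $0$, and uses only $\mathbb E_r[U_0(y)]=\mathbb E_0[U_0(y)]$ at fixed $y$. If the identity held on events involving $Y^t$, $\textit{GM}_r^{t}$ would vanish and that lemma would be false. Your explicit restriction to behavioural quantities at a fixed incentive is the reading that keeps the subsequent lemma consistent, and it is the part of your proposal worth keeping.

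Where you depart from the paper is in deriving (ii) from an optimality check rather than taking it as part of the model. That step is unnecessary, and as phrased it is too strong. With the paper's noise-free $U_0$, a choice rule that picks alternative~$1$ with probability $p$ gives $\mathbb E[U_0(y)]=b_0+p\,(a_0+y-b_0)$. This is increasing in $p$, so over the full logit family $u_0$ is not optimal: fake utilities more ``responsible'' than $u_0$ would do better. It is optimal only among the profiles $r\in\mathcal K$, and only because of the ordering~\eqref{eq:ordering} together with $a_0>b_0$. These give $\textit{CM}_r(y)=(p_0(y)-p_r(y))(a_0+y-b_0)\ge 0$.

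If you keep the optimality remark, state it with that restriction and those hypotheses. Otherwise, simply cite (ii) as the defining assumption of \emph{InstPE}.
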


The following Lemma shows that the cost of manipulation is null under $\textit{InstPE}$ and the gain from manipulation is directly the difference between the expected true utilities, under incentive~$y_r^*$ of simulated profile~$r$ and under incentive~$y_0^\star$ associated to the true profile~$0$.

\begin{lemma}
\label{lem:InstPE}
\begin{align}
\nonumber
\textit{GM}_r^{t,\textit{InstPE}} = 
\mathbb E_0[U_0(y_r^\star)] -
\mathbb E_0[U_0(y_0^\star)],
&& \textit{CM}_r^t=0,
&& \forall t
\end{align}
\end{lemma}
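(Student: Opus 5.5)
The proof is a direct substitution of Observation~\ref{obs:instPE} into the definitions of Lemma~\ref{lem:decomposition}. The operators there are indexed by the profile the agent adopts. Under \emph{InstPE}, adopting (declaring) profile~$r$ affects only the regulator's inference, which becomes deterministic, and leaves the agent's choice behavior equal to the true one. I would therefore handle separately the factors that depend on the choice distribution and those that depend on the inferred profile.

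First, the cost of manipulation. By~\eqref{eq:CMry}, $\textit{CM}_r(y)=\mathbb E_0[U_0(y)]-\mathbb E_r[U_0(y)]$ for fixed~$y$. The random variable $U_0(y)$ takes value $a_0+y$ with the probability that the agent chooses alternative~$1$, and $b_0$ otherwise. By Observation~\ref{obs:instPE}(ii), this probability is $p_0(y)$ whatever profile is declared, so $\mathbb E_r[U_0(y)]=\mathbb E_0[U_0(y)]=p_0(y)(a_0+y)+(1-p_0(y))b_0$. Hence $\textit{CM}_r(y_k^\star)=0$ for every $k$, and averaging in~\eqref{eq:CM} gives $\textit{CM}_r^t=0$ for all $t$.

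Second, the gain. Under \emph{InstPE} we have $\hat k^t=r$ for all $t$ when the agent declares $r$, and $\hat k^t=0$ when honest. So $\mathbb P_r[Y^t=y_k^\star]=\mathbf 1\{k=r\}$ and $\mathbb P_0[Y^t=y_k^\star]=\mathbf 1\{k=0\}$, interpreted as distributions over the index~$k$. Substituting into~\eqref{eq:GM}, together with $\mathbb E_r[U_0(y_k^\star)]=\mathbb E_0[U_0(y_k^\star)]$ from the previous step, collapses the sum to $\mathbb E_0[U_0(y_r^\star)]-\mathbb E_0[U_0(y_0^\star)]$, independently of~$t$. As a consistency check, \eqref{eq:decomposition} then gives $\Delta U_r^t=\textit{GM}_r^t$, which is what \emph{InstPE} should produce.

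The main obstacle is a notational tension. Observation~\ref{obs:instPE} literally states $\mathbb P_k[\cdot]=\mathbb P_0[\cdot]$, but this identity can only refer to the agent's \emph{choice} distribution, not to the law of $Y^t$; otherwise $\textit{GM}_r^t$ would vanish trivially. I would resolve this explicitly, stating that the identification in Observation~\ref{obs:instPE}(ii) applies to events concerning $U_0(y)$ at fixed $y$, while the law of $Y^t$ is governed by part~(i). A second subtlety is the case where several profiles share the same incentive value, so that $y_r^\star=y_k^\star$ for some $k\neq r$. Summing indicator masses over the incentive values rather than over the indices handles this, and the result is unchanged.
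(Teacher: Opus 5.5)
Your proposal is correct and follows the same route as the paper's proof. You substitute the point-mass laws of $Y^t$ under \emph{InstPE} into~\eqref{eq:GM} to collapse the sum to $\mathbb E_r[U_0(y_r^\star)]-\mathbb E_r[U_0(y_0^\star)]$, then use Observation~\ref{obs:instPE}(ii) to replace $\mathbb E_r$ by $\mathbb E_0$; the same identity gives $\textit{CM}_r(y)=0$ and hence $\textit{CM}_r^t=0$. Your two added remarks are sound refinements the paper leaves implicit: the identification $\mathbb P_r=\mathbb P_0$ applies only to the choice distribution (the paper signals this only through the superscript in $\mathbb P_r^{\textit{InstPE}}$), and the sum should run over distinct incentive values when some $y_k^\star$ coincide.
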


\section{Case Study: Travel-Mode Incentives} \label{section:num-results}

We here validate the theoretical findings and show the deterring impact of Realistic Preference Elicitation (\emph{RealPE}) on manipulative users, compared to Instantaneous Preference Elicitation (\emph{InstPE}). We run a comprehensive Monte Carlo simulation campaign calibrated on transport mode choices in London (open dataset of \citet{LondonDataset}), which reports choices of travelers (agents), with $i=1$~(bike), $i=2$~(public transport (PT)), $i=3$~(car). We calibrated the parameters related to $3$ profiles~$k=0,1,2$ and the respective behavioral parameters, and ordered them according to the responsibility intensity. Without imposing it, the calibration returned profiles with preference toward biking, PT, and car. We then run Monte Carlo simulation over multiple seeds, to implement the stochastic choice of the agent, the profile inference (preference elicitation) of the regulator and the calculation of the proposed incentive. The important parameters assigned based on previous experimented incentive policies and CO\textsuperscript{2} emission evaluations, as well as values obtained via calibration and~\eqref{eq:y_star} are
\begin{scriptsize}
 $\theta=3.7$, $\{a_k\}_k=\{6.1, -6.8, -6.6\}$, $\{b_k\}_k=\{5.8, -11.7, -11.9\}$, $\{c_{\-neg 1,k}\}_k=\{2.8, 2.2, 2.6\}$, $\beta=0.6$, $\{y_k^\star\}_k=\{0, 0.1, 0.9\} (\pounds)$
\end{scriptsize}.

A detailed explanation of the dataset, the hyperparameters, the calibration method and the parameters resulting from calibration \S{}\ref{sec:dataset-and-calibration} of the of the Suppl.Material. A Python notebook that reproduces dataset downloading and preprocessing, calibration and simulation is provided in the Suppl.Material. It runs entirely in less than 2 minutes in Google Colab.

\begin{figure}
    \centering
    \includegraphics[width=1\linewidth]{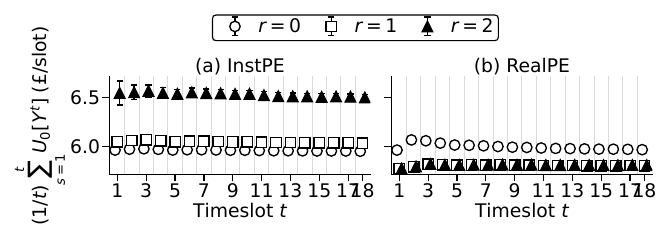}
    \caption{Time-averaged true agent utility. Averages across all the random realizations (corresponding each to a random seed) and 95\% confidence intervals are represented}
    \label{fig:fig_time_averaged_utility}
\end{figure}

\begin{figure}
    \centering
    \begin{subfigure}[t]{0.48\linewidth}
        \centering
        \includegraphics[width=\linewidth]{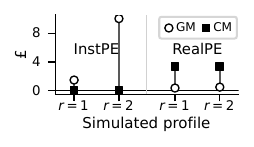}
        \caption{}
        \label{fig:manipulation}
    \end{subfigure}
    \hfill
    \begin{subfigure}[t]{0.48\linewidth}
        \centering
        \includegraphics[width=\linewidth]{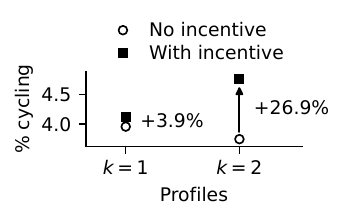}
        \caption{}
        \label{fig:behavioral-shift}
    \end{subfigure}

    \caption{(a) Cost of and gain from manipulation ($\textit{CM},\textit{GM}$); 
    (b) change in the percentage of cycling trips}
\end{figure}

Fig.~\ref{fig:manipulation} shows that Cost of Manipulation is higher than the Gain under \textit{RealPE}. Due to~\eqref{eq:decomposition}, time-averaged true utility~$(1/t)\sum_{s=0}^t U_0(Y^t)$ is higher when acting honestly under \textit{RealPE} (Fig.~\ref{fig:fig_time_averaged_utility}). Manipulation is thus deterred. Fig.~\ref{fig:fig_time_averaged_utility}-\ref{fig:manipulation} also shows that the situation would be inverted under the over-pessimistic \textit{InstPE} setting. Therefore, under realistic conditions, personalized incentive policies can be effective in fostering sustainable behavior, as shown in Fig.~\ref{fig:behavioral-shift} where we computed the induced shift in bike usage.


\section{Conclusion}
By explicitly modeling the regulator’s learning process in personalized incentive policies, we establish an upper bound on the gain achievable from manipulative agents. Interestingly, such a bound tightens as the regulator’s ability to distinguish profiles decreases. We characterize the resulting deterrence effect and show, through simulations calibrated on observed travel choices, how it improves policy effectiveness. Our analysis can serve as a pre-deployment test, helping regulators assess the policy's robustness to manipulation.

\newpage
\bibliography{aaai2027}

\appendix

\renewcommand{\thesubsection}{%
  \thesection\arabic{subsection}%
}
\renewcommand{\thesubsubsection}{%
  \thesubsection.\arabic{subsubsection}%
}

\begin{table*}[!h]
\section*{SUPPLEMENTAL MATERIAL:APPENDIX}

\section{Table of notation (2 pages)}

\centering
\caption{Notation used throughout the paper and the place where each symbol is first introduced.}
\label{tab:notation}
\footnotesize
\setlength{\tabcolsep}{4pt}
\renewcommand{\arraystretch}{1.06}
\begin{tabularx}{\textwidth}{
@{}
>{\raggedright\arraybackslash}p{0.18\textwidth}
>{\raggedright\arraybackslash}X
>{\raggedright\arraybackslash}p{0.25\textwidth}
@{}}
\toprule
\textbf{Symbol} & \textbf{Meaning} & \textbf{First introduced} \\
\midrule
\multicolumn{3}{@{}l}{\textit{Alternatives, profiles, and utilities}}\\[2pt]

$\mathcal I=\{1,\ldots,I\}$, $i$ &
Set of alternatives, its cardinality, and an alternative index; alternative $1$ is the most sustainable one. &
Sect.~\ref{sec:model}, first paragraph \\

$c_i$ &
Externality generated by choosing alternative $i$. &
Sect.~\ref{sec:model}, first paragraph \\

$y\geq0$ &
Incentive offered for choosing alternative $1$. &
Sect.~\ref{sec:model}, first paragraph \\

$v(i,y)$ &
Systematic utility from alternative $i$ under incentive $y$. &
Sect.~\ref{sec:model}, before Eq.~\eqref{eq:logit} \\

$\epsilon_i$, $\theta$ &
Random utility term for alternative $i$ and its Gumbel scale parameter. &
Sect.~\ref{sec:model}, before Eq.~\eqref{eq:logit} \\

$\mathbb P_i^v(y)$ &
Probability that an agent with systematic utility $v$ chooses alternative $i$ under incentive $y$. &
Eq.~\eqref{eq:logit} \\

$\mathcal K=\{0,\ldots,K-1\}$, $k$ &
Set of profiles, its cardinality, and a profile index. &
Sect.~\ref{sec:model}, before Lemma~\ref{lem:u-profiles} \\

$u_k(i,y)$ &
Systematic utility associated with profile $k$. &
Sect.~\ref{sec:model}, before Lemma~\ref{lem:u-profiles} \\

$\mathbf x_{i,k}$, $\boldsymbol\alpha_k$ &
Attribute vector of alternative $i$ for profile $k$, and the corresponding coefficient vector. &
Sect.~\ref{sec:model}, before Lemma~\ref{lem:u-profiles} \\

$a_k$, $b_k$ &
Systematic utility of alternative $1$ without incentives, and aggregate utility of the remaining alternatives, for profile $k$. &
Lemma~\ref{lem:u-profiles} \\

$p_k(y)$, $\sigma(\cdot)$ &
Probability that profile $k$ chooses alternative $1$, and the sigmoid function. &
Eq.~\eqref{eq:pk} \\

$a_k-b_k$ &
Responsibility intensity of profile $k$. &
Immediately before Eq.~\eqref{eq:ordering} \\

$\textit{ext}^k(y)$, $\textit{inc}^k(y)$ &
Expected externality and expected collected incentive when the agent adopts profile $k$. &
Lemma~\ref{lem:expectations} \\

$c_{\neg 1}$ &
Expected externality conditional on not choosing alternative $1$. &
Lemma~\ref{lem:expectations} \\

$\beta$ &
Weight assigned to incentive expenditures relative to externalities. &
Eq.~\eqref{eq:objective_function} \\

$\widetilde f^k(y)$ &
Expected generalized social cost under profile $k$ and incentive $y$. &
Eq.~\eqref{eq:objective_function} \\

$\widehat v(\cdot,\cdot)$, $\widehat v^t(\cdot,\cdot)$ &
Regulator's estimate of the utility profile adopted by the agent. &
After Eq.~\eqref{eq:objective_function} \\

\addlinespace[3pt]
\multicolumn{3}{@{}l}{\textit{Sequential interaction and optimal incentives}}\\[2pt]

$t$, $s$ &
Current timeslot and a generic preceding-timeslot index. &
Step~\ref{it:inference} of Sect.~\ref{sec:interaction} \\

$\mathbf h^{t-1}$ &
Realized history of incentives and choices observed before timeslot $t$. &
Step~\ref{it:inference} of Sect.~\ref{sec:interaction} \\

$\widehat k^t$ &
Profile inferred by the regulator at timeslot $t$. &
Step~\ref{it:inference} of Sect.~\ref{sec:interaction} \\

$\psi(\cdot)$ &
Deterministic profile-inference rule. &
Step~\ref{it:inference} of Sect.~\ref{sec:interaction} \\

$y^t$, $i^t$ &
Incentive proposed and alternative chosen at timeslot $t$. &
Steps~\ref{it:incentive-calculation}--\ref{it:choice} of Sect.~\ref{sec:interaction} \\

$\phi(\cdot)$ &
Deterministic mapping from an inferred profile to an incentive. &
Step~\ref{it:incentive-calculation} of Sect.~\ref{sec:interaction} \\

$y_k^\star$ &
Optimal incentive associated with inferred profile $k$. &
Prop.~\ref{prop:y_star} \\

\bottomrule
\end{tabularx}
\end{table*}

\begin{table*}[!t]
\ContinuedFloat
\centering
\caption{Notation used throughout the paper (continued).}
\footnotesize
\setlength{\tabcolsep}{4pt}
\renewcommand{\arraystretch}{1.06}
\begin{tabularx}{\textwidth}{
@{}
>{\raggedright\arraybackslash}p{0.18\textwidth}
>{\raggedright\arraybackslash}X
>{\raggedright\arraybackslash}p{0.25\textwidth}
@{}}
\toprule
\textbf{Symbol} & \textbf{Meaning} & \textbf{First introduced} \\
\midrule
\multicolumn{3}{@{}l}{\textit{Manipulation analysis}}\\[2pt]

$0$, $r$ &
Agent's true profile and the profile imitated by a manipulative agent. &
Sect.~\ref{sec:manipulation} \\

$u_0(y)$ &
True realized utility under incentive $y$, expressed according to the selected alternative. &
\eqref{eq:u_0_y}\\

$\mathbf H^{t-1}$, $\widehat K^t$, $Y^t$ &
Random history, inferred profile, and proposed incentive at timeslot $t$. &
Def.~\ref{def:notation-of-the-random-variables} \\

$U_0(y)$, $U_0(Y^t)$ &
Random true utility under a fixed incentive $y$, and under the random incentive proposed at timeslot $t$. &
Def.~\ref{def:notation-of-the-random-variables} \\

$\mathbb E_k[\cdot]$, $\mathbb P_k[\cdot]$ &
Expectation and probability conditional on the agent adopting profile $k$. &
Eqs.~\eqref{eq:conditional-expect}--\eqref{eq:conditional-prob} \\

$Y_k^t$, $\widehat K_k^t$ &
Random incentive and inferred profile conditional on the agent adopting profile $k$. &
After Eqs.\eqref{eq:conditional-expect}-\eqref{eq:conditional-prob} \\

$\textit{CM}_r(y)$, $\textit{CM}_r^t$ &
Cost of Manipulation under incentive $y$, and its mean at timeslot $t$. &
Eqs.~\eqref{eq:CMry}--\eqref{eq:CM} \\

$\textit{GM}_r^t$ &
Gain from Manipulation at timeslot $t$ when imitating profile $r$. &
Eq.~\eqref{eq:GM} \\

$\Delta U_r^t$ &
Manipulation advantage: additional expected true utility obtained by imitating profile $r$. &
Eq.~\eqref{eq:decomposition} \\

$D_{\mathrm{KL}}(\cdot\Vert\cdot)$ &
Kullback--Leibler divergence. &
Eq.~\eqref{eq:IE_r} \\

$\textit{IE}_r^t$ &
Intensity of elicitation: distinguishability between the inferred-profile distributions induced by profiles $0$ and $r$. &
Eq.~\eqref{eq:IE_r} \\

$\textit{IP}_r^t$ &
Intensity of personalization: distinguishability between the corresponding incentive distributions. &
Eq.~\eqref{eq:IP_r} \\

$R_r$ &
Range of expected true utility over the incentives the regulator may propose. &
Theor.~\ref{theor:manipulation-gain-bound}, Eq.~\eqref{eq:R} \\

$\delta(\cdot,\cdot)$ &
Total-variation distance between probability distributions. &
Proof of Theor.~\ref{theor:manipulation-gain-bound} \\

$\textit{IE}_r$ &
Maximum one-step KL divergence between the choices induced by profiles $0$ and $r$. &
Eq.~\eqref{eq:gammar} \\

$\operatorname{Ber}(p)$ &
Bernoulli distribution with success probability $p$. &
Theor.~\ref{theor:max-shift}, Eq.~\eqref{eq:gammar} \\

$\mathbf H_k^{t-1}$, $\mathcal H^{s-1}$, $I^s$ &
Random history conditional on profile $k$, set of possible histories, and random alternative at timeslot $s$. &
Proof of Theor.~\ref{theor:max-shift} \\

$\textit{CM}_r^{\min}$ &
Minimum Cost of Manipulation over all incentives that may be proposed. &
Eq.~\eqref{CMrmin} \\

\addlinespace[3pt]
\multicolumn{3}{@{}l}{\textit{Preference-elicitation models}}\\[2pt]

$\textit{InstPE}$, $\textit{RealPE}$ &
Instantaneous and Realistic Preference Elicitation. &
Sect.~\ref{section:impact-PE} \\

$\mathbb E^{\textit{InstPE}}[\cdot]$, $\mathbb E^{\textit{RealPE}}[\cdot]$ &
Expectation under Instantaneous and Realistic Preference Elicitation. &
Sect.~\ref{section:impact-PE} \\

$Z^t:=\mathbf{1}\{I^t=1\}$, $z^t$ &
Binary random variable and its realization indicating whether the
sustainable alternative~$1$ is chosen. &
Appendix~\ref{section:preferenc-elicitation} \\

$L_k^t(\mathbf h^{t-1})$ &
Likelihood of observing history $\mathbf h^{t-1}$ under profile $k$. &
Eq.~\eqref{eq:likelihood} \\

$\ln L_k^t(\mathbf h^{t-1})$ &
Log-likelihood of observing history $\mathbf h^{t-1}$ under profile $k$. &
Eq.~\eqref{eq:loglikelihood} \\

\bottomrule
\end{tabularx}

\end{table*}

\clearpage

\section{Proofs} \label{proofs}
\subsubsection*{Proof of Lemma \ref{lem:u-profiles}}
\begin{lemma*}
Assuming utility is quasi-linear with respect to income,\footnote{
This is a standard assumption in econometrics (\cite{javaudin2023personalised}), which means that both $\boldsymbol\alpha_k \cdot \mathbf x_i$ and the incentive~$y$ are expressed in the same unit as income and that $y$ has an additive effect on utility.
}
utility function $u_k(\cdot,\cdot)$ can be written as
\begin{align}
\label{eq:u-profiles}
&u_k(i, y) =
    \begin{cases}
        a_k+y  & i=1 \\
        b_{k} & i\neq 1
    \end{cases}
;
\begin{array}{l}
 a_k 
 := \boldsymbol\alpha_k \cdot \mathbf x_{1,k};
 \\
 b_k  
 := \theta \ln 
 \sum\limits_{i>1} e^\frac{\boldsymbol\alpha_k\cdot \mathbf x_i}{\theta}
\end{array}
\\
&\text{Also, }
p_k(y):=\mathbb P_1^{u_k}(y) 
=
\sigma\left(\frac{a_k- b_k +y}{\theta}\right)
\\
\nonumber
&\text{where }
\sigma(\cdot) \text{ is the sigmoid}
\end{align}
\end{lemma*}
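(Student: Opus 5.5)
The plan is to reduce the multinomial logit over all $I$ alternatives to a binary choice between alternative~$1$ and the composite alternative ``not $1$'', using the log-sum (inclusive value) property of the Gumbel distribution.

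\textbf{Step 1: write the systematic utilities.} Quasi-linearity in income means the incentive enters additively, in the same monetary units as the base utility. It also enters only for alternative~$1$, since the regulator transfers $y$ only if $i^t=1$ (step~\ref{it:transfer} of \S\ref{sec:interaction}). Hence
\[
u_k(1,y)=\boldsymbol\alpha_k\cdot\mathbf x_{1,k}+y=a_k+y,
\qquad
u_k(i,y)=\boldsymbol\alpha_k\cdot\mathbf x_{i,k}\quad (i>1).
\]

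\textbf{Step 2: compute the probability of alternative~$1$.} Substitute these utilities into the logit formula~\eqref{eq:logit}. The denominator splits as
\[
e^{(a_k+y)/\theta}+\sum_{i>1}e^{\boldsymbol\alpha_k\cdot\mathbf x_{i,k}/\theta}.
\]
By definition of $b_k$, the sum over $i>1$ equals $e^{b_k/\theta}$. Therefore
\[
\mathbb P_1^{u_k}(y)=\frac{e^{(a_k+y)/\theta}}{e^{(a_k+y)/\theta}+e^{b_k/\theta}}.
\]
Dividing numerator and denominator by $e^{(a_k+y)/\theta}$ gives $1/(1+e^{-(a_k-b_k+y)/\theta})=\sigma((a_k-b_k+y)/\theta)$.

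\textbf{Step 3: justify the two-case form of $u_k$.} From the agent's decision ``$1$ versus not $1$'', the alternatives $i>1$ can be aggregated into one composite alternative with systematic utility $b_k$. Two facts support this:
\begin{itemize}
\item The maximum of independent Gumbel variables with scale~$\theta$ and locations $\boldsymbol\alpha_k\cdot\mathbf x_{i,k}$ is itself Gumbel with the same scale and location $\theta\ln\sum_{i>1}e^{\boldsymbol\alpha_k\cdot\mathbf x_{i,k}/\theta}=b_k$, up to the common mean-centering constant.
\item The probability of choosing alternative~$1$ is therefore exactly the binary logit between $a_k+y$ and $b_k$.
\end{itemize}
This is the sense in which $u_k(i,y)=b_k$ for $i\neq1$: it describes the agent's expected utility from the aggregated non-sustainable option.

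\textbf{Main obstacle.} The difficulty is interpretive rather than computational. Literally, $u_k(i,y)$ for $i>1$ is $\boldsymbol\alpha_k\cdot\mathbf x_{i,k}$, not $b_k$. I would therefore state explicitly that the lemma is to be read at the level of the binary decision.

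Two further points need care:
\begin{itemize}
\item The index range in $b_k$ must run over $i>1$ (as in the appendix statement), not $i>0$.
\item The Gumbel mean-shift constant $\theta\gamma$ cancels in the difference of utilities, so it does not affect the choice probabilities.
\end{itemize}
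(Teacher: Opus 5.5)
Your proposal is correct and follows essentially the same route as the paper: quasi-linearity gives $u_k(1,y)=a_k+y$, and the alternatives $i>1$ are aggregated into the inclusive value $b_k$. The paper justifies this aggregation through the expected maximum $\mathbb E[\max_{i>1}(u_k(i,y)+\epsilon_i)]=\theta\ln\sum_{i>1}e^{u_k(i,y)/\theta}$, and you through max-stability of the Gumbel distribution; in both, the identity $e^{b_k/\theta}=\sum_{i>1}e^{u_k(i,y)/\theta}$ then turns the logit formula into the sigmoid, and your remarks on reading the two-case form at the level of the binary decision and on the index range $i>1$ are apt refinements.
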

\begin{proof}
The quasi-linearity of the utility with respect to income implies that the systematic utility associated with alternative~$1$ is $u_k(1,y)=\boldsymbol\alpha_k \cdot \mathbf x_i+y$, and thus $u_k(1,y)=a_k+y$.
If the agent does not choose alternative~$1$, they will choose the best among the remaining alternatives~$i>1$. As explained in \S{}\ref{sec:model}, the random utility of the agent associated with alternative~$i$ is~$u_k(i,y)+\epsilon_i$, where~$\epsilon_i$ is a Gumbel distribution with scale parameter\footnote{The scale parameter is also called ``noise parameter''.}~$\theta$. As suggested by~\cite[\S{}1.8]{Strzalecki2025}, it is reasonable to assume that, when choosing among alternatives~$i>1$, the agent chooses the best among them. The expected utility in this case is
\begin{align}
b_k 
&:=
\mathbb E \left[u_k(i,y) | i > 1 \right]
=
\mathbb E \left[ \max_{i>1} [u_k(i,y)+\epsilon_i ] \right]
\\
&=
\theta \ln \sum_{i>1} e^\frac{v(i,y)}{\theta}
\end{align}
where the last equality is obtained via~\cite[Prop.~3.6]{Strzalecki2025}. Observe that we do not need to carry the specification of~$y$ in symbol~$b_k$, because no incentive is given to any alternative~$i>1$.

By simple calculus $e^\frac{b_k}{\theta}=\sum_{i>1} e^\frac{v(i,y)}{\theta}$. Therefore, probability~\eqref{eq:logit} can be written as:
\begin{align}
\mathbb P_1^{u_k}
&=
\frac{e^\frac{a_k+y}{\theta}}{e^\frac{a_k+y}{\theta} + \sum_{i>1} e^\frac{v(i,y)}{\theta}}
=
\frac{1}{1+e^\frac{b_k}{\theta}/e^\frac{a_k+y}{\theta}}
\\
&=
\frac{1}{1+e^{ -\frac{a_k+y-b_k}{\theta} } }
\end{align}
which is the sigmoid calculated in $\frac{a_k+y-b_k}{\theta}$.
\end{proof}

\subsubsection*{Proof of Lemma~\ref{lem:expectations}}
\begin{lemma*}
Let us consider an agent following a certain profile $k\in\mathcal K$. Then, the expectation of the \emph{externalities} caused by them and of the \emph{incentive collected} are, respectively:
\begin{align}
    \textit{ext}^k(y)
    &=
    p_k(y) \cdot c_1 +
    \left( 1- p_k(y) \right) \cdot c_{\neg 1,k};
    \nonumber
    \\
    \textit{inc}^k(y)
    &= p_k(y) \cdot y
    \nonumber
    \\
    \text{with } c_{\neg 1,k} 
    &:= \sum_{i>1} \frac{e^\frac{u_k(i,y)}{\theta} }{ \sum_{i'>1} e^\frac{u_k(i',y)}{\theta} } c_i 
    > c_1
\end{align}
\end{lemma*}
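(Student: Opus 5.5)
We prove the restated Lemma~\ref{lem:expectations} by conditioning on the chosen alternative, using the binary reduction of Lemma~\ref{lem:u-profiles}. Fix profile $k$ and incentive $y$, and let $I$ be the random alternative chosen. By Lemma~\ref{lem:u-profiles}, $\mathbb P(I=1)=p_k(y)$ and $\mathbb P(I\neq 1)=1-p_k(y)$. The regulator transfers $y$ only when $I=1$ (step~\ref{it:transfer} of \S\ref{sec:interaction}). Hence the collected incentive is the random variable $y\cdot\mathbf 1\{I=1\}$, and its expectation is immediately $\textit{inc}^k(y)=p_k(y)\,y$.

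For the externality, we apply the law of total expectation:
\begin{align}
\nonumber
\textit{ext}^k(y)=p_k(y)\,c_1+(1-p_k(y))\,\mathbb E[c_I\mid I\neq 1].
\end{align}
It then remains to identify $\mathbb E[c_I\mid I\neq 1]$ with $c_{\neg 1,k}$. We compute $\mathbb P(I=i\mid I\neq 1)$ for $i>1$ from the logit formula~\eqref{eq:logit} with $v=u_k$. Dividing $e^{u_k(i,y)/\theta}/\sum_{i'}e^{u_k(i',y)/\theta}$ by $1-p_k(y)=\sum_{i'>1}e^{u_k(i',y)/\theta}/\sum_{i'}e^{u_k(i',y)/\theta}$, the common denominator cancels. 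This leaves exactly the weights $e^{u_k(i,y)/\theta}/\sum_{i'>1}e^{u_k(i',y)/\theta}$, which is the independence-of-irrelevant-alternatives property of the logit. Since no incentive is attached to alternatives $i>1$, the utilities $u_k(i,y)=\boldsymbol\alpha_k\cdot\mathbf x_{i,k}$ do not depend on $y$. Consequently $c_{\neg 1,k}$ is a well-defined constant in $y$, which the notation implicitly requires.

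Finally, we need the strict inequality $c_{\neg 1,k}>c_1$. The quantity $c_{\neg 1,k}$ is a convex combination of $\{c_i\}_{i>1}$ with strictly positive logit weights. The alternatives are sorted by increasing externality with $1$ the most sustainable, so $c_i\ge c_1$ for all $i$. Strictness is the only delicate point. It requires at least one $i>1$ with $c_i>c_1$, i.e., that alternative $1$ is strictly the most sustainable, as the model's motivation assumes. Given that, positivity of all logit weights makes the average strictly exceed $c_1$. This step, making explicit the strict-ordering assumption $c_1<c_i$ for $i>1$ (or at least for one $i$), is the main obstacle. Everything else is routine conditioning.
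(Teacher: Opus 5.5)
Your proposal is correct and follows the paper's proof. Like the paper, you condition on whether alternative $1$ is chosen via the law of total expectation, identify the conditional distribution over $i>1$ as the restricted logit weights (independent of $y$), and compute the expected incentive the same way. Your convex-combination argument for the strict inequality $c_{\neg 1,k}>c_1$ fills in a step the paper only asserts, and you correctly note that it needs at least one $i>1$ with $c_i>c_1$.
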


\begin{proof}
The result is obtained by applying the law of total expectation:
\begin{align}
\nonumber
\textit{ext}^k(y)
&=
\mathbb P_1^{v_k}(y) \mathbb E \big(\textit{ext}(y)| \text{ agent adopts }k, i=1 \big) 
\\ \nonumber
&+
\left( 1- \mathbb P_1^{u_k}(y) \right) 
\mathbb E \big(\textit{ext}(y)| \text{ agent adopts }k, i\neq 1 \big) 
\\ \nonumber
&= 
p_k(y)\cdot c_1
\\
&+
\left( 1- p_k(y) \right) \cdot
\sum_{i>1} \frac{e^\frac{u_k(i,y)}{\theta} }{ \sum_{i'>1 e^\frac{u_k(i',y)}{\theta}} } c_i
\end{align}

The conditional distribution among the alternatives $i>1$ does not
depend on $y$, because the incentive modifies only the utility of
alternative~$1$.

The expected incentive collected is computed similarly.

\end{proof}

\begin{lemma*}
    The estimated expectation of the generalized social cost $\tilde f^k(y):[0,+\infty[ \to \mathbb R$ has a unique minimum. Let~$\hat y_k$ be the root of
    \begin{align}  
    \label{eq:g_root}
    g^k(y) &:= y + \theta + \theta e^{(y+a_k - b_k)/\theta}-\frac{c_{\neg 1,k} - c_1}{\beta}
    \end{align}
    Then,
    \begin{align}
    \text{if } &\frac{c_{\neg 1,k} -c_1}{\beta} > \theta(1+e^{(a_k - b_k)/\theta})
    \label{eq:cond-y-pos}
    \\
    & \text{the min of }\tilde f^k:[0,+\infty[ \to \mathbb R\text{ is at }\hat y_k>0
    \\
    \text{else} & \text{ such a minimum is at }y=0
    \label{eq:minimum-at-zero}
    \end{align}
\end{lemma*}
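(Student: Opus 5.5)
We will study the one-dimensional function $\tilde f^k(y)=c_{\neg1,k}+p_k(y)\,(c_1-c_{\neg1,k})+\beta\,y\,p_k(y)$ on $[0,+\infty[$. This expression follows from Lemma~\ref{lem:expectations} and~\eqref{eq:objective_function}, using that $c_{\neg1,k}$ does not depend on $y$. Write $D:=c_{\neg1,k}-c_1>0$ and $z:=(y+a_k-b_k)/\theta$, so that $p_k=\sigma(z)$ by Lemma~\ref{lem:u-profiles}. Using $\sigma'=\sigma(1-\sigma)$ and $dz/dy=1/\theta$, we differentiate:
\begin{align*}
\frac{d\tilde f^k}{dy}
=\frac{\sigma(1-\sigma)}{\theta}\left(-D+\beta y\right)+\beta\sigma
=\frac{\beta\,\sigma(1-\sigma)}{\theta}\left(y-\frac{D}{\beta}+\frac{\theta}{1-\sigma}\right).
\end{align*}
Since $1/(1-\sigma(z))=1+e^{z}$, the bracket equals $y+\theta+\theta e^{(y+a_k-b_k)/\theta}-D/\beta=g^k(y)$. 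Hence
\begin{align*}
\frac{d\tilde f^k}{dy}=\frac{\beta\,p_k(y)(1-p_k(y))}{\theta}\,g^k(y).
\end{align*}
The prefactor is strictly positive for every $y$, because $\beta>0$, $\theta>0$ and $p_k\in(0,1)$. Therefore the sign of the derivative is exactly the sign of $g^k$.

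Next we look at $g^k$ itself. It is strictly increasing, since $(g^k)'(y)=1+e^{(y+a_k-b_k)/\theta}>0$, and it tends to $+\infty$ as $y\to\infty$. Its value at the origin is $g^k(0)=\theta(1+e^{(a_k-b_k)/\theta})-D/\beta$.

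There are two cases.
\begin{itemize}
\item If condition~\eqref{eq:cond-y-pos} holds, then $g^k(0)<0$. By continuity and strict monotonicity, $g^k$ has a unique root $\hat y_k>0$, with $g^k<0$ on $[0,\hat y_k[$ and $g^k>0$ on $]\hat y_k,\infty[$. So $\tilde f^k$ strictly decreases and then strictly increases, and $\hat y_k$ is the unique global minimizer on $[0,+\infty[$.
\item Otherwise $g^k(0)\ge 0$, so $g^k>0$ on $]0,\infty[$. Then $\tilde f^k$ is strictly increasing on $[0,\infty[$, and its unique minimum is at $y=0$, which gives~\eqref{eq:minimum-at-zero}.
\end{itemize}
Uniqueness in both cases comes from strict monotonicity on each side of the critical point, so no convexity argument is needed. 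It is convenient to also connect this to Prop.~\ref{prop:y_star}. Setting $w:=(D/\beta-\theta-\hat y_k)/\theta$, the equation $g^k(\hat y_k)=0$ becomes $w=e^{(a_k-b_k)/\theta+D/(\theta\beta)-1}e^{-w}$, i.e.\ $we^{w}=e^{(a_k-b_k)/\theta+D/(\theta\beta)-1}$, so $w=W(\cdot)$ and $\hat y_k=\frac{D}{\beta}-\theta\left(1+W(\cdot)\right)$. This matches the closed form.

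The main obstacle is the algebraic simplification of the derivative into the factored form above. The key identity there is $1/(1-\sigma(z))=1+e^{z}$, and getting this rewrite exactly right is what makes $g^k$ appear. Beyond that step, one must keep in mind the index dependence of $c_{\neg1,k}$, which is harmless because it is constant in $y$. The rest of the argument is monotonicity bookkeeping.
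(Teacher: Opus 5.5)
Your proposal is correct and follows essentially the same route as the paper: you write $(\tilde f^k)'$ as the strictly positive factor $\beta\,p_k(y)(1-p_k(y))/\theta$, which is exactly the paper's $\beta e^{(y+a_k+b_k)/\theta}/\bigl(\theta(e^{(a_k+y)/\theta}+e^{b_k/\theta})^2\bigr)$, times $g^k(y)$, and then use the strict monotonicity of $g^k$ and the sign of $g^k(0)$ to split into the two cases. Your sigmoid-identity factorization is a tidier way to reach the same formula, and your else-branch ($g^k(0)\ge 0$) treats the equality case more precisely than the paper, which asserts $\hat y_k<0$ there when in fact $\hat y_k=0$ is possible.
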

\begin{proof}
    The estimated expectation of the generalized social cost, which the regulator aims to minimize, is given by \eqref{eq:objective_function}. By Lemmas~\ref{lem:expectations} and~\ref{lem:u-profiles} this is 
    \begin{align}
    \tilde f^k(y)
    &=
    \frac{
    e^{(a_k+y)/\theta}
    }{
    e^{(a_k+y)/\theta}+e^{b_k/\theta}
    }
    \left(c_1+\beta y\right)
    \\&+
    \frac{
    e^{b_k/\theta}
    }{
    e^{(a_k+y)/\theta}+e^{b_k/\theta}
    }
    c_{\neg1,k}.
    \label{eq:expanded-profile-objective}
    \end{align}
    By taking its derivative, we get 
    \begin{align}
    \nonumber
    &   
    (\tilde f^k(y))'
    \\
    \nonumber
    &
    = 
    \frac{
    \beta ye^{(y+a_k + b_k)/\theta} + \beta\theta e^{(y+a_k + b_k)/\theta} + \beta\theta e^{2(y+a_k)/\theta}
    }{
    \theta (e^{(a_k+y)/\theta} + 
        e^{b_k/\theta})^2
    } 
    \\ 
    & + \frac{c_1 e^{(y+a_k + b_k)/\theta}-c_{\neg 1,k} e^{(y+a_k + b_k)/\theta}}{\theta (e^{(a_k+y)/\theta} + e^{b_k/\theta})^2}
    \end{align}

    To find the extrema we take $(\tilde f^k(y))'=0$ and, thus, we obtain
    \begin{align*}
         \beta ye^{(y+a_k + b_k)/\theta} + \beta\theta e^{(y+a_k + b_k)/\theta} + \beta\theta e^{2(y+a_k - b_k)/\theta} \\ -(c_{\neg 1,k}-c_1)e^{(y+a_k - b_k)/\theta} = 0 \nonumber
    \end{align*}
    Dividing by $\beta e^{(y+a_k+b_k)/\theta}$, we obtain
    \begin{equation}
        \iff g^k(y) := y + \theta + \theta e^{(y+a_k - b_k)/\theta}-\frac{c_{\neg 1,k} - c_1}{\beta} = 0
        \label{eq:def-of-hat-g}
    \end{equation}

    \noindent So the extremum of~$\tilde f^k(y):\mathbb R \to \mathbb R$ is at the root of~$g^k(y)$, which we will refer to as~$\hat y_k$. 

    Note that~$g^k(y)$ is an increasing function, which implies that its root~$\hat y_k$ is unique.

    Moreover, observe that~$(\tilde f^k)'$ is equal to~$g^k$ multiplied by a positive quantity, namely $(\beta e^{(y+a_k+b_k)/\theta})/(\theta (e^{(a_k+y)/\theta} + e^{b_k/\theta})^2)$. Therefore,~$g^k(y)>0$ implies~$(\tilde f^k(y))'>0$, i.e.,~$\tilde f^k(y)$ increasing, and~$g^k(y)<0$ implies~$(\tilde f^k(y))'<0$, i.e.,~$\tilde f^k(y)$ decreasing. 

    For~$y<\hat y_k$, function~$\tilde f^k(y)$ is decreasing, since~$y<\hat y_k$ implies~$g^k(y)<g^k(\hat y_k)$, i.e.~$g^k(y)<0$. For~$y>\hat y_k$, function~$\tilde f^k(y)$ is increasing, since~$y>\hat y_k$ implies ~$g^k(y)>g^k(\hat y_k)$, i.e.~$g^k(y)>0$. The root of $g^k$ is, thus, the minimum of $\tilde f^k(y)$ on $\mathbb R \to \mathbb R$.

    Suppose that the root~$\hat y_k$ of~$g^k$ is non-positive. In this case, in the interval~$[0,+\infty[$, function~$\tilde f^k(y)$ is always increasing. Its minimum is thus at~$0$. Otherwise, the minimum of~$\tilde f^k(y):[0,+\infty[\rightarrow \mathbb R$ is the root~$\hat y_k$ of~$g^k$.

    Condition~\eqref{eq:cond-y-pos} determines the sign of the root~$\hat y_k$. Indeed

    \begin{align}
    \text{If }\frac{c_{\neg 1,k}-c_1}{\beta} > \theta(1+e^{(a_k - b_k)/\theta}),
    \\
    \text{ then }
    0 > \theta(1+e^{(a_k - b_k)/\theta}) - \frac{c_{\neg 1,k}-c_1}{\beta}
    \\
    \Rightarrow
    g^k(\hat y_k) > g^k(0)
    \label{eq:inequality-hat-g}
    \end{align}
To obtain the last implication, we simply use the definition of~$g^k(y)$~\eqref{eq:def-of-hat-g} (replacing~$y=0$) and the fact that it is zero at its root, i.e., $g^k(\hat y_k)=0$.

Since~$g^k(y)$ is increasing, equation~\eqref{eq:inequality-hat-g} implies that~$\hat y_k\ge 0$, and so the minimum of~$\tilde f^k:[0,+\infty[\rightarrow \mathbb R$ corresponds to~$\hat y_k$. On the contrary, if condition~\eqref{eq:cond-y-pos} does not hold, the root~$\hat y_k<0$ and the minimum of~$\tilde f^k:[0,+\infty[\rightarrow \mathbb R$ is~$0$.
\end{proof}

\subsubsection*{Proofs of Proposition \ref{prop:y_star} and Corollary \ref{cor:incentive-comparison}}
\begin{prop*} \textbf{\ref{prop:y_star}.}
    There is a unique optimal incentive for profile $k\in\mathcal K$:
    \begin{equation}
        y_k^\star=\max\left(\frac{c_{\neg 1,k}-c_1}{\beta}-\theta \left(1+W(e^{\frac{a_k-b_k}{\theta}+\frac{c_{\neg 1,k}-c_1}{\theta \beta}-1}) \right), 0\right)
    \end{equation}
    where W(.) is the Lambert's W-function.
\end{prop*}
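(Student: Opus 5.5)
The plan is to build on the unnamed lemma just proved in the supplement. That lemma says $\tilde f^k$ has a unique minimizer on $[0,+\infty[$. The minimizer is either the unique root $\hat y_k$ of $g^k$ in \eqref{eq:g_root}, when condition \eqref{eq:cond-y-pos} holds, or $y=0$ otherwise. Uniqueness of $y_k^\star$ therefore comes for free. What remains is (i) a closed form for $\hat y_k$ via the Lambert $W$ function, and (ii) a check that the $\max(\cdot,0)$ reproduces the case split.

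For (i), set $D:=\frac{c_{\neg1,k}-c_1}{\beta}$ and $\Delta_k:=a_k-b_k$. Then $g^k(y)=0$ reads
\[
\theta e^{(y+\Delta_k)/\theta}=D-\theta-y .
\]
Substitute $w:=\frac{D-\theta-y}{\theta}$, so that $y=D-\theta-\theta w$. The equation becomes
\[
w=\exp\!\Big(\tfrac{D-\theta-\theta w+\Delta_k}{\theta}\Big)=e^{\frac{\Delta_k}{\theta}+\frac{D}{\theta}-1}\,e^{-w},
\]
that is,
\[
w e^{w}=e^{\frac{\Delta_k}{\theta}+\frac{c_{\neg1,k}-c_1}{\theta\beta}-1}.
\]
The right-hand side is strictly positive. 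On $(0,\infty)$ the map $w\mapsto we^w$ is a bijection, and the principal branch $W$ is the unique real solution there. So $w=W\big(e^{\Delta_k/\theta+D/\theta-1}\big)$, and hence
\[
\hat y_k=D-\theta\Big(1+W\big(e^{\frac{\Delta_k}{\theta}+\frac{D}{\theta}-1}\big)\Big).
\]
I should remark that the branch choice is forced. Since $e^{(y+\Delta_k)/\theta}>0$, any root must have $w>0$, so the lower branch $W_{-1}$ never arises.

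For (ii), I will show that $\hat y_k>0$ is equivalent to condition \eqref{eq:cond-y-pos}. This follows because $g^k$ is strictly increasing. Indeed, $\hat y_k>0$ iff $g^k(0)<0$, and $g^k(0)<0$ iff $D>\theta(1+e^{\Delta_k/\theta})$, which is exactly \eqref{eq:cond-y-pos}. So when the condition holds, the minimizer is $\hat y_k=\max(\hat y_k,0)$. When it fails, $\hat y_k\le 0$ and the minimizer is $0=\max(\hat y_k,0)$. This gives the stated formula. For Corollary~\ref{cor:incentive-comparison}, note that $W$ is increasing and its argument increases with $\Delta_k$. Hence $\hat y_k$ is decreasing in $a_k-b_k$, and taking the max with $0$ preserves monotonicity.

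The main obstacle is the algebraic substitution that isolates $we^w$. The shift $D-\theta$ and the $-1$ in the exponent must come out exactly as in the statement, so I would verify the substitution by plugging $\hat y_k$ back into $g^k$ using $e^{-W(x)}=W(x)/x$. A secondary point needs care. The statement writes $c_{\neg1}$ in the main text but $c_{\neg1,k}$ in the appendix. I will work with the profile-indexed version and note that the main-text version is the special case where it is constant across profiles.
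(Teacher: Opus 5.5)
Your proposal is correct and follows the paper's own proof: it takes uniqueness and the case split from the preceding lemma, solves $g^k(y)=0$ via the substitution $w=\frac{D-\theta-y}{\theta}$ to reach $we^{w}=e^{\Delta_k/\theta+D/\theta-1}$, and applies $W$. Your explicit checks that the principal branch is forced, and that $\hat y_k\le 0$ exactly when \eqref{eq:cond-y-pos} fails (which justifies the $\max(\cdot,0)$), fill in steps the paper leaves implicit.
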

\begin{proof}
    The estimated expectation of the generalized social cost, that the regulator aims to minimize, is given by \eqref{eq:objective_function}. By \eqref{eq:pk} this is  
    \begin{align} 
    \nonumber
    \tilde f^k(y) 
    &= \frac{e^{(a_k +y)/\theta}} {e^{(a_k +y)/\theta}+e^{b_k/\theta}} 
    (c_1+ \beta y) 
    \\ &
    + \frac{e^{b_k/\theta}} {e^{(a_k +y)/\theta}+e^{b_k/\theta}} c_{\neg 1,k}
    \nonumber
    \end{align}
    
    Let $y_k^\star=0$ be point at which~$\tilde f^k(y):[0,+\infty[ \rightarrow \mathbb R$ has its minimum (which is unique - see~\eqref{eq:g_root}).

    Let us start by assuming that~\eqref{eq:cond-y-pos} holds. In this case, $y_k^\star$ is the the root of $g^k(y)$ and is positive (see \eqref{eq:g_root}). Therefore
    \begin{align*}
    y_k^\star + \theta + \theta e^{(a_k - b_k+y_k^\star)/\theta}-\frac{c_{\neg 1,k}-c_1}{\beta} = 0
    \\
        \Rightarrow \frac{y_k^\star}{\theta} + 1 + e^{(a_k - b_k+y_k^\star)/\theta} -\frac{c_{\neg 1,k}-c_1}{\theta\beta} = 0 
    \\
        \Rightarrow e^{(a_k - b_k+y_k^\star)/\theta} = \frac{c_{\neg 1,k}-c_1}{\theta\beta} -\frac{y_k^\star}{\theta} - 1
    \end{align*}
    
    We substitute $u=\frac{c_{\neg 1,k}-c_1}{\theta\beta} -\frac{y_k^\star}{\theta} - 1$ and, by inserting $u$ and $\frac{y_k^\star}{\theta} = \frac{c_{\neg 1,k}-c_1}{\theta\beta} -1 -u$ into the last equation, we  get
    \begin{align*}
       e^{\frac{a_k - b_k}{\theta}+\frac{c_{\neg 1,k}-c_1}{\theta\beta} -1 -u} = u
    \\
        \Rightarrow e^{\frac{a_k - b_k}{\theta}+\frac{c_{\neg 1,k}-c_1}{\theta\beta} -1 } = ue^{u}
    \end{align*}
    
    Now we can apply the Lambert W-function on both sides
    \begin{align*}
      W(e^{\frac{a_k - b_k}{\theta}+\frac{c_{\neg 1,k}-c_1}{\theta\beta} -1 }) = W(ue^{u}) = u
    \end{align*}
    By reinserting $u=\frac{c_{\neg 1,k}-c_1}{\theta\beta} -\frac{y_k^\star}{\theta} - 1$, we obtain
    \begin{align*}
    W(e^{\frac{a_k - b_k}{\theta}+\frac{c_{\neg 1,k}-c_1}{\theta\beta} -1 }) = \frac{c_{\neg 1,k}-c_1}{\theta\beta} -\frac{y_k^\star}{\theta} - 1
    \\
    \Rightarrow y_k^\star = \frac{c_{\neg 1,k}-c_1}{\beta} - \theta - \theta \cdot W(e^{\frac{a_k - b_k}{\theta}+\frac{c_{\neg 1,k}-c_1}{\theta\beta} -1 })
    \end{align*}

    Summarizing, we obtain that, if~\eqref{eq:cond-y-pos} holds, then
    \begin{equation}
    y_k^\star =
    \frac{c_{\neg 1,k}-c_1}{\beta} - \theta \cdot\left(1+ W(e^{\frac{a_k - b_k}{\theta}+\frac{c_{\neg 1,k}-c_1}{\theta\beta} -1 }) \right)
    >0    
    \end{equation}

    If, instead, \eqref{eq:cond-y-pos} does not hold, then~$ y_k^\star=0$ (see~\eqref{eq:minimum-at-zero}). Therefore, 
    \begin{equation}
    y_k^\star =
    \max\left(
    \frac{c_{\neg 1,k}-c_1}{\beta} - \theta \cdot\left(1+ W(e^{\frac{a_k - b_k}{\theta}+\frac{c_{\neg 1,k}-c_1}{\theta\beta} -1 }) \right), 0
    \right)
    \end{equation}

\end{proof}

\begin{corollary*} \textbf{\ref{cor:incentive-comparison}.}
    Let $y^*_k$ be the optimal incentives as defined in Proposition \ref{prop:y_star}. Then $y^*_k$ is non-increasing in $a_k-b_k$.
\end{corollary*}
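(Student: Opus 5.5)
The plan is to treat $y_k^\star$ as a function of the single scalar $d:=a_k-b_k$, holding $\theta$, $\beta$, $c_1$ and $c_{\neg 1,k}$ fixed. By Proposition~\ref{prop:y_star},
\begin{equation*}
y^\star(d)=\max\bigl(h(d),0\bigr),\qquad h(d):=\frac{c_{\neg 1,k}-c_1}{\beta}-\theta\Bigl(1+W\bigl(e^{\frac{d}{\theta}+\frac{c_{\neg 1,k}-c_1}{\theta\beta}-1}\bigr)\Bigr).
\end{equation*}
Since $x\mapsto\max(x,0)$ is non-decreasing, it suffices to show that $h$ is non-increasing in $d$. The composition $y^\star=\max(h,0)$ then inherits the monotonicity.

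For $h$, I will use the monotonicity of the principal branch of Lambert's $W$. On $[0,+\infty[$, $W$ is the inverse of $u\mapsto ue^{u}$, which is strictly increasing for $u\ge 0$. Hence $W$ is strictly increasing there, with $W'(x)=W(x)/\bigl(x(1+W(x))\bigr)>0$ for $x>0$. The argument $e^{d/\theta+(c_{\neg 1,k}-c_1)/(\theta\beta)-1}$ is positive and strictly increasing in $d$, because $\theta>0$. So $W(\cdot)$ of it is increasing in $d$, and multiplying by $-\theta<0$ and adding constants makes $h$ decreasing. This gives the claim.

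Alternatively, one can argue without the closed form, via the root characterization of the preceding lemma. The positive optimum $\hat y_k$ solves $g^k(y)=y+\theta+\theta e^{(y+d)/\theta}-\frac{c_{\neg 1,k}-c_1}{\beta}=0$. The function $g^k$ is increasing in both $y$ and $d$, so by the implicit function theorem
\begin{equation*}
\frac{d\hat y_k}{dd}=-\frac{e^{(y+d)/\theta}}{1+e^{(y+d)/\theta}}<0.
\end{equation*}
In addition, the threshold condition~\eqref{eq:cond-y-pos} becomes harder to meet as $d$ grows, so the zero branch is entered exactly when $h$ crosses $0$. This cross-check confirms the result.

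The main obstacle is a subtle one. In the paper, $c_{\neg 1,k}$ is indexed by $k$ and could vary across profiles together with $a_k-b_k$. The statement only makes sense with $c_{\neg 1,k}$ held fixed, or equal across profiles as in the main-text version of Proposition~\ref{prop:y_star}. I would state this explicitly: monotonicity holds with respect to $a_k-b_k$ \emph{ceteris paribus}. Apart from that, the only care needed is to confirm that the principal branch of $W$ is the one used, which holds because the argument is positive.
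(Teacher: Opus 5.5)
Your main argument is correct and follows the paper's proof: both use the closed form of Proposition~\ref{prop:y_star}, the fact that the exponential argument is positive and increasing in $a_k-b_k$, the monotonicity of $W$ on positive arguments, and then the truncation at zero, which you handle more cleanly by noting that $x\mapsto\max(x,0)$ is non-decreasing. Your caveat that $c_{\neg 1,k}$ must be held fixed (ceteris paribus) is a valid point that the paper leaves implicit, and your implicit-function cross-check via $g^k$ is correct but not needed.
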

\begin{proof}
    By Proposition \ref{prop:y_star},
    \begin{equation}
        y_k^\star=\max\left(\frac{c_{\neg 1,k}-c_1}{\beta}-\theta \left(1+W(e^{\frac{a_k-b_k}{\theta}+\frac{c_{\neg 1,k}-c_1}{\theta \beta}-1}) \right), 0\right) \nonumber
    \end{equation}
    First, note that $e^{\frac{a_k-b_k}{\theta}+\frac{c_{\neg 1,k}-c_1}{\theta \beta}-1}$ is increasing with $a_k-b_k$, because of the term $\frac{a_k-b_k}{\theta}$ inside the exponential. 
    Following this, 
    $$ W(e^{\frac{a_k-b_k}{\theta}+\frac{c_{\neg 1,k}-c_1}{\theta \beta}-1}) $$ is increasing with $a_k-b_k$,
    since if the argument of the Lambert W function is positive and increasing, the Lambert W function is increasing as well.
    It follows that 
    $$-\theta \left(1+W(e^{\frac{a_k-b_k}{\theta}+\frac{c_{\neg 1,k}-c_1}{\theta \beta}-1}) \right)$$
    is decreasing with $a_k-b_k$ and so is the whole expression
    \begin{align*}
    \frac{c_{\neg 1,k}-c_1}{\beta}-\theta \left(1+W(e^{\frac{a_k-b_k}{\theta}+\frac{c_{\neg 1,k}-c_1}{\theta \beta}-1}) \right)
    \end{align*}
    Due to the fact that $y_k^\star = 0$ if $a_k-b_k$ is sufficiently large (see \eqref{eq:cond-y-pos}), it follows that $y_k^\star$ is non-increasing with $a_k-b_k$, which concludes the proof.
\end{proof}

\subsubsection*{Proof of Proposition \ref{prop:bound-y-star}}

\begin{prop*} \textbf{\ref{prop:bound-y-star}.}
    The optimal incentive is upper-bounded, i.e., $$0 \leq y_k^\star < \frac{c_{\neg 1,k}-c_1}{\beta}.$$
\end{prop*}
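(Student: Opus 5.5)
The plan is to read both inequalities off the closed form of Proposition~\ref{prop:y_star}, splitting according to whether condition~\eqref{eq:cond-y-pos} holds. The lower bound $y_k^\star\ge 0$ needs no work: $y_k^\star$ is defined as a $\max(\cdot,0)$, or equivalently it is the minimizer of $\tilde f^k$ over $[0,+\infty[$.

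For the upper bound I will first treat the case in which~\eqref{eq:cond-y-pos} fails. Then $y_k^\star=0$, and it suffices to show $\frac{c_{\neg 1,k}-c_1}{\beta}>0$. This follows because $c_{\neg 1,k}>c_1$ by Lemma~\ref{lem:expectations} and $\beta>0$.

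In the case where~\eqref{eq:cond-y-pos} holds, I will write
\[
y_k^\star=\frac{c_{\neg 1,k}-c_1}{\beta}-\theta\bigl(1+W(z)\bigr),
\qquad
z=e^{\frac{a_k-b_k}{\theta}+\frac{c_{\neg 1,k}-c_1}{\theta\beta}-1}>0 .
\]
The claim then reduces to showing $\theta(1+W(z))>0$. Since $z>0$, the principal branch gives $W(z)>0$, so $1+W(z)>1$. The strict inequality follows provided $\theta>0$.

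As a cross-check that avoids the Lambert function, I can use the first-order condition~\eqref{eq:g_root} at the root: $g^k(\hat y_k)=0$. Rearranged, this gives
\[
\frac{c_{\neg 1,k}-c_1}{\beta}-\hat y_k=\theta\bigl(1+e^{(\hat y_k+a_k-b_k)/\theta}\bigr)>\theta>0,
\]
which yields the bound directly.

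The only delicate point is the degenerate value $\theta=0$, which the model formally allows ($\theta\ge 0$). In that case the logit expressions and the closed form are undefined. I would therefore state explicitly that $\theta>0$ is assumed, as it already is throughout for~\eqref{eq:logit} to make sense. The main care needed is simply keeping track of which branch of the $\max$ is active.
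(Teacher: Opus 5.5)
Your proof is correct and follows the paper's route: both use the closed form of Proposition~\ref{prop:y_star} to reduce the upper bound to $W(z)>0>-1$ for the positive argument $z$. Your explicit handling of the branch $y_k^\star=0$ (via $c_{\neg 1,k}>c_1$ and $\beta>0$) and of the assumption $\theta>0$ supplies steps that the paper's one-line ``it suffices to show'' leaves implicit.
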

\begin{proof}
    By Proposition \ref{prop:y_star}, it suffices to show that 
    \begin{equation}
        \frac{c_{\neg 1,k}-c_1}{\beta}-\theta (1+W(e^{\frac{a_k-b_k}{\theta}+\frac{c_{\neg 1,k}-c_1}{\theta \beta}-1}) < \frac{c_{\neg 1,k}-c_1}{\beta}
    \end{equation}
    i.e.,
    \begin{equation}
        -\theta (1+W(e^{\frac{a_k-b_k}{\theta}+\frac{c_{\neg 1,k}-c_1}{\theta \beta}-1}) < 0
    \end{equation}
    \begin{equation}
        \iff 1+W(e^{\frac{a_k-b_k}{\theta}+\frac{c_{\neg 1,k}-c_1}{\theta \beta}-1}) > 0
    \end{equation}
    \begin{equation}
        \iff W(e^{\frac{a_k-b_k}{\theta}+\frac{c_{\neg 1,k}-c_1}{\theta \beta}-1}) > -1
    \end{equation}
    This is true, since if the argument of the Lambert W function is positive, the Lambert W function is positive as well. 
\end{proof}

\subsubsection*{Proof of Lemma~\ref{lem:InstPE}}

\begin{lemma*}
\begin{align}
\nonumber
\textit{GM}_r^{t,\textit{InstPE}} = 
\mathbb E_0[U_0(y_r^\star)] -
\mathbb E_0[U_0(y_0^\star)],
&& \textit{CM}_r^t=0,
&& \forall t
\end{align}
\end{lemma*}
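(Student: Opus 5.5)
The plan is to work directly from the definitions in Lemma~\ref{lem:decomposition}, specialised to the InstPE setting through Observation~\ref{obs:instPE}. Under InstPE, part (ii) of that observation says the agent keeps choosing according to the true profile~$0$ whatever profile it declares. Hence $\mathbb E_r[\cdot]=\mathbb E_0[\cdot]$ and $\mathbb P_r[\cdot]=\mathbb P_0[\cdot]$ when these operators act on the \emph{choice} randomness, that is, on $U_0(y)$ for a fixed incentive~$y$. Part (i) says the regulator's inferred profile is deterministic and equal to the declared one: $\hat K^t=r$ when the agent declares~$r$, and $\hat K^t=0$ when it is honest. So the incentive distribution is degenerate, with $Y^t=y_r^\star$ almost surely under manipulation and $Y^t=y_0^\star$ almost surely under honesty.

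The first step is the cost term. For each fixed $y\in\{y_k^\star\}_k$, definition~\eqref{eq:CMry} gives $\textit{CM}_r(y)=\mathbb E_0[U_0(y)]-\mathbb E_r[U_0(y)]$. This equals $0$, because $U_0(y)$ takes values $a_0+y$ and $b_0$ with probabilities $p_0(y)$ and $1-p_0(y)$ under both operators. Averaging over $Y^t$ as in~\eqref{eq:CM} then gives $\textit{CM}_r^t=0$ for every~$t$.

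The second step is the gain term. I will substitute the degenerate laws into~\eqref{eq:GM}. We have $\mathbb P_r[Y^t=y_k^\star]=\mathbf 1\{k=r\}$ and $\mathbb P_0[Y^t=y_k^\star]=\mathbf 1\{k=0\}$, and $\mathbb E_r[U_0(y_k^\star)]=\mathbb E_0[U_0(y_k^\star)]$ by the first step. The sum therefore collapses to $\mathbb E_0[U_0(y_r^\star)]-\mathbb E_0[U_0(y_0^\star)]$, independently of~$t$. As a consistency check, the total manipulation advantage $\Delta U_r^t$ in~\eqref{eq:decomposition} equals this same difference, matching $\textit{GM}-\textit{CM}$.

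The only delicate point is notational. Observation~\ref{obs:instPE} literally writes $\mathbb P_k=\mathbb P_0$, yet the distribution of $Y^t$ must still depend on the declared profile. I will make explicit that the conditioning ``agent adopts profile $k$'' acts on the law of $Y^t$ through the declaration, while leaving the choice law of $U_0(y)$ unchanged. With that convention fixed, both identities are immediate substitutions; if $y_r^\star=y_0^\star$, the formula still holds and simply gives zero.
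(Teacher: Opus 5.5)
Your proof is correct and follows essentially the same route as the paper: you use the degenerate incentive laws under InstPE to collapse the sum in \eqref{eq:GM}, replace $\mathbb E_r$ by $\mathbb E_0$ via Observation~\ref{obs:instPE}(ii), and obtain $\textit{CM}_r(y)=0$ pointwise, hence $\textit{CM}_r^t=0$. Your remark that the identity $\mathbb P_r=\mathbb P_0$ must apply only to the choice randomness, and not to the law of $Y^t$, makes explicit a convention that the paper's proof uses only implicitly through the notation $\mathbb P_r^{\textit{InstPE}}$.
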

\begin{proof}
The following equalities follow from a straightforward application of Obs.~\ref{obs:instPE}:
\begin{align}
\textit{GM}_r^{t,\textit{InstPE}}
& = \sum_{k\in\mathcal K} \mathbb P_r^\textit{InstPE}[Y_t=y_k^\star] 
\mathbb E_r[U_0(y_k^\star)]
\\ &
-
\sum_{k\in\mathcal K} \mathbb P_0^\textit{InstPE}[Y_t=y_k^\star] 
\mathbb E_r[U_0(y_k^\star)]
\\ &= 
\mathbb E_r[U_0(y_r^\star)] -
\mathbb E_r[U_0(y_0^\star)] 
\\ &= 
\mathbb E_0[U_0(y_r^\star)] -
\mathbb E_0[U_0(y_0^\star)]
\end{align}
and
\begin{align}
\nonumber
\textit{CM}_r^\textit{InstPE}(y)
=
\mathbb E_0[U_0(y)]-\mathbb E_0[U_0(y)]=0
\implies
\textit{CM}_r^t = 0.
\end{align}
\end{proof}


\section{Preference Elicitation via a Maximum-Likelihood Estimator}
\label{section:preferenc-elicitation}

We now specify the deterministic profile-inference rule~$\psi(\cdot)$
used in the numerical experiments.
At every timeslot~$s$, the agent chooses among all alternatives
$i^s\in\mathcal I$ according to the multinomial-logit model in
Eq.~\eqref{eq:logit}.
For online preference elicitation, however, the regulator uses only
whether the incentivized alternative~$1$ was selected.
Define the binary random variable and its realization as
\begin{align}
Z^s
&:=\mathbf{1}\{I^s=1\},
&
z^s
&:=\mathbf{1}\{i^s=1\}.
\label{eq:binary-observation}
\end{align}
Thus, $z^s=1$ when the agent chooses alternative~$1$, whereas
$z^s=0$ aggregates all choices $i^s>1$. This binary observation
model is plausible in applications where the regulator need not
collect the agent's complete choice. In particular, the data-minimization
principle of the General Data Protection Regulation (GDPR) supports
limiting the collected information to what is necessary for the
specified purpose. Accordingly, an implementation may reveal only
whether the incentivized alternative was selected---the information
needed to determine eligibility for the incentive---rather than which
non-incentivized alternative was chosen.
Let
\[
\mathbf h^{t-1}
=
(y^1,i^1,\ldots,y^{t-1},i^{t-1})
\]
be the realized history before timeslot~$t$.
Although the complete alternatives~$i^s$ appear in the history,
the inference rule depends on them only through the binary
observations~$z^s$.
Conditional on candidate profile~$k$ and incentive~$y^s$, the
binary observation follows
\begin{align}
Z^s
\mid
\mathbf H^{s-1}=\mathbf h^{s-1}
\sim
\operatorname{Ber}\!\left(p_k(y^s)\right),
\label{eq:binary-choice-distribution}
\end{align}
where $p_k(y^s)$ is the probability of choosing alternative~$1$
defined in Lemma~\ref{lem:u-profiles}.

The likelihood of the binary observations induced by
$\mathbf h^{t-1}$ under candidate profile~$k$ is therefore
\begin{align}
L_k^t(\mathbf h^{t-1})
&:=
\prod_{s=1}^{t-1}
\left[p_k(y^s)\right]^{z^s}
\left[1-p_k(y^s)\right]^{1-z^s}
\notag
\\
&=
L_k^{t-1}(\mathbf h^{t-2})
\left[p_k(y^{t-1})\right]^{z^{t-1}}
\left[1-p_k(y^{t-1})\right]^{1-z^{t-1}},
\label{eq:likelihood}
\end{align}
with $L_k^1=1$.
The proposed incentives do not introduce an additional factor in
the likelihood because each $y^s$ is a deterministic function of
the preceding observed history.

To avoid numerical underflow, the implementation maintains the
log-likelihood
\begin{align}
\ln L_k^t(\mathbf h^{t-1})
&=
\sum_{s=1}^{t-1}
\left[
z^s\ln p_k(y^s)
+
(1-z^s)\ln\!\left(1-p_k(y^s)\right)
\right]
\notag
\\
&=
\ln L_k^{t-1}(\mathbf h^{t-2})
+
z^{t-1}\ln p_k(y^{t-1})
\notag
\\
&\quad
+
(1-z^{t-1})
\ln\!\left(1-p_k(y^{t-1})\right),
\label{eq:loglikelihood}
\end{align}
with $\ln L_k^1=0$.

Let the set of maximum-likelihood profiles be
\begin{align}
\mathcal M^t(\mathbf h^{t-1})
:=
\arg\max_{k\in\mathcal K}
\ln L_k^t(\mathbf h^{t-1}).
\end{align}
The regulator selects, among the profiles in
$\mathcal M^t(\mathbf h^{t-1})$, the profile associated with the
smallest optimal incentive.
If several such profiles have the same optimal incentive, the
smallest profile index is selected.
Consequently, the deterministic inference rule used in the
numerical experiments is
\begin{align}
\hat k^t
=
\psi(\mathbf h^{t-1})
:=
\min
\left\{
k\in\mathcal M^t(\mathbf h^{t-1})
:
y_k^\star
=
\min_{j\in\mathcal M^t(\mathbf h^{t-1})}
y_j^\star
\right\}.
\label{eq:argmax-logL}
\end{align}

This online Bernoulli likelihood must not be confused with the
multinomial likelihood used for the offline calibration in
Sect.~\ref{sec:dataset-and-calibration}.
Accordingly, the Bernoulli distributions in
Eq.~\eqref{eq:bernoulli} concern the binary variable~$Z^s$, rather
than the complete categorical alternative~$I^s$.


\section{Appendix to the Case Study}

\providecommand{\fnref}[1]{\textsuperscript{\ref{#1}}}

\subsection{Dataset and Calibration Procedure}
\label{sec:dataset-and-calibration}

\subsubsection{Data}

The London Passenger Mode Choice (LPMC) dataset combines trip diaries from the
London Travel Demand Survey with the routes, travel times, public-transport
fares, driving fuel costs, and congestion charges associated with walking,
bike, public transport (PT), and car.\footnote{\label{fn:Hillel2018LPMC}Tim
Hillel, Mohammed Z. E. B. Elshafie, and Ying Jin. 2018. ``Recreating Passenger
Mode Choice-Sets for Transport Simulation: A Case Study of London, UK.''
\emph{Proceedings of the Institution of Civil Engineers--Smart Infrastructure
and Construction} 171(1):29--42. See Sec.~2.1, pp.~30--33.
\url{https://doi.org/10.1680/jsmic.17.00018}.} The public release contains
81,086 trips collected between April 2012 and March 2015.\footnote{\label{fn:BiogemeLPMC}Tim
Hillel. 2019. \emph{London Passenger Mode Choice}. Transport and Mobility
Laboratory, EPFL. See Table~1, p.~3.
\url{https://transp-or.epfl.ch/documents/technicalReports/CS_LPMC.pdf}.}

We retain travelers aged at least 12, since they are assumed to be responsive
to monetary incentives, and trips between home and work, between home and
educational activities, or between home and another activity. These are the
home-based work, home-based education, and home-based other purposes in the
LPMC documentation\fnref{fn:BiogemeLPMC}. Only trips made by bike,
PT, or car are retained. For each trip, the following information concerning
the trip or the traveler is reported: the travel time associated with each
mode; out-of-pocket costs for PT and car; the number of PT interchanges;
possession of a driving licence; car ownership, measured as the number of cars
in the household divided by the number of people in the household; trip
distance; and the bus and rail portions of PT in-vehicle time. Total PT time is
the sum of access, bus, rail, and interchange time. Trips missing a quantity
required by the utility specification are discarded.

Let $\mathcal D_{\mathrm{cal}}$ denote the calibration set and
$\mathcal D_{\mathrm{sim}}$ the simulation set. Among the retained bike trips,
70\% are assigned to $\mathcal D_{\mathrm{cal}}$ and 30\% to
$\mathcal D_{\mathrm{sim}}$; the same split is performed independently among
PT trips and among car trips. Hence, the proportions of bike, PT, and car trips
are preserved in both sets. The resulting set $\mathcal D_{\mathrm{cal}}$
contains 34,198 trips made by 20,577 travelers, whereas
$\mathcal D_{\mathrm{sim}}$ contains 14,656 trips made by 11,709 travelers.

For a trip set $S\in\{\mathcal D_{\mathrm{cal}},\mathcal D_{\mathrm{sim}}\}$,
let $\mathcal N_S$ denote its travelers, $\bar d_S$ its mean trip distance,
and $\omega_{\mathrm{bus},S}$ the share of bus time in total bus-plus-rail
in-vehicle time. Table~\ref{tab:appnum-direct-data} separates quantities
computed directly from the trip records from quantities estimated by the
calibration procedure.

\begin{table}[t]
\centering
\footnotesize
\setlength{\tabcolsep}{3.1pt}
\caption{Quantities computed directly from the calibration and simulation sets. A dash indicates that the quantity is not used for that set: $\bar d_S$ and $\omega_{\mathrm{bus},S}$ define the representative simulation context and are therefore computed from $\mathcal D_{\mathrm{sim}}$.}
\label{tab:appnum-direct-data}
\begin{tabular}{@{}lccc@{}}
\toprule
Quantity & Symbol & $\mathcal D_{\mathrm{cal}}$ & $\mathcal D_{\mathrm{sim}}$\\
\midrule
Trips & $|S|$ & 34,198 & 14,656\\
Travelers & $|\mathcal N_S|$ & 20,577 & 11,709\\
Mean trip distance (km/trip) & $\bar d_S$ & -- & 5.41\\
Bus-time share & $\omega_{\mathrm{bus},S}$ & -- & 0.65\\
\bottomrule
\end{tabular}
\end{table}

\subsubsection{Profile-based choice model}

Let $\mathcal K=\{0,1,2\}$ be the set of possible true profiles. After
calibration, the profiles are labeled so that $k=0$ is bike-oriented, $k=1$ is
PT-oriented, and $k=2$ is car-oriented. For every $k\in\mathcal K$, the
systematic utility $u_k(i,y)$ introduced in the main paper is instantiated for
trip $\tau$ as
\begin{equation}
\begin{aligned}
u_{k,\tau}(1,y)
  &= \xi_k^{\mathrm{bike}}-\nu t_{\tau}^{\mathrm{bike}}+y,\\
u_{k,\tau}(2,y)
  &= \xi_k^{\mathrm{PT}}-\nu t_{\tau}^{\mathrm{PT}}
     -C_{\tau}^{\mathrm{PT}}-\iota_k m_{\tau},\\
u_{k,\tau}(3,y)
  &= -\nu t_{\tau}^{\mathrm{car}}-C_{\tau}^{\mathrm{car}}
     +\zeta_k^{\mathrm{lic}}\chi_{\tau}^{\mathrm{lic}}
     +\zeta_k^{\mathrm{own}}\chi_{\tau}^{\mathrm{own}}.
\end{aligned}
\label{eq:appnum-systematic-utility}
\end{equation}

\begin{table}[t]
    \centering
    \small
    \setlength{\tabcolsep}{3pt}
    \begin{tabular}{@{}p{0.27\columnwidth}
        p{\dimexpr 0.73\columnwidth-2\tabcolsep\relax}@{}}
        \toprule
        Symbol & Definition \\
        \midrule

        $\tau$
        & Index of a recorded trip. \\

        $i\in\{1,2,3\}$
        & Travel alternative: $i=1$ denotes bike, $i=2$ public transport
        (PT), and $i=3$ car. \\

        $t_\tau^i$
        & Travel time of alternative $i$ for trip $\tau$, in hours. \\

        $C_\tau^{\mathrm{PT}}$
        & Out-of-pocket PT cost for trip $\tau$. \\

        $C_\tau^{\mathrm{car}}$
        & Out-of-pocket car cost for trip $\tau$. \\

        $m_\tau$
        & Number of PT interchanges for trip $\tau$. \\

        $\chi_\tau^{\mathrm{lic}}$
        & Indicator equal to $1$ if the traveler making trip $\tau$ holds a
        driving licence, and $0$ otherwise. \\

        $\chi_\tau^{\mathrm{own}}$
        & Car ownership of the traveler making trip $\tau$, measured as the
        number of cars in the household divided by the number of people in
        the household. \\

        $\nu$
        & Common value of travel time, in pounds per hour. \\

        $\iota_k$
        & Profile-$k$ utility penalty for each additional PT interchange. \\

        $\zeta_k^{\mathrm{lic}}$
        & Profile-$k$ increment in car utility associated with holding a
        driving licence. \\

        $\zeta_k^{\mathrm{own}}$
        & Profile-$k$ increment in car utility per unit of
        $\chi_\tau^{\mathrm{own}}$. \\

        $\xi_k^{\mathrm{bike}}$
        & Intrinsic profile-$k$ preference for bike not represented by the
        other trip or traveler characteristics. \\

        $\xi_k^{\mathrm{PT}}$
        & Intrinsic profile-$k$ preference for PT not represented by the
        other trip or traveler characteristics. \\

        \bottomrule
    \end{tabular}
    \caption{Notation used in the trip-specific systematic utilities.}
    \label{tab:trip-utility-notation}
\end{table}

The notation used in the trip-specific utility functions is summarized in
Table~\ref{tab:trip-utility-notation}.

For any alternative $i$, trip $\tau$, profile $k$, and incentive $y$, let
$\mathbb P_i^{u_{k,\tau}}(y)$ denote the probability in
Eq.~\eqref{eq:logit} evaluated with the trip-specific systematic utility
function $u_{k,\tau}$ in Eq.~\eqref{eq:appnum-systematic-utility}. In
particular, $\mathbb P_{i_\tau}^{u_{k,\tau}}(0)$ is the probability that
profile $k$ chooses the mode $i_\tau$ recorded for trip $\tau$ when no bike
incentive is offered. The argument $0$ is used during calibration because the
trips in the dataset were recorded without the simulated incentive policy.

\subsubsection{Calibration procedure}

The offline calibration described in this subsection is distinct from the
regulator's online learning process in
Appendix~\ref{section:preferenc-elicitation}. The offline calibration is used
only to obtain behavioral parameters for the numerical experiment. Its
log-likelihood $\ell_{nk}^{\mathrm{cal}}$, defined below from the LPMC trips,
must therefore not be confused with the regulator's likelihood
$L_k^t(\mathbf h^{t-1})$ in
Eqs.~\eqref{eq:likelihood}--\eqref{eq:loglikelihood}.

Let $\mathcal N_{\mathrm{cal}}$ be the travelers represented in
$\mathcal D_{\mathrm{cal}}$, let $\mathcal T_n^{\mathrm{cal}}$ be the set of
calibration trips made by traveler $n\in\mathcal N_{\mathrm{cal}}$, and let
$i_\tau$ be the mode chosen on trip $\tau$. For parameter vector $\Theta$, the
offline calibration log-likelihood of traveler $n$ under candidate true
profile $k$ is
\begin{equation}
\ell_{nk}^{\mathrm{cal}}(\Theta)
=\sum_{\tau\in\mathcal T_n^{\mathrm{cal}}}
\log \mathbb P_{i_\tau}^{u_{k,\tau}}(0).
\label{eq:appnum-calibration-loglikelihood}
\end{equation}
The trip and traveler quantities in
Eq.~\eqref{eq:appnum-systematic-utility} are fixed inputs. The parameters to be
calibrated are
\begin{equation}
\Theta=
\left(
\nu,\theta,
\left\{
\xi_k^{\mathrm{bike}},\xi_k^{\mathrm{PT}},\iota_k,
\zeta_k^{\mathrm{lic}},\zeta_k^{\mathrm{own}}
\right\}_{k\in\mathcal K}
\right),
\label{eq:appnum-parameter-vector}
\end{equation}
where $\theta$ is the Gumbel scale in Eq.~\eqref{eq:logit}.

The initial profile-membership probabilities are based on each traveler's
mode shares. Let $\mu_{nk}$ be the fraction of trips in
$\mathcal T_n^{\mathrm{cal}}$ made using the mode associated with profile $k$:
bike for $k=0$, PT for $k=1$, and car for $k=2$. Before enforcing equal total
membership across profiles, the initial profile-membership probabilities are
\begin{equation}
\widetilde q_{nk}^{(0)}=
\frac{\exp(\eta_{\mathrm{init}}\mu_{nk})}
{\sum_{j\in\mathcal K}\exp(\eta_{\mathrm{init}}\mu_{nj})},
\qquad \eta_{\mathrm{init}}=2.
\label{eq:appnum-initial-membership}
\end{equation}
The mode-share scaling parameter $\eta_{\mathrm{init}}$ Controls how strongly observed mode shares influence the initial profile probabilities. A larger value gives more probability to profiles associated with frequently used modes; 0 would give equal probability to every profile.. We heuristically set~$\eta_{\mathrm{init}}=2$.

At calibration step $d\geq1$, the parameters $\Theta^{(d)}$ determine the
pre-adjustment profile-membership probabilities
\begin{equation}
\widetilde q_{nk}^{(d)}=
\frac{\exp\!\left(\ell_{nk}^{\mathrm{cal}}(\Theta^{(d)})\right)}
{\sum_{j\in\mathcal K}
 \exp\!\left(\ell_{nj}^{\mathrm{cal}}(\Theta^{(d)})\right)}.
\label{eq:appnum-pre-adjustment-membership}
\end{equation}
The superscript $(0)$ in
Eq.~\eqref{eq:appnum-initial-membership} denotes initialization from empirical
mode shares, whereas the superscript $(d)$ in
Eq.~\eqref{eq:appnum-pre-adjustment-membership} denotes calibration step $d$.
The tilde identifies probabilities before imposing the profile-size condition
in Eq.~\eqref{eq:appnum-equal-profile-membership}.

For every $d\in\{0,\ldots,N_{\mathrm{step}}\}$, adjustment parameters
$\lambda_k^{(d)}\in\mathbb R$ are selected so that
\begin{align}
q_{nk}^{(d)}
&=
\frac{\widetilde q_{nk}^{(d)}\exp(\lambda_k^{(d)})}
{\sum_{j\in\mathcal K}
 \widetilde q_{nj}^{(d)}\exp(\lambda_j^{(d)})},
\label{eq:appnum-adjusted-membership}\\
\sum_{n\in\mathcal N_{\mathrm{cal}}}q_{nk}^{(d)}
&=\frac{|\mathcal N_{\mathrm{cal}}|}{|\mathcal K|},
\qquad k\in\mathcal K.
\label{eq:appnum-equal-profile-membership}
\end{align}
The quantity $q_{nk}^{(d)}$ is the membership probability assigned to traveler
$n$ and profile $k$ at calibration step $d$. Therefore,
$\sum_{n\in\mathcal N_{\mathrm{cal}}}q_{nk}^{(d)}$ is the fraction of agents
associated to profile $k$ within the calibration set, and
Eq.~\eqref{eq:appnum-equal-profile-membership} sets this quantity to one third of $|\mathcal N_{\mathrm{cal}}|$ for every $k\in\mathcal K$. We do so to prevent calibration to favor one profile over the others.\footnote{
We also run calibration imposing that the fraction of trips (and not of agents) be 1/3 per each profile. Final result did not change significantly and are thus omitted.
}
Observe that the adjustment parameters $\lambda_k^{(d)}$ shift the log-membership scores in Eq.~\eqref{eq:appnum-adjusted-membership}. Adding the same constant to every $\lambda_k^{(d)}$ leaves the probabilities unchanged; hence, we set
$\lambda_0^{(d)}=0$ and solve the two equations in
Eq.~\eqref{eq:appnum-equal-profile-membership} corresponding to $k=1$ and
$k=2$ for $\lambda_1^{(d)}$ and $\lambda_2^{(d)}$.

For fixed profile-membership probabilities $q_{nk}^{(d-1)}$, we find the
parameters $\Theta$ that minimize
\begin{align}
\mathcal J(\Theta;q^{(d-1)})
&=\mathcal J_{\mathrm{NLL}}(\Theta;q^{(d-1)})
  +\rho\,\mathcal R(\Theta),
\label{eq:appnum-calibration-loss}\\
\mathcal J_{\mathrm{NLL}}(\Theta;q^{(d-1)})
&=-\frac{1}{|\mathcal D_{\mathrm{cal}}|}
\sum_{n\in\mathcal N_{\mathrm{cal}}}
\sum_{\tau\in\mathcal T_n^{\mathrm{cal}}}
\nonumber\\[-1mm]
&\hspace{8mm}
\sum_{k\in\mathcal K}q_{nk}^{(d-1)}
\log \mathbb P_{i_\tau}^{u_{k,\tau}}(0).
\label{eq:appnum-weighted-nll}
\end{align}
Here, $|\mathcal D_{\mathrm{cal}}|$ is the number of calibration trips and
$\mathcal J_{\mathrm{NLL}}$ in
Eq.~\eqref{eq:appnum-weighted-nll} is the average negative log-likelihood of
the modes chosen in each recorded trip, weighted by the profile-membership
probabilities $q_{nk}^{(d-1)}$. The second term in
Eq.~\eqref{eq:appnum-calibration-loss} is the regularization term, with
\begin{align}
\mathcal R(\Theta)=\sum_{k\in\mathcal K}\Big[&
(\xi_k^{\mathrm{bike}}-\bar\xi^{\mathrm{bike}})^2+
(\xi_k^{\mathrm{PT}}-\bar\xi^{\mathrm{PT}})^2
\nonumber\\[-1mm]
&+(\iota_k-\bar\iota)^2+
(\zeta_k^{\mathrm{lic}}-\bar\zeta^{\mathrm{lic}})^2
\nonumber\\[-1mm]
&+(\zeta_k^{\mathrm{own}}-\bar\zeta^{\mathrm{own}})^2\Big],
\label{eq:appnum-regularizer}
\end{align}
where each barred quantity is the mean of the corresponding coefficient over
$k\in\mathcal K$. Because both the traveler memberships $q_{nk}$ and the
profile-specific coefficients in Eq.~\eqref{eq:appnum-parameter-vector} are
estimated, markedly different coefficient vectors can produce similar values
of the weighted likelihood in Eq.~\eqref{eq:appnum-weighted-nll}. The
regularization term in Eq.~\eqref{eq:appnum-regularizer} favors the less
extreme of such parameterizations by penalizing cross-profile dispersion. The
coefficient $\rho=0.01$ gives this term a small weight relative to the average
negative log-likelihood. The common value of time $\nu$ and common Gumbel scale
$\theta$ are calibrated but not regularized because each has a single value
shared by all profiles; neither contributes to cross-profile dispersion.

The calibration has $N_{\mathrm{step}}=4$ steps. At calibration step $d$, we
first initialize $\Theta$ at $\Theta^{(d-1)}$; for $d=1$, the initial values
$\Theta^{(0)}$ in Table~\ref{tab:appnum-initial-bounds} are used. We then find
$\Theta^{(d)}$ that minimizes Eq.~\eqref{eq:appnum-calibration-loss}, holding
$q_{nk}^{(d-1)}$ fixed and enforcing the bounds in
Table~\ref{tab:appnum-initial-bounds}. To do so, we use the analytic gradient
of loss~\eqref{eq:appnum-calibration-loss} with respect to parameters $\Theta$
and the limited-memory Broyden--Fletcher--Goldfarb--Shanno algorithm for
bound-constrained optimization (L-BFGS-B).\footnote{\label{fn:Byrd1995LBFGSB}Richard H.
Byrd, Peihuang Lu, Jorge Nocedal, and Ciyou Zhu. 1995. ``A Limited Memory
Algorithm for Bound Constrained Optimization.'' \emph{SIAM Journal on
Scientific Computing} 16(5):1190--1208. See Sec.~2, pp.~1193--1198.
\url{https://doi.org/10.1137/0916069}.} The resulting $\Theta^{(d)}$ is substituted in
Eqs.~\eqref{eq:appnum-calibration-loglikelihood} and
\eqref{eq:appnum-pre-adjustment-membership}; the adjustment in
Eqs.~\eqref{eq:appnum-adjusted-membership}--\eqref{eq:appnum-equal-profile-membership}
then gives $q_{nk}^{(d)}$.

Profile 0 is required to satisfy
\begin{equation}
a_0-b_0\geq\pounds0.15.
\label{eq:appnum-profile0-margin}
\end{equation}
This condition represents a profile for which the zero-incentive bike utility
is slightly larger than the inclusive utility of the non-bike alternatives in
the representative travel context. The value \pounds0.15 is a
small separation margin rather than an empirical estimate. At the calibrated
scale $\widehat\theta=\pounds3.68$, the boundary in
Eq.~\eqref{eq:appnum-profile0-margin} implies, by Eq.~\eqref{eq:pk},
\begin{equation}
p_0(0)\geq
\sigma\!\left(\frac{0.15}{3.68}\right)=51.02\%.
\label{eq:appnum-profile0-probability-margin}
\end{equation}
Thus, the condition requires only a 1.02-percentage-point preference above
indifference between bike and the aggregate non-bike option. It identifies a
strictly bike-oriented profile without imposing a large behavioral
separation. The calibrated value is
$\widehat a_0-\widehat b_0=\pounds0.36$, which is above the boundary in
Eq.~\eqref{eq:appnum-profile0-margin}. The quantities $a_0$ and $b_0$ are
defined as in Eq.~\eqref{eq:pk}, using the profile-weighted trip attributes at
the current calibration step. After estimating $\Theta^{(d)}$ and again after
obtaining $q_{nk}^{(d)}$, the bike constant is updated according to
\begin{equation}
\xi_0^{\mathrm{bike}}\leftarrow
\xi_0^{\mathrm{bike}}+
\max\{0,\pounds0.15-(a_0-b_0)\},
\label{eq:appnum-profile0-projection}
\end{equation}
which restores Eq.~\eqref{eq:appnum-profile0-margin} whenever that condition is
violated.

\begin{table}[t]
\centering
\small
\setlength{\tabcolsep}{3.2pt}
\caption{Settings of the offline calibration procedure.}
\label{tab:appnum-hyperparameters}
\begin{tabular}{@{}lll@{}}
\toprule
Quantity & Symbol & Value\\
\midrule
Number of profiles & $|\mathcal K|$ & 3\\
Mode-share scaling in Eq.~\eqref{eq:appnum-initial-membership}
 & $\eta_{\mathrm{init}}$ & 2\\
Calibration steps & $N_{\mathrm{step}}$ & 4\\
Maximum L-BFGS-B iterations per step & $N_{\mathrm{opt}}$ & 80\\
Function tolerance & $\varepsilon_{\mathrm{fun}}$ & $10^{-10}$\\
Gradient tolerance & $\varepsilon_{\mathrm{grad}}$ & $10^{-5}$\\
Maximum line-search steps & $N_{\mathrm{line}}$ & 30\\
Regularization coefficient & $\rho$ & 0.01\\
Profile-0 margin in Eq.~\eqref{eq:appnum-profile0-margin}
 & $a_0-b_0$ & at least \pounds0.15\\
\bottomrule
\end{tabular}
\end{table}

\begin{table}[t]
\centering
\scriptsize
\setlength{\tabcolsep}{2.1pt}
\caption{Initialization, admissible ranges, and calibrated values.
Profile-indexed values are ordered as $k=0,1,2$. The calibrated
profile-specific coefficients enter the simulations through the constructed
utilities $a_k$ and $b_k$ reported in
Table~\ref{tab:appnum-calibrated}.}
\label{tab:appnum-initial-bounds}
\begin{tabular}{@{}p{3.25cm}ccc@{}}
\toprule
Parameter & $\Theta^{(0)}$ & Range & Calibrated\\
\midrule

$\xi_k^{\mathrm{bike}}$: intrinsic bike preference (\pounds)
 & $(5,1,-1)$ & $[-15,15]$ & through $a_k$\\

$\xi_k^{\mathrm{PT}}$: intrinsic PT preference (\pounds)
 & $(0,2,-1)$ & $[-15,15]$ & through $b_k$\\

$\nu$: value of time (\pounds/h)
 & 10 & $[3,30]$ & 10.48\\

$\iota_k$: PT-interchange penalty (\pounds/interchange)
 & $(1,1,1)$ & $[0,8]$ & through $b_k$\\

$\zeta_k^{\mathrm{lic}}$: driving-licence utility increment (\pounds)
 & $(2,2,4)$ & $[0,10]$ & through $b_k$\\

$\zeta_k^{\mathrm{own}}$: car-ownership utility increment (\pounds/unit)
 & $(2,2,4)$ & $[0,10]$ & through $b_k$\\

$\theta$: Gumbel scale (\pounds)
 & 3.50 & $[1,12]$ & 3.68\\

\midrule

$a_k$: constructed bike utility (\pounds)
 & -- & -- & $(6.12,-6.78,-6.59)$\\

$b_k$: constructed non-bike utility (\pounds)
 & -- & -- & $(5.76,4.94,5.35)$\\

$a_k-b_k$: constructed utility difference (\pounds)
 & -- & Eq.~\eqref{eq:appnum-profile0-margin}
 & $(0.36,-11.73,-11.93)$\\

\bottomrule
\end{tabular}
\end{table}

The mixture negative log-likelihood used to evaluate descriptive fit is
\begin{equation}
\mathcal J_{\mathrm{mix}}(\Theta)=
-\frac{1}{|\mathcal D_{\mathrm{cal}}|}
\sum_{n\in\mathcal N_{\mathrm{cal}}}
\log\!\left[
\sum_{k\in\mathcal K}\frac{1}{|\mathcal K|}
\exp\!\left(\ell_{nk}^{\mathrm{cal}}(\Theta)\right)
\right].
\label{eq:appnum-mixture-nll}
\end{equation}
A nat is the unit of log loss obtained when the natural logarithm is used. The
fitted value of $\mathcal J_{\mathrm{mix}}$ in
Eq.~\eqref{eq:appnum-mixture-nll} is 0.67 nat/trip. Uniform choice gives 1.10
nat/trip, and assigning each mode its aggregate frequency in
$\mathcal D_{\mathrm{cal}}$ gives 0.83 nat/trip. For a model $M$ with average
negative log-likelihood $\mathcal J_M$, define the geometric mean probability
assigned to the modes chosen in the recorded trips as
\begin{equation}
\overline p_{\mathrm{geom}}^{M}=\exp(-\mathcal J_M).
\label{eq:appnum-geometric-probability}
\end{equation}
Using the unrounded losses gives
\begin{equation}
\left(
\overline p_{\mathrm{geom}}^{\mathrm{fit}},
\overline p_{\mathrm{geom}}^{\mathrm{shares}},
\overline p_{\mathrm{geom}}^{\mathrm{uniform}}
\right)
=(51.12\%,43.74\%,33.33\%).
\label{eq:appnum-geometric-probability-values}
\end{equation}
The fitted profile model therefore assigns 7.38 percentage points more
geometric-mean probability to the realized choices than the aggregate-share
baseline and 17.79 percentage points more than uniform choice. Hence, the
profile-specific utilities capture systematic variation in mode choice beyond
the aggregate frequencies of the three modes.

\subsubsection{Calibrated parameter values, profile construction and context parameters}

Let $\mathcal N_{\mathrm{sim}}$ be the travelers represented in
$\mathcal D_{\mathrm{sim}}$, and let $\mathcal T_n^{\mathrm{sim}}$ be the trips
made by traveler $n\in\mathcal N_{\mathrm{sim}}$ in that set. The calibrated
parameters $\widehat\Theta$ give
\begin{equation}
\ell_{nk}^{\mathrm{sim}}(\widehat\Theta)
=\sum_{\tau\in\mathcal T_n^{\mathrm{sim}}}
\log\mathbb P_{i_\tau}^{u_{k,\tau}}(0).
\label{eq:appnum-simulation-loglikelihood}
\end{equation}
The simulation-set membership probabilities $q_{nk}^{\mathrm{sim}}$ are
obtained from Eq.~\eqref{eq:appnum-pre-adjustment-membership} by replacing
$\ell_{nk}^{\mathrm{cal}}(\Theta^{(d)})$ with
$\ell_{nk}^{\mathrm{sim}}(\widehat\Theta)$. The profile-size adjustment in
Eqs.~\eqref{eq:appnum-adjusted-membership}--\eqref{eq:appnum-equal-profile-membership}
is not applied to $\mathcal D_{\mathrm{sim}}$. The share of profile $k$ in the
simulation set $\mathcal D_{\mathrm{sim}}$ is
\begin{equation}
\pi_k=\frac{1}{|\mathcal N_{\mathrm{sim}}|}
\sum_{n\in\mathcal N_{\mathrm{sim}}}q_{nk}^{\mathrm{sim}}.
\label{eq:appnum-profile-share}
\end{equation}

For each trip $\tau\in\mathcal D_{\mathrm{sim}}$ and alternative $i$, define
\begin{equation}
\begin{aligned}
\mathbf x_{\tau i}=\big(&
\mathbb I_{\{i=1\}},\mathbb I_{\{i=2\}},t_\tau^i,C_\tau^i,
 m_\tau\mathbb I_{\{i=2\}},\\[-1mm]
&\chi_\tau^{\mathrm{lic}}\mathbb I_{\{i=3\}},
 \chi_\tau^{\mathrm{own}}\mathbb I_{\{i=3\}}
\big)^{\!\top},
\end{aligned}
\label{eq:appnum-trip-attribute-vector}
\end{equation}
where $C_\tau^1=0$, $C_\tau^2=C_\tau^{\mathrm{PT}}$, and
$C_\tau^3=C_\tau^{\mathrm{car}}$. Thus, $\mathbf x_{\tau i}$ contains the
alternative indicators, alternative-specific travel time, out-of-pocket cost,
number of PT interchanges, driving-licence indicator, and household car
ownership entering Eq.~\eqref{eq:appnum-systematic-utility}. Let $n(\tau)$
denote the traveler who made trip $\tau$. For every profile--alternative pair
$(k,i)$, the representative attribute vector is
\begin{equation}
\mathbf x_{i,k}=
\frac{\sum_{\tau\in\mathcal D_{\mathrm{sim}}}
 q_{n(\tau)k}^{\mathrm{sim}}\mathbf x_{\tau i}}
{\sum_{\tau\in\mathcal D_{\mathrm{sim}}}
 q_{n(\tau)k}^{\mathrm{sim}}}.
\label{eq:appnum-representative-attributes}
\end{equation}
The weight $q_{n(\tau)k}^{\mathrm{sim}}$ gives greater influence to trips made
by travelers whose profile-membership probability $q_{n(\tau)k}^{\mathrm{sim}}$
is higher. This ensures that the representative attribute vector
$\mathbf x_{i,k}$ is more strongly affected by profile memberships inferred
with higher confidence. Substituting $\mathbf x_{i,k}$ and
$\widehat\Theta$ into Eq.~\eqref{eq:appnum-systematic-utility} determines the
profile-level systematic utilities $u_k(i,0)$. Profiles are relabeled so that
$a_0-b_0\geq a_1-b_1\geq a_2-b_2$, consistently with the ordering in the main
paper.

Table~\ref{tab:appnum-calibrated} reports the common calibrated parameters
$\widehat\nu$ and $\widehat\theta$, the profile shares $\pi_k$, and the
profile-level quantities obtained from
Eqs.~\eqref{eq:appnum-profile-share}--\eqref{eq:appnum-representative-attributes}.
The zero-incentive mode probabilities $\mathbb P_i^{u_k}(0)$,
$i\in\{1,2,3\}$, are evaluated using Eq.~\eqref{eq:logit} with the utilities
$u_k(i,0)$.

\begin{table}[t]
\centering
\footnotesize
\setlength{\tabcolsep}{3.1pt}
\caption{Calibrated parameter values and profile-level context quantities.}
\label{tab:appnum-calibrated}
\begin{tabular}{@{}lrl@{}}
\toprule
Common quantity & Value & Unit\\
\midrule
Value of time $\widehat\nu$ & 10.48 & \pounds/h\\
Gumbel scale $\widehat\theta$ & 3.68 & \pounds\\
\bottomrule
\end{tabular}

\medskip
\begin{tabular}{@{}crrrrrr@{}}
\toprule
$k$ & $\pi_k$ & $\mathbb P_1^{u_k}(0)$ & $\mathbb P_2^{u_k}(0)$
& $\mathbb P_3^{u_k}(0)$ & $a_k$ & $b_k$\\
 & (\%) & (\%) & (\%) & (\%) & (\pounds) & (\pounds)\\
\midrule
0 & 21.26 & 52.44 & 19.12 & 28.44 & 6.12 & 5.76\\
1 & 39.36 & 3.96 & 52.08 & 43.96 & -6.78 & 4.94\\
2 & 39.38 & 3.75 & 43.10 & 53.15 & -6.59 & 5.35\\
\bottomrule
\end{tabular}
\end{table}

\begin{figure}[t]
\centering
\includegraphics[width=\columnwidth]{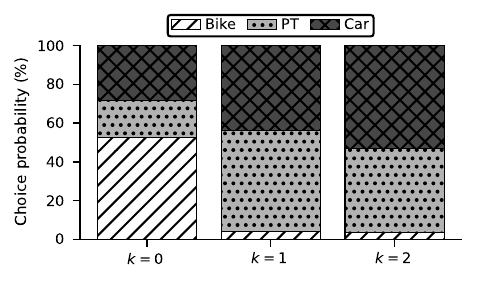}
\caption{Zero-incentive mode probabilities $\mathbb P_i^{u_k}(0)$ for the
three calibrated profiles $k\in\mathcal K$. Alternatives $i=1,2,3$ denote
bike, PT, and car, respectively.}
\label{fig:appnum-profile-probabilities}
\end{figure}

For clarity, before rounding, the representative profile-level
quantities used throughout the case study are
\[
\{a_k\}_{k=0}^{2}
=
(6.1178,-6.7834,-6.5886),
\]
\[
\{b_k\}_{k=0}^{2}
=
(5.7590,4.9421,5.3451),
\]
and therefore
\[
\{a_k-b_k\}_{k=0}^{2}
=
(0.3588,-11.7255,-11.9337).
\]

\subsubsection{Externalities and calibration of the policy weight $\beta$}

Table~\ref{tab:appnum-literature-inputs} reports the externally specified
quantities. Greenhouse-gas (GHG) factors are expressed as carbon-dioxide
equivalent (CO$_2$e), because the official conversion factors aggregate the
climate effects of the covered greenhouse gases into an equivalent mass of
CO$_2$.\footnote{\label{fn:DESNZ2026GHG}UK Department for Energy Security and Net
Zero. 2026. \emph{Greenhouse Gas Reporting: Conversion Factors 2026}.
See methodology Secs.~1 and~5 and the flat-file worksheet ``Factors by
Category.'' \url{https://www.gov.uk/government/publications/greenhouse-gas-reporting-conversion-factors-2026}.}
Bike is assigned zero direct operational emissions. The monetary carbon value
$V_{\rm CO_2e}$ follows UK government appraisal guidance.\footnote{\label{fn:DESNZ2021CarbonValues}UK
Department for Energy Security and Net Zero and UK Department for Business,
Energy and Industrial Strategy. 2021. \emph{Valuation of Greenhouse Gas
Emissions: For Policy Appraisal and Evaluation}. See Annex~1.
\url{https://www.gov.uk/government/publications/valuing-greenhouse-gas-emissions-in-policy-appraisal/valuation-of-greenhouse-gas-emissions-for-policy-appraisal-and-evaluation}.}
The road externalities $C_{\rm cong}$ and $C_{\rm other}$ are taken from the
Institute for Fiscal Studies assessment of motoring external costs.\footnote{\label{fn:AdamStroud2019}Stuart
Adam and Rebekah Stroud. 2019. ``A Road Map for Motoring Taxation.'' In
\emph{The IFS Green Budget: October 2019}. Institute for Fiscal Studies,
London. See Sec.~9.2 and Tables~9.1--9.2.
\url{https://ifs.org.uk/books/road-map-motoring-taxation}.}
The cycling payment rate $R_{\rm Cic}$ is taken from a field experiment.\footnote{\label{fn:Ciccone2021Cycling}Alice
Ciccone, Aslak Fyhri, and Hanne B. Sundf{\o}r. 2021. ``Using Behavioral
Insights to Incentivize Cycling: Results from a Field Experiment.''
\emph{Journal of Economic Behavior \& Organization} 188:1035--1058. See
Sec.~3.1, pp.~1038--1039.
\url{https://doi.org/10.1016/j.jebo.2021.06.011}.}
The exchange rate $X_{\rm NOK/GBP}$ is the 2021 annual average reported by DNB
Markets.\footnote{\label{fn:DNB2021Exchange}DNB Markets. 2021. \emph{Monthly and
Yearly Averages for Major Currencies 2021}. Annual-average GBP/NOK mid-rate;
the table states that the reported values average daily mid-rates.
\url{https://www.dnb.no/en/markets/fx-and-ir/fx-rates/monthly-yearly-averages/average-rates-2021?la=EN}.}

\begin{table*}[t]
\centering
\footnotesize
\setlength{\tabcolsep}{4pt}
\caption{Externally specified inputs. The reference column identifies the precise source location.}
\label{tab:appnum-literature-inputs}
\begin{tabular}{@{}p{3.25cm}cp{3.15cm}p{8.05cm}@{}}
\toprule
Input & Symbol & Value & Reference\\
\midrule
London local-bus emissions
& $E_{\rm bus}$ & 63.60 gCO$_2$e/passenger-km
& Category ``Local bus, London'' and methodology in [Footnote~\ref{fn:DESNZ2026GHG} - Sec.~5, pp.~74--76.]\\
National-rail emissions
& $E_{\rm rail}$ & 30.92 gCO$_2$e/passenger-km
& Category ``National rail'' and methodology in [Footnote~\ref{fn:DESNZ2026GHG} - Sec.~5, p.~78.]\\
Average-car emissions
& $E_{\rm car}$ & 165.91 gCO$_2$e/vehicle-km
& Category ``Average car'' and methodology in [Footnote~\ref{fn:DESNZ2026GHG} - Sec.~5, pp.~46--69].\\
Central carbon value
& $V_{\rm CO_2e}$ & \pounds264/tCO$_2$e
& Central 2026 value in [Footnote~\ref{fn:DESNZ2021CarbonValues} - Annex~1]\\
Congestion externality
& $C_{\rm cong}$ & \pounds0.81/vehicle-km
& Marginal congestion cost of one additional vehicle-kilometre on principal roads in London, estimated in [Footnote~\ref{fn:AdamStroud2019} Sec.~9.2, Table~9.2].\\
Other car externalities
& $C_{\rm other}$ & \pounds0.03/vehicle-km
& Sum of accident, local-air-pollution, noise, and infrastructure costs, estimated in [Footnote~\ref{fn:AdamStroud2019} Sec.~9.2, Table~9.2].\\
Reference biking incentive
& $R_{\rm Cic}$ & 2.00 NOK/cycled-km
& Pay-per-kilometre treatment, tested in a real world experiment - see [Footnote~\ref{fn:Ciccone2021Cycling} Sec.~3.1, pp.~1038--1039].
We use this value only to set the value of~$\beta$, as discussed in Eqs.~\eqref{eq:appnum-ciccone-nok}-\eqref{eq:appnum-beta}. We do not use it directly as the incentives proposed, which we instead calculate via Prop.~\ref{prop:y_star}.
\\
NOK/GBP exchange rate
& $X_{\rm NOK/GBP}$ & 11.83 NOK/\pounds
& 2021 annual-average GBP rate [Footnote~\ref{fn:DNB2021Exchange}].\\
\bottomrule
\end{tabular}
\end{table*}

Using the direct simulation-set quantities $\bar d_{\mathcal D_{\rm sim}}$
and $\omega_{\mathrm{bus},\mathcal D_{\rm sim}}$ in
Table~\ref{tab:appnum-direct-data}, the operational emissions of a
representative PT trip and a representative car trip are
\begin{align}
E_{\rm PT}^{\rm trip}
&=\bar d_{\mathcal D_{\rm sim}}
\left[\omega_{\mathrm{bus},\mathcal D_{\rm sim}}E_{\rm bus}
 +(1-\omega_{\mathrm{bus},\mathcal D_{\rm sim}})E_{\rm rail}\right],
\label{eq:appnum-pt-emissions}\\
E_{\rm car}^{\rm trip}
&=\bar d_{\mathcal D_{\rm sim}} E_{\rm car}.
\label{eq:appnum-car-emissions}
\end{align}
Thus, emissions from car trips $E_{\rm car}^{\rm trip}$ are obtained by
multiplying the official average-car vehicle-kilometre factor $E_{\rm car}$ by
the average kilometres traveled $\bar d_{\mathcal D_{\rm sim}}$ computed from
the simulation set $\mathcal D_{\rm sim}$. PT emissions
$E_{\rm PT}^{\rm trip}$ use the same distance and a bus/rail factor weighted by
the bus-time share $\omega_{\mathrm{bus},\mathcal D_{\rm sim}}$.

The per-trip externalities $c_1$, $c_{\rm PT}$, and $c_{\rm car}$ are
\begin{align}
c_1&=0,
\nonumber\\
c_{\rm PT}
&=\frac{E_{\rm PT}^{\rm trip}V_{\rm CO_2e}}{10^6},
\label{eq:appnum-pt-externality}\\
c_{\rm car}
&=\frac{E_{\rm car}^{\rm trip}V_{\rm CO_2e}}{10^6}
 +\bar d_{\mathcal D_{\rm sim}}(C_{\rm cong}+C_{\rm other}).
\label{eq:appnum-car-externality}
\end{align}
Here, $V_{\rm CO_2e}$ is the monetary value assigned to one tonne of
CO$_2$e, and $10^6$ converts grams to tonnes. The car externality
$c_{\rm car}$ adds the carbon cost, the marginal congestion cost
$C_{\rm cong}$ of an additional vehicle-kilometre on principal London roads,
and the remaining road costs $C_{\rm other}$. For profile $k$, the expected
externality conditional on not choosing bike, $c_{\neg1,k}$, is
\begin{equation}
c_{\neg1,k}=
\mathbb P_k(2\mid i\neq1)c_{\rm PT}+
\mathbb P_k(3\mid i\neq1)c_{\rm car}.
\label{eq:appnum-nonbike-cost}
\end{equation}

The generalized social-cost objective $\tilde f^{\tilde v}(y)$ in
Eq.~\eqref{eq:objective_function} weights expected incentive expenditure by
$\beta$. To set the policy weight $\beta$, we anchor the optimal incentive of
the car-oriented profile $k=2$ to an incentive implemented in a cycling field
experiment. Ciccone, Fyhri, and Sundf{\o}r paid participants
$R_{\rm Cic}=2.00$ Norwegian kroner (NOK) for each kilometre cycled in their
pay-per-kilometre treatment\fnref{fn:Ciccone2021Cycling}. Applying this rate to the mean trip
distance $\bar d_{\mathcal D_{\rm sim}}=5.41$ km/trip gives
\begin{equation}
y_{\rm Cic}^{\rm NOK}=R_{\rm Cic}\bar d_{\mathcal D_{\rm sim}}
=10.81\ \mathrm{NOK/bike\ trip}.
\label{eq:appnum-ciccone-nok}
\end{equation}
The annual-average exchange rate $X_{\rm NOK/GBP}=11.83$ NOK/\pounds converts
this amount to
\begin{equation}
y_{\rm Cic}=\frac{y_{\rm Cic}^{\rm NOK}}{X_{\rm NOK/GBP}}
=\pounds0.91\ \mathrm{per\ bike\ trip}.
\label{eq:appnum-ciccone-gbp}
\end{equation}

Let $p_2(y):=\mathbb P_1^{u_2}(y)$ and
$\Delta c_2:=c_{\neg1,2}-c_1$. For profile $k=2$, differentiating the
objective $\tilde f^{u_2}(y)$ in Eq.~\eqref{eq:objective_function}, and using
$p_2'(y)=p_2(y)[1-p_2(y)]/\theta$ from Eq.~\eqref{eq:pk}, gives
\begin{align}
0
&=p_2'(y)\bigl(c_1-c_{\neg1,2}+\beta y\bigr)+\beta p_2(y)
\nonumber\\
&=\frac{p_2(y)}{\theta}
\left\{-\bigl[1-p_2(y)\bigr]\Delta c_2
+\beta\bigl[\theta+\bigl(1-p_2(y)\bigr)y\bigr]\right\}.
\label{eq:appnum-beta-foc}
\end{align}
We impose $y_2^\star=y_{\rm Cic}$ in Eq.~\eqref{eq:appnum-beta-foc}; solving
for $\beta$ yields
\begin{equation}
\beta=
\frac{[1-p_2(y_{\rm Cic})](c_{\neg1,2}-c_1)}
{\theta+[1-p_2(y_{\rm Cic})]y_{\rm Cic}}.
\label{eq:appnum-beta}
\end{equation}
The values issued from the calibration are the common scale
$\widehat\theta=\pounds3.68$ and the profile-2 parameters entering
$u_2(i,y)$ in Eq.~\eqref{eq:appnum-systematic-utility}. The payment
$y_{\rm Cic}$ is instead constructed in
Eqs.~\eqref{eq:appnum-ciccone-nok}--\eqref{eq:appnum-ciccone-gbp}. Substituting
$\widehat\Theta$, $\mathbf x_{i,2}$, and $y_{\rm Cic}$ into
Eq.~\eqref{eq:appnum-systematic-utility}, and then evaluating
Eq.~\eqref{eq:logit}, gives $p_2(y_{\rm Cic})=4.76\%$. Thus,
$p_2(y_{\rm Cic})$ is a post-calibration model evaluation, not a calibrated
parameter. The context construction in Eqs.~\eqref{eq:appnum-pt-externality}--
\eqref{eq:appnum-nonbike-cost} gives $c_{\neg1,2}=\pounds2.67$ and $c_1=0$.
Equation~\eqref{eq:appnum-beta} therefore gives $\beta=0.56$.

\subsubsection{Optimal incentives and induced mode choices}

After calibration, the profile-specific behavioral parameters
$\{\xi_k^{\mathrm{bike}},\xi_k^{\mathrm{PT}},\iota_k,
\zeta_k^{\mathrm{lic}},\zeta_k^{\mathrm{own}}\}_{k\in\mathcal K}$, the common
parameters $\widehat\nu$ and $\widehat\theta$, and the representative
attributes $\mathbf x_{i,k}$ determine the systematic utilities $u_k(i,0)$ via
Eq.~\eqref{eq:appnum-systematic-utility}. The externalities $c_1$ and
$c_{\neg1,k}$ and the policy weight $\beta$ then determine the optimal
incentive $y_k^\star$ through Prop.~\ref{prop:y_star}. For each profile
$k\in\mathcal K$ and mode $i\in\{1,2,3\}$, define
\begin{equation}
P_{ik}^{0}:=\mathbb P_i^{u_k}(0),
\qquad
P_{ik}^{\star}:=\mathbb P_i^{u_k}(y_k^\star).
\label{eq:appnum-policy-probabilities}
\end{equation}
The probabilities $P_{ik}^{\star}$ represent the choices generated by the
calibrated profile $k$ if the regulator offers $y_k^\star$. Table
\ref{tab:appnum-policy-effects} reports $P_{ik}^{0}$, $P_{ik}^{\star}$, and
the resulting change in bike probability
$\Delta P_{1k}:=P_{1k}^{\star}-P_{1k}^{0}$.

\begin{table*}[t]
\centering
\footnotesize
\setlength{\tabcolsep}{4.0pt}
\caption{Profile-specific optimal incentives and induced mode choices. For $i\in\{1,2,3\}$, denoting bike, PT, and car, $P_{ik}^{0}=\mathbb P_i^{u_k}(0)$ and $P_{ik}^{\star}=\mathbb P_i^{u_k}(y_k^\star)$. The change $\Delta P_{1k}=P_{1k}^{\star}-P_{1k}^{0}$ is reported in percentage points.}
\label{tab:appnum-policy-effects}
\begin{tabular}{@{}crrrrrrrrr@{}}
\toprule
$k$ & $\pi_k$ & $P_{1k}^{0}$ & $P_{2k}^{0}$ & $P_{3k}^{0}$
& $a_k-b_k$ & $c_{\neg1,k}$ & $y_k^\star$ & $P_{1k}^{\star}$
& $\Delta P_{1k}$\\
 & (\%) & (\%) & (\%) & (\%) & (\pounds) & (\pounds/trip)
 & (\pounds/bike trip) & (\%) & (pp)\\
\midrule
0 & 21.26 & 52.44 & 19.12 & 28.44 & 0.36 & 2.88 & 0 & 52.44 & 0\\
1 & 39.36 & 3.96 & 52.08 & 43.96 & -11.73 & 2.22 & 0.15 & 4.11 & 0.15\\
2 & 39.38 & 3.75 & 43.10 & 53.15 & -11.93 & 2.67 & 0.91 & 4.76 & 1.01\\
\bottomrule
\end{tabular}
\end{table*}

\subsubsection{Monte Carlo simulation}

The experiment consists of $N_{\mathrm{MC}}=100$ independent Monte Carlo
replications, each with horizon $T=18$ timeslots. We select $T=18$ to examine
the initial learning phase, in which the regulator has fewer than twenty
choices from which to infer the adopted profile. 

For replication $m\in\{1,\ldots,N_{\mathrm{MC}}\}$ and timeslot
$t\in\{1,\ldots,T\}$, a common random number
$U_{m,t}\sim\mathrm{Unif}(0,1)$ is generated. The honest and manipulative mode
choices are obtained by applying the inverse cumulative distribution of their
respective choice probabilities $\mathbb P_i^{u_k}(y^t)$ in
Eq.~\eqref{eq:logit} to the same $U_{m,t}$. Pairing the same $U_{m,t}$ removes
variation caused solely by using different random numbers when the two
trajectories are compared.

\begin{table}[t]
\centering
\small
\setlength{\tabcolsep}{4pt}
\caption{Settings of the Monte Carlo experiment.}
\label{tab:appnum-simulation-settings}
\begin{tabular}{@{}lll@{}}
\toprule
Quantity & Symbol & Value\\
\midrule
Simulation horizon & $T$ & 18 timeslots\\
Monte Carlo replications & $N_{\mathrm{MC}}$ & 100\\
Confidence level & $p_{\mathrm{CI}}$ & 0.95\\
Common random numbers & $U_{m,t}$ & paired trajectories\\
\bottomrule
\end{tabular}
\end{table}

Under InstPE, declaring profile~$r$ immediately determines the
offered incentive~$y_r^\star$, while choices continue to follow the
true profile-$0$ distribution.
Under RealPE, an agent adopting profile~$r$ chooses among bike,
public transport, and car according to the multinomial-logit
probabilities of profile~$r$.
Recall that, for online profile inference, the regulator uses only the
binary observation
\[
z^s=\mathbf{1}\{i^s=1\},
\]
so that a bike choice is recorded as~$z^s=1$, while both public
transport and car choices are recorded as~$z^s=0$.
Before timeslot~$t$, the regulator evaluates the Bernoulli
log-likelihoods~$\ln L_k^t(\mathbf h^{t-1})$ defined in
Eq.~\eqref{eq:loglikelihood} and selects the inferred
profile~$\hat k^t$ according to Eq.~\eqref{eq:argmax-logL}.
If several profiles attain the same maximum log-likelihood, the
profile with the smallest optimal incentive~$y_k^\star$ is selected;
if a tie remains, the smallest profile index is selected.

Let $\Delta U_{r,m}^{s,\mathrm{PE}}$ denote the realized manipulation advantage
in timeslot $s$ of replication $m$ under
$\mathrm{PE}\in\{\mathrm{InstPE},\mathrm{RealPE}\}$. The estimated
average manipulation advantage through timeslot $t$ is
\begin{equation}
\widehat{\overline{\Delta U}}_{r,t}^{\mathrm{PE}}
=\frac{1}{N_{\mathrm{MC}}}\sum_{m=1}^{N_{\mathrm{MC}}}
\frac{1}{t}\sum_{s=1}^{t}\Delta U_{r,m}^{s,\mathrm{PE}}.
\label{eq:appnum-average-advantage}
\end{equation}
The 95\% confidence interval for
$\widehat{\overline{\Delta U}}_{r,t}^{\mathrm{PE}}$ is computed from the
$N_{\mathrm{MC}}$ replication-specific time averages using a Student-$t$
critical value. Figure~\ref{fig:appnum-manipulation-advantage} shows
$\widehat{\overline{\Delta U}}_{r,t}^{\mathrm{PE}}$ over the complete horizon.

\begin{figure*}[t]
\centering
\includegraphics[width=0.98\textwidth]{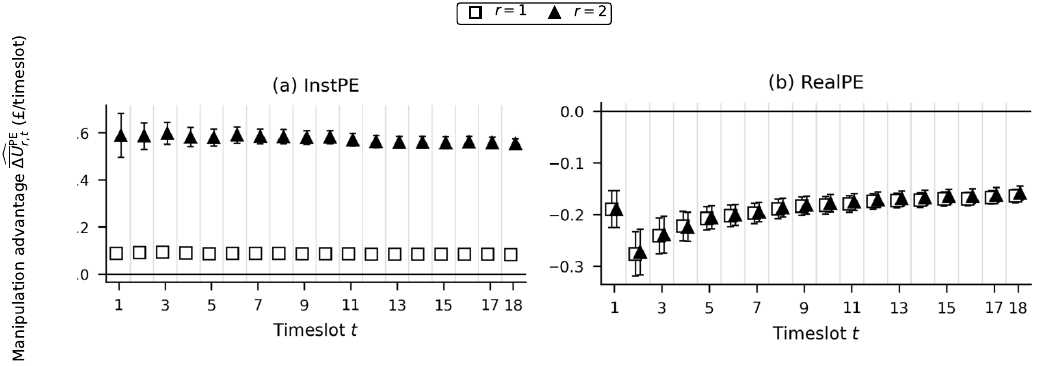}
\caption{Estimated time-averaged manipulation advantage
$\widehat{\overline{\Delta U}}_{r,t}^{\mathrm{PE}}$ from
Eq.~\eqref{eq:appnum-average-advantage}. Vertical intervals are 95\%
Student-$t$ confidence intervals across the Monte Carlo replications.}
\label{fig:appnum-manipulation-advantage}
\end{figure*}

The time-averaged gain from manipulation and cost of manipulation are obtained
by replacing $\Delta U_{r,m}^{s,\mathrm{PE}}$ in
Eq.~\eqref{eq:appnum-average-advantage} with $\textit{GM}_{r,m}^{s,\mathrm{PE}}$
and $\textit{CM}_{r,m}^{s,\mathrm{PE}}$, respectively. Their difference equals
the manipulation advantage by Eq.~\eqref{eq:decomposition}. Figure
\ref{fig:manipulation} and
Table~\ref{tab:appnum-manipulation} report the resulting decomposition at
$T=18$.

\begin{table}[t]
\centering
\footnotesize
\setlength{\tabcolsep}{3.8pt}
\caption{Manipulation results at $T=18$. The estimate
$\widehat{\overline{\Delta U}}_{r,T}^{\mathrm{PE}}$ and its 95\% interval are
in pounds per timeslot; cumulative quantities are in pounds.}
\label{tab:appnum-manipulation}
\begin{tabular}{@{}lcrr@{}}
\toprule
PE & $r$ & $\widehat{\overline{\Delta U}}_{r,T}^{\mathrm{PE}}$ & 95\% interval\\
\midrule
InstPE & 1 & 0.08 & $[0.08,0.09]$\\
InstPE & 2 & 0.55 & $[0.53,0.57]$\\
RealPE & 1 & -0.16 & $[-0.18,-0.15]$\\
RealPE & 2 & -0.16 & $[-0.17,-0.15]$\\
\bottomrule
\end{tabular}

\medskip
\begin{tabular}{@{}lcrrr@{}}
\toprule
PE & $r$ & $\sum_{s=1}^{T}\textit{GM}_r^s$
& $\sum_{s=1}^{T}\textit{CM}_r^s$ & $\sum_{s=1}^{T}\Delta U_r^s$\\
\midrule
InstPE & 1 & 1.47 & 0 & 1.47\\
InstPE & 2 & 10.04 & 0 & 10.04\\
RealPE & 1 & 0.36 & 3.34 & -2.99\\
RealPE & 2 & 0.49 & 3.36 & -2.87\\
\bottomrule
\end{tabular}
\end{table}

For each preference-elicitation model $\mathrm{PE}$ and simulated profile $r$,
define the analytic long-run average manipulation advantage and its finite-
horizon discrepancy as
\begin{align}
\overline{\Delta U}_{r,\infty}^{\mathrm{PE}}
&:=\lim_{t\to\infty}\frac{1}{t}\sum_{s=1}^{t}
\Delta U_r^{s,\mathrm{PE}},
\label{eq:appnum-long-run-advantage}\\
\Gamma_{r,T}^{\mathrm{PE}}
&:=\left|\widehat{\overline{\Delta U}}_{r,T}^{\mathrm{PE}}
-\overline{\Delta U}_{r,\infty}^{\mathrm{PE}}\right|.
\label{eq:appnum-finite-horizon-gap}
\end{align}
For InstPE with $r=1,2$, the values of
$\overline{\Delta U}_{r,\infty}^{\mathrm{InstPE}}$ are 0.08 and 0.56
\pounds/timeslot; for RealPE with $r=1,2$, the values of
$\overline{\Delta U}_{r,\infty}^{\mathrm{RealPE}}$ are $-0.17$ and $-0.13$
\pounds/timeslot. The corresponding discrepancies
$\Gamma_{r,T}^{\mathrm{PE}}$ at $T=18$ are 0.00, 0.00, 0.00, and 0.03
\pounds/timeslot. Thus, the finite-horizon estimates have the same sign as all
four analytic long-run averages; the largest remaining finite-horizon
discrepancy is the RealPE case with $r=2$.

\subsection{Social-Impact Relevance}
\label{sec:deployment-interpretation}

The proposed framework is intended as a pre-deployment stress test
for personalized incentive policies. Before implementing such a
policy, a regulator can calibrate behavioral profiles from observed
choices, compute the incentives associated with these profiles, and
evaluate whether the additional incentive that an agent may obtain
through manipulation compensates for the resulting Cost of
Manipulation. The theoretical bounds provide an analytical
robustness certificate, while the simulation procedure evaluates the
same mechanism under calibrated behavioral heterogeneity and
stochastic choices.

The London case study provides empirical grounding for this
evaluation. The choice model and behavioral profiles are calibrated
from observed travel-mode choices, after which the interaction
between preference elicitation, personalized incentives, and
strategic behavior is simulated under controlled and reproducible
conditions. This design isolates the mechanism studied in the paper:
whether repeatedly imitating another behavioral profile remains
advantageous once the regulator learns from the agent's choices. The
results show that the conclusion obtained under instantaneous
preference elicitation can be reversed when this learning process is
explicitly represented. In the calibrated instance, realistic
preference elicitation makes the Cost of Manipulation exceed the Gain
from Manipulation while the incentive policy continues to increase
the share of sustainable choices.

This use of calibrated simulation is appropriate for policy
screening: it allows candidate policies to be rejected or revised
before exposing individuals or public budgets to an insufficiently
robust intervention. It also makes the assumptions governing the
policy explicit and auditable. In particular, the regulator applies
a deterministic and pre-announced mapping from inferred profiles to
incentives, and the preference-elicitation procedure can operate on
the binary information required to determine whether the
incentivized alternative was selected, rather than requiring the
complete observation of every non-incentivized choice.

The analysis therefore supports a concrete deployment workflow:
calibrate the behavioral model for the population of interest,
evaluate manipulation incentives under realistic preference
elicitation, and deploy the policy only when the resulting
manipulation advantage remains non-positive over the relevant
interaction horizon. When this condition is not satisfied, the same
framework identifies the profiles and time horizons for which the
policy should be redesigned. The contribution is thus not tied to a
single transport intervention: it provides a reproducible procedure
for assessing the strategic robustness of personalized
sustainability policies before implementation.


\end{document}